\newcommand{\R}{\mathbb{R}}
\newcommand{\T}{\mathrm{T}}
\newcommand{\ceil}[1]{\left\lceil #1 \right\rceil}
\def\01{\{0,1\}}
\newcommand{\q}{\tilde{q}}
\newcommand{\er}{\tilde{r}}
\newcommand{\z}{\tilde{z}}
\newcommand{\Z}{\tilde{Z}}
\newcommand{\eT}{\tilde{T}}
\newcommand{\eR}{\tilde{R}}
\newcommand{\br}{\mathrm{br}}
\newcommand{\diag}{\mathrm{diag}}
\newcommand{\cp}{\mathrm{cp}}
\newcommand{\e}{\mathrm{e}}
\newtheorem{theorem}{Theorem}
\newtheorem{question}[theorem]{Question}
\newtheorem{lemma}[theorem]{Lemma}
\newtheorem{corollary}[theorem]{Corollary}
\newtheorem{claim}[theorem]{Claim}
\theoremstyle{definition}
\newtheorem{definition}[theorem]{Definition}
\begin{document}
\title{Strategies for quantum races}
\author{Troy Lee \thanks{Centre for Quantum Software and Information, School of Software, Faculty of Engineering and Information Technology, University of Technology Sydney, Australia (\tt{troyjlee@gmail.com})} \and Maharshi Ray \thanks{Centre for Quantum Technologies, National University of Singapore, Singapore (\tt{maharshi91@gmail.com})} \and Miklos Santha 
\thanks{IRIF, Univ.\  Paris Diderot, CNRS, 75205 Paris, France;  and
Centre for Quantum Technologies and MajuLab, National University of Singapore, Singapore ({\tt miklos.santha@gmail.com})}}
\maketitle

\begin{abstract}
We initiate the study of \emph{quantum races}, games where two or more quantum computers compete to solve a 
computational problem.  While the problem of dueling algorithms has been studied for classical deterministic algorithms 
\cite{IKLMPT11}, the quantum case presents additional sources of uncertainty for the players.  The foremost among these is that players do 
not know if they have solved the problem until they measure their quantum state.  This question of ``when to measure?'' presents a very 
interesting strategic problem.  We develop a game-theoretic model of a multiplayer quantum race, and find an approximate 
Nash equilibrium where all players play the same strategy.  In the two-party case,
we further show that this strategy is nearly optimal in terms of payoff among all symmetric Nash equilibria.  A key role in our 
analysis of quantum races is played by a more tractable version of the game where there is no payout on a tie; 
for such races we completely characterize the Nash equilibria in the two-party case.

One application of our results is to the stability of the Bitcoin protocol when mining is done by quantum computers.  Bitcoin mining is a 
race to solve a computational search problem, with the winner gaining the right to create a new block.  Our results inform the strategies 
that eventual quantum miners should use, and also indicate that the collision probability---the probability that two miners find a new block 
at the same time---would not be too high in the case of quantum miners.  Such collisions are undesirable as they lead to forking of the Bitcoin blockchain.
\end{abstract}

\section{Introduction}
We study the scenario of two or more quantum computers competing to solve a computational task, which we call 
a \emph{quantum race}.  This setting presents a different problem to finding the fastest algorithm for a task, as the 
only goal is to solve the task before the competition.  
For example, imagine a search race where Alice and Bob, each armed with identical quantum computers, compete to find a marked item in a 
database.  The first person to find the marked item wins \$1, with the payout being split in the case of a tie.  The first natural idea is for 
Alice to run Grover's algorithm \cite{Grover:1996}, which can find a marked item in a database of size $N$ with high probability in time $O(\sqrt{N})$. However, if Alice's 
strategy is to run Grover's algorithm and measure after the specified number of steps to maximize her success probability, Bob 
will have an advantage by measuring after running Grover's algorithm for a few less steps.  Although this way Bob
has a slightly lower success probability, he gains a huge advantage in always answering first.  This simple example 
shows that the optimal algorithm to solve a problem can be different from the optimal strategy to employ when the goal is to 
solve the problem before an opponent. 

The scenario of competing algorithms has been studied before in the classical deterministic setting \cite{IKLMPT11}.  In a classical game,
the uncertainty is provided by an unknown probability distribution over the inputs: depending on what the input is, one 
algorithm may perform better than another.  The quantum setting inherently has additional sources of uncertainty, most interestingly that 
players do not know if they have solved the problem until they \emph{measure} their quantum state.  Going back 
to the search game, in the classical version the players know at every instant if they have found the marked item or not.
This is not the case in the quantum setting, where a player can only tell if she has found the desired item by 
measuring her quantum state. Furthermore, if she measures her state and does not find the marked item, then she must begin the 
search again from scratch.  In the quantum case there is a natural tension between waiting to measure, and thereby building up the probability of success 
upon measuring, and measuring sooner, to answer before one's competitors.  We study this game theoretic problem to develop 
strategies for players to use in quantum races.

One of our main motivations for studying quantum races is to model quantum computers mining the decentralized currency Bitcoin \cite{Nak09}.  
Mining is the process by which new blocks of transactions are added to the history of Bitcoin transactions, called the blockchain.  
The winner of a race to solve a computational search problem gains the right to add a new block of transactions to the blockchain, and participants in 
this race are called miners.  Quantum miners could use Grover's algorithm 
to solve the search problem with quadratically fewer search queries than needed classically.  But what should the strategy of quantum miners be when competing against each other?  
\begin{question}
\label{question1}
What is the optimal strategy for quantum miners?
\end{question}
Figuring out the optimal strategy for quantum miners is important to analyze the impact of quantum mining on the stability of the Bitcoin protocol.  
When two miners solve the computational search problem at (nearly) the same time, the blockchain can fork as it is unclear which new block is the 
``correct'' history of Bitcoin transactions.  Forking is bad for the security of Bitcoin as it can decrease the cost of attacks \cite{GKW+16}, increase the gain from deviating from the 
intended mining protocol \cite{ES16}, and generally decreases chain growth and wastes resources.   In the classical case, each search query has the same probability of 
success.  In the quantum case, however, because of Grover's algorithm the success probability grows roughly quadratically with the number of search queries.  
Does this lead to many quantum miners finding blocks at the same time?
\begin{question}
\label{question2}
What is the probability that two or more quantum miners playing the optimal strategy find a block at the same time?
\end{question}
In the next subsections, we describe our model and results in more detail and the impact it has on these questions.

\subsection{The model and results}
In a \emph{symmetric} game all players have the same payoff function.  In his original paper defining a Nash equilibrium, Nash 
showed that every symmetric multiparty finite game has a symmetric equilibrium, i.e.\  one where all players play the same strategy \cite{Nash1951}.  
When all players have identical quantum computers, a quantum race is naturally a symmetric race, and we describe this scenario first.

We model a symmetric multiplayer quantum race in the following way.  The pure strategies available to a player are the possible times 
at which she can measure $1, 2, 3, \ldots, K$.  For each time $t$, a player has an algorithm that she can run for $t$ 
steps and for which the success probability is $p_t$.  Without loss of generality, we assume that these probabilities 
form an increasing sequence $0<p_1 < p_2 < \cdots < p_K \le 1$.  A general strategy is a probability distribution 
over the possible times to measure.  The player who succeeds first receives a payoff of 1.  In the case of a tie, the payoff is 
split amongst all players who succeed first at the same time.  Our model can be thought of as a ``one-shot'' race, as if a 
player measures and does not succeed, she does not get a chance to restart and try again.  While a race where players are allowed 
to repeatedly restart until someone wins would be more realistic, it becomes much more difficult to analyze due to the proliferation of possible stategies, and 
we leave this for future work.  

\paragraph{Two-player case}
We begin explaining our model and results in more detail in the two-player case.  In this case, a game defined by the 
probabilities $p_1, \ldots, p_K$ can be represented by the payoff matrix for Alice, given by the $K$-by-$K$ matrix $A$, and the payoff matrix for Bob $B$.  
The $(s,t)$ entry of $A$ gives Alice's payoff when she runs an algorithm for time $s$ and Bob plays time $t$.  In the case of a quantum race, this is defined 
as
\begin{equation}
\label{eq:intro_payoff}
A(s,t) = 
\begin{cases}
p_s & \mbox{ if } s < t \\
p_s (1-p_s) + \frac{1}{2} p_s^2 & \mbox{ if } s = t \\
p_s (1-p_t) & \mbox{ if } s > t \enspace .
\end{cases}
\end{equation}
As the game is symmetric, Bob's payoff matrix is $B=A^\T$.  

Our analysis of quantum races begins in \cref{sec:too_stingy} by analyzing a more tractable variant of the game we call a \emph{stingy} quantum race.  
In a stingy quantum race, there is no payout in the case of a tie (the game organizer is stingy).  A Nash equilibrium of a 
two-party stingy quantum race has very strong constraints on its support structure (see \cref{cor:support_structure}).  In particular, if $(x,y)$ is a 
Nash equilibrium in a two-party stingy quantum race, then the union of the supports of $x$ and $y$ must be an interval $\{T, T+1, \ldots, K\}$ that 
contains the maximum running time $K$.  There are 3 possible types of Nash equilibria in a two-party stingy quantum race, and we characterize all of them
(see \cref{thm:symmetric}, \cref{thm:alternating}, and \cref{thm:alt_coinc}).  

One particularly nice type of equilibrium is what we call a \emph{coinciding} equilibrium.  In a coinciding equilibrium, the support of all player's strategies is the same, 
but the strategies do not have to be identical.  This is a more general notion than a symmetric equilibrium where all strategies are the same.  In a coinciding equilibrium 
for a stingy quantum race, the support of each player's strategy is an interval $\{T, T+1, \ldots, K\}$.
This leaves the problem of determining the starting point $T$ of this interval in a Nash equilibrium.  We are able to show that there is always exactly 
one $T$ such that there is a Nash equilibrium with support $\{T, T+1, \ldots, K\}$.

\begin{theorem}[Informal, see \cref{thm:symmetric}]
In a two-party stingy quantum race defined by probabilities $p_1, \ldots, p_K$, there is a unique coinciding Nash equilibrium.  In this 
equilibrium all players play the same strategy, and the support of the strategies is an interval $\{T^*, T^*+1, \ldots, K\}$.
\end{theorem}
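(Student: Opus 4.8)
The plan is to collapse the two parameters describing a coinciding equilibrium, namely its common support and its value, into a single monotone scalar equation. By \cref{cor:support_structure}, in any Nash equilibrium $(x,y)$ of a two-party stingy quantum race the union of $\mathrm{supp}(x)$ and $\mathrm{supp}(y)$ is an interval $\{T,\dots,K\}$ containing $K$; for a \emph{coinciding} equilibrium the two supports are equal, so both equal this common interval $S=\{T,\dots,K\}$. The first ingredient is an elementary payoff computation: expanding the stingy payoff matrix, if Bob uses a mixed strategy $y$ and Alice plays the pure strategy $s$, then Alice's expected payoff equals $p_s\bigl(1-F_y(s)\bigr)$, where $F_y(s)=\sum_{t\le s}p_t y_t$. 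Consequently, in a coinciding equilibrium with common support $S$, the best-response conditions for Bob against $x$ say that $p_s\bigl(1-F_x(s)\bigr)$ equals a constant $v$ on $S$ and is at most $v$ off $S$; the first part forces $F_x(s)=1-v/p_s$ on $S$, and since $F_x(T-1)=0$ this determines $x$ on $S$ via $p_s x_s=F_x(s)-F_x(s-1)$. Symmetrically $y$ is determined. So the only freedom left is the scalar $v$ together with the left endpoint $T$.

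To pin these down, for $v\in(0,p_K]$ set $T(v)=\min\{s:p_s\ge v\}$ and define
\[
M(v)=\frac{1}{p_{T(v)}}-\frac{v}{p_{T(v)}^{2}}+v\sum_{s=T(v)+1}^{K}\Bigl(\frac{1}{p_{s-1}}-\frac{1}{p_s}\Bigr)\frac{1}{p_s},
\]
which is precisely the total mass $\sum_s x_s$ of the strategy built from value $v$ as above. I would establish three facts: (i) on each subinterval $v\in(p_{T-1},p_T]$, where $T(v)\equiv T$, the function $M$ is affine with slope $-1/p_T^2+\sum_{s>T}\bigl(1/p_{s-1}-1/p_s\bigr)/p_s$, and this is negative because, with $q_s:=1/p_s$, one has $\sum_{s>T}(q_{s-1}-q_s)q_s<\sum_{s>T}(q_{s-1}^2-q_s^2)=q_T^2-q_K^2<q_T^2$; (ii) $M$ is continuous across each breakpoint $v=p_T$, which reduces to the one-line identity $p_T(1/p_T-1/p_{T+1})/p_{T+1}=1/p_{T+1}-p_T/p_{T+1}^2$; (iii) $M(v)\to 1/p_1>1$ as $v\to 0^+$ (using $p_1<1$, valid once $K\ge 2$) while $M(p_K)=0$. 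Together these show that $M$ is continuous and strictly decreasing on $(0,p_K]$, so there is a unique $v^*\in(0,p_K)$ with $M(v^*)=1$.

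It remains to verify that $v^*$ yields a coinciding equilibrium and that it is the only one. Put $T^*=T(v^*)$ and let $x^*$ be supported on $\{T^*,\dots,K\}$ with $p_s x^*_s=F(s)-F(s-1)$ for $F(s)=1-v^*/p_s$; nonnegativity holds since $F$ is nondecreasing on the support with $F(T^*-1)=0$, and $\sum_s x^*_s=M(v^*)=1$. Then $(x^*,x^*)$ is a Nash equilibrium: on $\{T^*,\dots,K\}$ the payoff is the constant $v^*$ by construction, and for $s<T^*$ we have $F_{x^*}(s)=0$, so deviating to $s$ yields $p_s<v^*$ (since $s<T(v^*)$), which is no improvement. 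Conversely, given any coinciding equilibrium $(x,y)$ with common support $\{T,\dots,K\}$, apply Bob's best-response conditions to $x$: the common value $v$ of the pure strategies in $\{T,\dots,K\}$ satisfies $p_{T-1}\le v$ (the off-support bound at $T-1$, using $F_x(T-1)=0$) and $v<p_T$ (since $x_T>0$ forces $p_T x_T=F_x(T)=1-v/p_T>0$), whence $T(v)=T$; then $x$ is exactly the strategy built from $v$, and its normalization $M(v)=1$ forces $v=v^*$ and $x=x^*$, and the symmetric argument gives $y=x^*$. This yields both uniqueness and the statement that the two players use the same strategy, with support $\{T^*,\dots,K\}$. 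The main obstacle is item (i)---the strict monotonicity of $M$---with the telescoping estimate $\sum_{s>T}(q_{s-1}-q_s)q_s<q_T^2$ as its crux; the remaining loose ends are the boundary situations where $v^*$ happens to equal some $p_s$ (then $x^*_{T^*}=0$, one relabels the interval, and the indifference at $T^*$ persists because its payoff is still $p_{T^*}=v^*$) and the trivial case $K=1$.
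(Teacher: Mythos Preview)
Your argument is correct and takes a genuinely different route from the paper. The paper fixes a candidate support $\{T,\dots,K\}$, solves the linear system of indifference equations by row reduction (\cref{lem:unicity}) to obtain the explicit quantities $q_t,r_T,z_T$, and then shows algebraically that the positivity constraint $r_T>0$ holds iff $T\ge T^*$ while the off-support inequality at $T-1$ is equivalent to $r_{T-1}\le 0$, i.e.\ $T\le T^*$; the two together pin down $T=T^*$. You instead parametrize by the equilibrium \emph{value} $v$, read off the strategy from $F_x(s)=1-v/p_s$, and reduce everything to the single scalar equation $M(v)=1$; uniqueness then comes from the continuity and strict monotonicity of $M$, with your telescoping estimate $\sum_{s>T}(q_{s-1}-q_s)q_s<q_T^2-q_K^2$ as the clean core step. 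The paper's route has the advantage that the closed-form quantities $q_t,r_T,z_T$ are exactly what get reused in the payoff and collision-probability computations (\cref{thm:tstar_upper}, \cref{thm:col}, \cref{cor:payoff}) and in the multiplayer reduction (\cref{lem:reduction}); your route is more conceptual, bypasses the somewhat delicate algebraic equivalence between the off-support inequality and $r_{T-1}\le 0$, and makes uniqueness transparent via the intermediate value theorem. Your handling of the boundary case $v^*=p_s$ is adequate: by the continuity of $M$ you established, the mass computed from either adjacent interval agrees, the corresponding $x^*_{T(v^*)}$ vanishes, and the support simply shifts by one while the indifference at the dropped index persists, so the equilibrium and its uniqueness are unaffected.
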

We also explicitly find this Nash equilibrium.

This result begs the question: what is this value of $T^*$?  At what success probability does it become worthwhile to start measuring?  By putting an 
additional restriction on the probabilities $p_1, \ldots, p_K$, we can give quite a precise answer to this question.  
We say $p_1, \ldots, p_K$ is an $\ell$-\emph{dense} sequence (see \cref{def:dense}) if $p_1 \le \frac{\ell}{K}, p_K \ge 1-\frac{\ell}{K}$, and 
$p_{i+1} - p_i \le \frac{\ell}{K}$ for $i=2, \ldots, K-1$.  This is quite a natural restriction that is satisfied 
for many races.  In the quantum search race,
where the $p_i$ are the Grover success probabilities, and therefore also in the application to Bitcoin, the $\ell$-density condition
is satisfied with $\ell = \pi/2$.  In the $\ell$-dense case, we can give the following bound on $T^*$. 
\begin{theorem}[Informal, see \cref{cor:ptstar} and \cref{cor:payoff}]
\label{thm:intro1}
Let $p_1, \ldots, p_K$ be an $\ell$-dense sequence with $K \ge 6\ell$.  Then the starting point $T^*$ of the unique coinciding Nash equilibrium in the 
stingy quantum race defined by these probabilities is such that $p_{T^*}= \sqrt{2}-1 + \Theta\left(\frac{\ell}{K} \right)$.
\end{theorem}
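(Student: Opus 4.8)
The plan is to combine the explicit description of the equilibrium from \cref{thm:symmetric} with one estimate, valid under $\ell$-density, thereby reducing the problem to solving a quadratic.

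First I would reduce the statement to two scalar inequalities in $p_{T^*}$. In the stingy race $u(s,x):=\sum_t A(s,t)x_t=p_s\bigl(1-\sum_{t\le s}p_tx_t\bigr)$, so in a symmetric equilibrium with support $\{T,\dots,K\}$ and common value $v$ the indifference conditions telescope to the weights $x_T=(p_T-v)/p_T^2$ and $x_{s+1}=v(p_{s+1}-p_s)/(p_sp_{s+1}^2)$, and the remaining requirements are simply $v\le p_T$ (so that $x_T\ge0$), $p_{T-1}\le v$ (no profitable deviation below $T$), and normalization. Normalization determines $v=v(T)$: writing $\sigma(T):=\sum_{s=T}^{K-1}\frac{p_{s+1}-p_s}{p_sp_{s+1}^2}$ one gets $v(T)=\frac{1/p_T-1}{1/p_T^2-\sigma(T)}$, and then $v(T)\le p_T\iff p_T\sigma(T)\le1$ while $v(T)\ge p_{T-1}\iff(p_T-\gamma)\sigma(T)\ge1-\gamma/p_T^2$, where $\gamma:=p_T-p_{T-1}$. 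By \cref{thm:symmetric} exactly one $T=T^*$ satisfies both, so it suffices to show these two inequalities force $p_{T^*}=\sqrt2-1+O(\ell/K)$.

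The key estimate is $\sigma(T^*)=\frac1{2p_{T^*}^2}-\frac12+O(\ell/K)$. Indeed $\sigma(T^*)$ is a Riemann sum for $\int_{p_{T^*}}^{p_K}p^{-3}\,dp=\tfrac1{2p_{T^*}^2}-\tfrac1{2p_K^2}$, and since consecutive $p_i$ differ by at most $\ell/K$, a term-by-term comparison shows $\sigma(T^*)$ and the integral agree up to a relative factor $1+O(\ell/(Kp_{T^*}))$; combining with $p_K\ge1-\ell/K$ and the a priori bound $p_{T^*}=\Theta(1)$ gives the claim. Feeding this into the two inequalities above, multiplying through by $2p_{T^*}$, and absorbing $\gamma=O(\ell/K)$, the first becomes $p_{T^*}^2+2p_{T^*}-1\ge-O(\ell/K)$ and the second $p_{T^*}^2+2p_{T^*}-1\le O(\ell/K)$. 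Since $g(p)=p^2+2p-1$ has the unique root $\sqrt2-1$ in $(0,1)$ with $g'(\sqrt2-1)=2\sqrt2\ne0$, this forces $p_{T^*}=\sqrt2-1+O(\ell/K)$; as neither the Riemann-sum correction nor $\gamma$ cancels in general, the bound is sharp, i.e.\ $\Theta(\ell/K)$, which is \cref{cor:ptstar}. Then \cref{cor:payoff} is immediate, since the equilibrium value is $v(T^*)\in[p_{T^*-1},p_{T^*}]=p_{T^*}+O(\ell/K)=\sqrt2-1+\Theta(\ell/K)$.

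\textbf{The main obstacle} is the a priori input used above: that $p_{T^*}$ is bounded away from $0$ and $1$, and that $\gamma=O(\ell/K)$, which requires $\gamma$ to be a density-controlled gap, i.e.\ $T^*\ge3$. Boundedness away from $0$ is the delicate direction, because the relative error $O(\ell/(Kp_{T^*}))$ in the Riemann-sum estimate is only $o(1)$ once $p_{T^*}$ is known to be bounded below; concretely one must rule out $T^*\in\{1,2\}$. I expect this to come from the structural results on equilibrium supports (\cref{cor:support_structure}, \cref{thm:symmetric}) together with $K\ge6\ell$: since $p_1\le\ell/K$ and $p_2\le2\ell/K$, the sums $\sigma(1),\sigma(2)$ are so large (they blow up like $1/p^2$) that $v(1)>p_1$ and $v(2)>p_2$, which would force $x_1<0$ and $x_2<0$ respectively, so $T^*\ge3$; and $T^*=K$ is excluded because, with no payout on a tie, measuring at the last step is beaten by measuring one step earlier. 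Once $T^*\ge3$ the gap $\gamma$ is density-controlled, and the crude two-sided bound on $p_{T^*}$ then follows by combining the clean inequality $\sigma(T^*)\le\int_{p_{T^*}}^{p_K}p^{-3}\,dp$ with the admissibility conditions in the normalization identity. Carrying the constants through these exclusions is exactly where the hypothesis $K\ge6\ell$ is spent, and it is the part of the argument requiring the most care.
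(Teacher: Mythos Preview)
Your proposal is correct, and it takes a genuinely different route from the paper.

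The paper does not use a Riemann-sum comparison at all. Instead it writes down the exact probability-accounting identity
\[
1 \;=\; \frac{2}{z_{T^*}} \;+\; \frac{\sigma(T^*)}{z_{T^*}^2} \;+\; \frac{1}{p_K^2 z_{T^*}^2}
\]
(the four disjoint outcomes: Alice wins, Bob wins, both succeed simultaneously, neither succeeds), where $\sigma(T^*)$ is the \emph{unnormalized collision probability} $\sum p_i^2 q_i^2$, not your Riemann sum. Solving this quadratic in $z_{T^*}$ gives $z_{T^*}=1+\sqrt{1+1/p_K^2+\sigma(T^*)}$ exactly, hence $1/z_{T^*}\le\sqrt2-1$ and $p_{T^*-1}\le\sqrt2-1$ with no error term. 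The bootstrap is done separately: from $r_{T^*}>0$ and a telescoping over the indices with $p_i\le 1/2$ one gets the crude bound $p_{T^*}>5/21$, which is then fed back to show $\sigma(T^*)=O(\ell/K)$ and hence $z_{T^*}\le\sqrt2+1+O(\ell/K)$.

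What each approach buys: the paper's identity is exact, so all the analytic work is concentrated in bounding a single small quantity (the collision probability), and the $\sqrt2-1$ emerges from $1+\sqrt2$ rather than from your quadratic $p^2+2p-1$. Your approach is more elementary and more transparent about \emph{why} the answer is $\sqrt2-1$: it is literally the root of the quadratic coming from $\int p^{-3}\,dp$. Your bootstrap via the clean inequality $\sigma(T^*)\le\int_{p_{T^*}}^{p_K}p^{-3}\,dp$ is in fact simpler than the paper's telescoping argument, and note that $T^*\ge 2$ (not $3$) already suffices since the formal $\ell$-density definition controls $p_{i+1}-p_i$ for all $i\ge1$. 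One small caveat: both the paper's formal results and your argument really establish $p_{T^*}=\sqrt2-1+O(\ell/K)$; the $\Theta$ in the informal statement should be read as a two-sided $O$-bound, not a claim that the deviation is always of exact order $\ell/K$.
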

Thus it is worthwhile to start measuring once the success probability becomes around $\sqrt{2}-1$, and this is largely independent of the actual values
of $p_1, \ldots, p_K$.  

In \cref{sec:two_races}, we apply our analysis of two-player stingy quantum races to the case of general quantum races.   
As the only difference between a stingy quantum race and a quantum race is the payout on ties, intuitively strategies in these two kinds of 
races should have similar payoffs when the probability of ties is small.  We follow this intuition and show that when 
$p_1, \ldots, p_K$ form an $\ell$-dense sequence the probability of a tie is $O(\frac{\ell}{K})$ (see \cref{thm:col}) when players use the unique 
coinciding equilibrium from the stingy race, and this strategy is an $O(\frac{\ell}{K})$-approximate Nash equilibrium of the corresponding quantum race.

Approximate Nash equilibria are naturally an imperfect lens into true Nash equilibria.  The approximate Nash equilibrium we give would not 
be a reasonable suggestion for the actual strategies of quantum players if there were other equilibria with much higher payoff, for example.  We 
show that this is not the case, and the approximate Nash equilibrium we give is nearly optimal in terms of payoff amongst all 
symmetric equilibria. 
\begin{theorem}[Informal, see \cref{thm:two_approx} and \cref{thm:payoff_upper}]
\label{thm:intro2}
Let  $p_1, \ldots, p_k$ be an $\ell$-dense sequence with $K \ge 6\ell$.  Then the unique coinciding equilibrium of the 
two-player stingy quantum race defined by these probabilities is an $O(\frac{\ell}{K})$-approximate Nash equilibrium in the corresponding quantum race.  
Moreover, the payoff achieved by this strategy is within $O(\sqrt{\frac{\ell}{K}})$ of the largest payoff achievable by any symmetric Nash equilibrium.  
\end{theorem}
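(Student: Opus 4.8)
\emph{Overview of the plan.} The statement bundles two claims, and I would prove them in sequence. The structural fact to start from is that the quantum-race payoff matrix $A$ and the stingy-race payoff matrix $S$ (from the variant with no tie payout) differ only on the diagonal: by \eqref{eq:intro_payoff}, $A = S + \frac12\,\diag(p_1^2,\dots,p_K^2)$. Let $x^*$ be the symmetric strategy of the unique coinciding equilibrium of the stingy race guaranteed by \cref{thm:symmetric}, with stingy value $v=(x^*)^\T S x^*$, and write $\pi^* = (x^*)^\T A x^*$ for its payoff in the quantum race. For a mixed strategy $x$ set $\bar p_x := \sum_s p_s x_s$. A single identity carries the second half: since the payoff of $1$ is awarded in full exactly when at least one of the two players succeeds, the two players' payoffs sum to $x^\T A y + y^\T A x = 1-(1-\bar p_x)(1-\bar p_y)$ --- equivalently $A(s,t)+A(t,s)=1-(1-p_s)(1-p_t)$ for all $s,t$, which one checks directly from \eqref{eq:intro_payoff}. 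Specialising to $y=x$, every symmetric profile $(x,x)$ has payoff $\pi=\bar p_x-\frac12\bar p_x^2$, an increasing function of $\bar p_x$ on $[0,1]$.

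\emph{Part 1: approximate equilibrium (\cref{thm:two_approx}).} To bound the gain from any unilateral deviation $x$ against $x^*$, I would write, using $A=S+\frac12\diag(p_s^2)$,
\[
x^\T A x^* - (x^*)^\T A x^* = \underbrace{\bigl(x^\T S x^* - (x^*)^\T S x^*\bigr)}_{\le\,0} + \frac12\sum_s x^*_s p_s^2\,(x_s-x^*_s),
\]
where the first term is $\le 0$ because $(x^*,x^*)$ is a Nash equilibrium of the stingy race. Discarding the nonpositive pieces, the gain is at most $\frac12\sum_s x_s\,(x^*_s p_s^2)\le \frac12\max_s x^*_s$, so everything reduces to the entrywise bound $\max_s x^*_s = O(\ell/K)$. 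For this I would use the explicit form of $x^*$ from \cref{thm:symmetric}: its support is an interval $\{T^*,\dots,K\}$, and matching the equilibrium equations at $s-1$ and $s$ gives $x^*_s = v(p_s-p_{s-1})/(p_{s-1}p_s^2)$ for $s>T^*$, which is $O(\ell/K)$ since $v\le 1$, $p_s-p_{s-1}\le\ell/K$ by $\ell$-density, and $p_s\ge p_{T^*}=\Omega(1)$ by \cref{cor:ptstar} (here $K\ge 6\ell$ enters). For the left endpoint the equilibrium equation at $T^*$ reads $p_{T^*}(1-p_{T^*}x^*_{T^*})=v$, so $x^*_{T^*}=(p_{T^*}-v)/p_{T^*}^2$; since $T^*-1$ cannot be a profitable deviation one has $p_{T^*-1}\le v\le p_{T^*}$, whence $x^*_{T^*}\le(\ell/K)/p_{T^*}^2=O(\ell/K)$ as well. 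This makes $(x^*,x^*)$ an $O(\ell/K)$-approximate equilibrium of the quantum race; as a byproduct $\pi^* = v + \frac12\sum_s(x^*_s)^2 p_s^2 = v + O(\ell/K)$.

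\emph{Part 2: near-optimal payoff (\cref{thm:payoff_upper}).} Let $z$ be an arbitrary symmetric Nash equilibrium of the quantum race (one exists by Nash's theorem), with payoff $\pi_z = \bar p_z-\frac12\bar p_z^2$. I would extract two inequalities: since $z$ is an exact equilibrium, deviating to $x^*$ is not profitable, so $(x^*)^\T A z\le\pi_z$; and since $(x^*,x^*)$ is an $O(\ell/K)$-approximate equilibrium by Part 1, deviating from it to $z$ gains at most $O(\ell/K)$, so $z^\T A x^*\le\pi^*+O(\ell/K)$. Adding these and using $(x^*)^\T A z + z^\T A x^* = 1-(1-\bar p_{x^*})(1-\bar p_z)=\bar p_{x^*}+\bar p_z-\bar p_{x^*}\bar p_z$, then substituting $\pi_z=\bar p_z-\frac12\bar p_z^2$ and $\pi^*=\bar p_{x^*}-\frac12\bar p_{x^*}^2$, the terms $\bar p_{x^*}+\bar p_z$ cancel and the inequality collapses to $\frac12(\bar p_{x^*}-\bar p_z)^2\le O(\ell/K)$, i.e.\ $|\bar p_{x^*}-\bar p_z|=O(\sqrt{\ell/K})$. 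Since $\pi_z-\pi^* = (\bar p_z-\bar p_{x^*})\bigl(1-\tfrac12(\bar p_z+\bar p_{x^*})\bigr)$ and $\bar p_z,\bar p_{x^*}\in[0,1]$, this yields $|\pi_z-\pi^*|=O(\sqrt{\ell/K})$. As this holds for every symmetric equilibrium $z$, the payoff $\pi^*$ of $x^*$ is within $O(\sqrt{\ell/K})$ of the best payoff attained by any symmetric Nash equilibrium.

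\emph{Where the difficulty lies.} Once the ``total payoff $=$ probability that someone succeeds'' identity is in hand, Part 2 is a few lines of algebra, and the square-root loss is genuinely forced by the quadratic slack $\frac12(\bar p_{x^*}-\bar p_z)^2$ rather than being an artifact of the argument. The real work is the entrywise bound $\max_s x^*_s = O(\ell/K)$ in Part 1; the delicate case is the leftmost atom $x^*_{T^*}$, which is small only because $T^*$ sits essentially at the success probability where $v\approx p_{T^*}$ --- the very estimate underlying \cref{cor:ptstar}. I therefore expect Part 1 to lean on the explicit description and density analysis established earlier, while Part 2 is essentially self-contained given Part 1.
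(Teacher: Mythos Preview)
Your proposal is correct. Part~1 follows essentially the paper's route for \cref{thm:two_approx}: both use $A=S+\tfrac12\diag(p_s^2)$ and reduce to the entrywise bound $\max_s x^*_s p_s^2=O(\ell/K)$; the paper packages this as well-supportedness while you go straight to the approximate-Nash inequality, but the computations coincide. (One minor citation slip: the lower bound $p_{T^*}=\Omega(1)$ you invoke is \cref{p_T_lb_2} or \cref{cor:payoff}; \cref{cor:ptstar} gives only the upper bound $p_{T^*-1}\le\sqrt2-1$.)

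Part~2 is genuinely different and considerably more elementary than the paper. The paper also records the identity $A+A^\T=p\mathbf{1}^\T+\mathbf{1}p^\T-pp^\T$, but uses it only to see that the Mangasarian--Stone program is concave, then proves \cref{thm:payoff_upper} by constructing an explicit feasible solution to Dorn's dual \eqref{eq:dual}; this occupies \cref{clm:end_of_proof} and four subsidiary claims. Your approach---rewriting $\pi_x=\bar p_x-\tfrac12\bar p_x^2$, adding the exact-equilibrium inequality $(x^*)^\T A z\le\pi_z$ to the approximate-equilibrium inequality $z^\T A x^*\le\pi^*+O(\ell/K)$ from Part~1, and collapsing to $\tfrac12(\bar p_{x^*}-\bar p_z)^2\le O(\ell/K)$---is a few lines and makes the $\sqrt{\ell/K}$ loss visibly forced by the quadratic slack. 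What the paper's approach buys is an explicit constant and an absolute upper bound $\sqrt2-1+5\sqrt{\ell/K}$ on the payoff of \emph{any} symmetric equilibrium, stated without reference to $\pi^*$; your argument is relative to $\pi^*$ but sidesteps quadratic-programming duality entirely and, as a bonus, simultaneously gives a lower bound $\pi_z\ge\pi^*-O(\sqrt{\ell/K})$ for every symmetric equilibrium.
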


To show that the approximate Nash equilibrium we give is nearly optimal in terms of payoff (\cref{thm:payoff_upper}), we use the bilinear programming formulation 
of Nash equilbria due to Mangasarian and Stone \cite{MS1964}.  Luckily, in our case the sum of Alice's and Bob's payoff matrices $A + A^\T$ (from \cref{eq:intro_payoff}) turns out to 
be negative semidefinite.  When optimizing over symmetric strategies this makes the Mangasarian and Stone bilinear program (which is a maximization problem) into a convex quadratic program.  
We then use Dorn's \cite{Dorn1960} equivalent dual formulation of a convex quadratic program (see \cref{eq:dual}), which is a minimization problem.  
We explicitly construct a feasible solution to this dual minimization problem to upper 
bound the payoff of any symmetric Nash equilibrium.  Our construction of this dual solution again makes use of our analysis of stingy quantum races.

\paragraph{Multiplayer case}
The case of many players is what we are interested in for the application to Bitcoin.  Luckily, we are able to recover analogs of many of the results from the two-player case in the 
multiplayer case as well.  We start in \cref{sec:multi} by analyzing $n$-player stingy quantum races, and show the following.
\begin{theorem}[Informal, see \cref{thm:msymmetric}]
\label{thm:m_intro}
Let $p_1, \ldots, p_K$ define an $n$-player stingy quantum race.  This race has a unique coinciding Nash equilibrium, and in this
equilibrium all players play the same strategy.  The support of each strategy is an interval $\{T^*, T^*+1, \ldots, K\}$.   
\end{theorem}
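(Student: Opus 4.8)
The plan is to follow the two-player argument behind \cref{thm:symmetric}, carrying along the products $\prod_{j\ne i}(\cdot)$ that now replace a single opponent's contribution; the payoff here is that symmetry need not be assumed — once the support is known to be an interval, the normalization constraints alone force every player onto the same strategy. For a coinciding profile $(y^{(1)},\dots,y^{(n)})$ with common support $S$, write $q^{(j)}_s=\sum_{t\le s}p_t\,y^{(j)}_t$. A short calculation (player $i$ wins the stingy race at time $s$ iff she succeeds, every opponent measuring strictly before $s$ fails, and every opponent measuring exactly at $s$ fails, and these events are independent across opponents) gives
\[
u_i(s)=p_s\prod_{j\ne i}\bigl(1-q^{(j)}_s\bigr),
\]
where each $q^{(j)}$ is non-decreasing in $s$ and strictly increasing on $S$. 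From this the support structure follows exactly as for \cref{cor:support_structure}: beyond $\max S$ every $q^{(j)}_s$ is constant, so $u_i(s)$ is proportional to the strictly increasing sequence $p_s$, forcing $\max S=K$ (otherwise a player deviates to $K$); across any gap of $S$ the same monotonicity makes the index just below the top of the gap a profitable deviation. Hence $S=\{T,T+1,\dots,K\}$ for some $T$.

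Next I would extract the equilibrium equations. On $S$ we have $p_s\prod_{j\ne i}(1-q^{(j)}_s)=c_i$ for each $i$, with $c_i$ player $i$'s equilibrium payoff. Setting $P(s)=\prod_{j=1}^n(1-q^{(j)}_s)$ and dividing gives $1-q^{(i)}_s=p_sP(s)/c_i$; multiplying these $n$ identities yields $p_s^nP(s)^{n-1}=\prod_j c_j$, so $P(s)=\bigl(\prod_j c_j\bigr)^{1/(n-1)}p_s^{-n/(n-1)}$ and therefore
\[
q^{(i)}_s=1-\lambda_i\,p_s^{-1/(n-1)},\qquad \lambda_i=\frac{1}{c_i}\Bigl(\textstyle\prod_{j}c_j\Bigr)^{1/(n-1)},
\]
for $s\in S$ (one checks $1-q^{(j)}_s\neq 0$ here, which holds whenever $T<K$; the degenerate profile $S=\{K\}$ is treated directly). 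Imposing $\sum_{s=T}^{K}y^{(i)}_s=1$, and using $y^{(i)}_T=q^{(i)}_T/p_T$, $y^{(i)}_s=(q^{(i)}_s-q^{(i)}_{s-1})/p_s$ for $s>T$, Abel summation with $q^{(i)}_{T-1}=0$ collapses the normalization to $1/p_T-\lambda_i D_T=1$, where
\[
D_T=\frac{p_K^{-1/(n-1)}}{p_K}+\sum_{s=T}^{K-1}p_s^{-1/(n-1)}\Bigl(\frac{1}{p_s}-\frac{1}{p_{s+1}}\Bigr)>0
\]
depends only on $T$ and the $p_i$'s. Thus $\lambda_i=(1-p_T)/(p_TD_T)$ is the same for every player, so all $q^{(i)}$ agree, so $y^{(1)}=\dots=y^{(n)}$ and all $c_i$ coincide: every coinciding equilibrium is in fact symmetric.

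It remains to locate $T^*$ and verify the candidate is a genuine equilibrium. Writing the common payoff as $c=\lambda^{\,n-1}$, the interval must begin at the unique $T$ with $p_{T-1}\le c<p_T$ (the lower bound is the off-support condition $u_i(s)=p_s\le c$ for $s<T$, using $q^{(j)}_s=0$ there; the upper bound is $q^{(i)}_T\ge q^{(i)}_{T-1}=0$ with $T\in S$). So everything reduces to the scalar equation $\Phi(c)=1$, where $T(c)$ is the index with $p_{T-1}\le c<p_T$ ($p_0:=0$), $q_s(c)=1-(c/p_s)^{1/(n-1)}$, and $\Phi(c)=\sum_{s=T(c)}^{K}y_s(c)=q_K(c)/p_K+\sum_{s=T(c)}^{K-1}q_s(c)(1/p_s-1/p_{s+1})$. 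This last form exhibits $\Phi$ as a sum of terms each strictly decreasing in $c$; it is continuous across each threshold $c=p_j$ (the dropped term vanishes there), and $\Phi(c)\to 1/p_1>1$ as $c\to 0^+$ while $\Phi(c)\to 0$ as $c\to p_K^-$. Hence there is a unique $c^*\in(0,p_K)$, giving a unique $T^*$, and the resulting $y$ is non-negative, normalized, with $u_i\equiv c^*$ on $\{T^*,\dots,K\}$ and $u_i\le c^*$ off it — a genuine symmetric Nash equilibrium and the only coinciding one.

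I expect the main obstacle to be this final monotonicity-and-continuity analysis of $\Phi$: keeping the discrete index $T(c)$ in step with the continuous variable $c$ as the support interval changes length, controlling the $(n-1)$-st roots, and disposing of the edge cases (notably $p_K=1$ and $S=\{K\}$) that do not arise for $n=2$. The algebra passing from the equilibrium equations to $q^{(i)}_s=1-\lambda_i p_s^{-1/(n-1)}$, and the Abel-summation step showing $\lambda_i$ is independent of $i$, are mechanical once set up, but they are the conceptual core, since it is exactly that independence which yields symmetry.
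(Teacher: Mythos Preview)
Your proposal is correct and takes a genuinely different route from the paper. The paper proves this by constructing a \emph{reduced game}: a two-player asymmetric stingy race in which Alice's success probabilities are $p_j^{1/(n-1)}$ and Bob's are $p_j$. \cref{lem:reduction} shows that the $n$-player coinciding-equilibrium system (\cref{eq:mne1}--\cref{eq:mne4}) is equivalent, player by player, to the two-player system (\cref{eq:ne1}--\cref{eq:ne4}) for this reduced game; the result then follows by quoting the two-player uniqueness lemma (\cref{thm:unique1-5}). The product trick you use --- multiplying the $n$ payoff identities to isolate $P(s)$ --- is in fact the same algebraic move that drives the ``only if'' direction of \cref{lem:reduction}, but the paper packages it as a reduction rather than carrying it through directly.

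What you gain is a self-contained argument that makes the forcing of symmetry completely transparent: your normalization computation $1/p_T-\lambda_i D_T=1$ shows in one line that $\lambda_i$ is independent of $i$, whereas the paper obtains symmetry only indirectly, by noting that the reduced two-player linear system has a unique solution. Your continuity/monotonicity analysis of $\Phi(c)$ is also a cleaner way to locate $T^*$ than the paper's step-by-step verification that \cref{eq:ne2} and \cref{eq:ne3} single out exactly one $T$. What the paper's route buys is modularity: once the reduced-game lemma is in place, any two-player structural fact (e.g.\ about $T^*_A,T^*_B$) transfers immediately, and the same machinery is reused later to bound $P_{T^*}$ and the collision probability. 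Your identification of the edge cases ($p_K=1$, $S=\{K\}$, and the threshold behaviour of $T(c)$) is accurate; none of them causes real trouble.
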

To show that an $n$-player stingy quantum race has a unique coinciding Nash equilibrium, our proof proceeds through 
a 2-player \emph{asymmetric} stingy quantum race.  In a 2-player 
asymmetric race, Alice has probabilities $p_1, \ldots, p_K$ of succeeding after $t$ steps and Bob has a (potentially different) 
sequence of probabilities $P_1, \ldots, P_K$.  An asymmetric race models the case where Alice and Bob have quantum 
computers of potentially different speeds.  We relate the payoff for Alice in a $n$-player stingy quantum race
to the payoff for Alice in a 2-player quantum race against a more powerful opponent (see \cref{lem:reduction}).  We can then 
refer to \cref{thm:unique1-5} in \cref{sec:too_stingy} which completely characterizes
coinciding equilibria in asymmetric 2-player stingy quantum races.  This gives \cref{thm:m_intro}.   

When the sequence $p_1, \ldots, p_K$ is $\ell$-dense, we can also say something about the starting point $T^*$ of the $n$-player coinciding 
equilibrium, though not as precisely as in the two-party case.
\begin{theorem}[Informal, see \cref{lem:bound_PTstar}]
\label{lem:bound_PTstar}
Let $p_1, \ldots, p_K$ be an $\ell$-dense sequence defining a stingy $n$-player quantum race with $n \ge 2$.  If $K \ge 4 \e \ell n$ then the starting point $T^*$ of 
the unique coinciding equilibrium is such that $p_{T^*} = \Theta(\frac{1}{n})$.
\end{theorem}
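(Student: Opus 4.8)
The plan mirrors the two‑player argument behind \cref{thm:intro1}: make the coinciding equilibrium explicit, reduce the equation that determines $T^*$ to a single scalar equation in $\alpha:=p_{T^*}$, and pin $\alpha$ down to within a constant factor of $1/n$.

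\textbf{Step 1: the explicit equilibrium.} By \cref{thm:msymmetric} all $n$ players use a common strategy $x^*$ supported on $\{T^*,\dots,K\}$, so a player deviating to the pure time $s$ faces $n-1$ independent copies of $x^*$ and earns $p_s\bigl(1-F(s)\bigr)^{n-1}$ with $F(s)=\sum_{t\le s}x^*_t p_t$; write $G(s)=1-F(s)$. Indifference of the supported pure times forces $p_sG(s)^{n-1}=p_{s-1}G(s-1)^{n-1}$ for $T^*<s\le K$, hence $G(s)=G(T^*)(p_{T^*}/p_s)^{1/(n-1)}$; from $F(T^*-1)=0$ (so $G(T^*-1)=1$) and the requirement that measuring at $T^*-1$ not be a profitable deviation one gets $(p_{T^*-1}/p_{T^*})^{1/(n-1)}\le G(T^*)\le 1$, while $x^*_sp_s=G(s-1)-G(s)$ recovers $x^*$. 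Since $p_{T^*}-p_{T^*-1}\le\ell/K$ and $(1-y)^{1/(n-1)}\ge 1-\tfrac{2y}{n-1}$ for $y\le\tfrac12$, this yields $1-G(T^*)=O\!\bigl(\tfrac{\ell/K}{(n-1)p_{T^*}}\bigr)$ (provided $p_{T^*}\ge 2\ell/K$; if instead $p_{T^*}<2\ell/K$ then already $p_{T^*}=O(1/n)$). The same data is also available from \cref{thm:unique1-5} via the reduction \cref{lem:reduction}. Finally, $T^*$ is precisely the index making $\sum_{s\ge T^*}x^*_s=1$.

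\textbf{Step 2: a scalar equation, and the upper bound.} Put $r_s:=p_s^{-1/(n-1)}$, so $1/p_s=r_s^{n-1}$ and $x^*_s=G(T^*)\alpha^{1/(n-1)}(r_{s-1}-r_s)r_s^{n-1}$ for $s>T^*$; then $1-x^*_{T^*}=G(T^*)\alpha^{1/(n-1)}\sum_{s=T^*+1}^K(r_{s-1}-r_s)r_s^{n-1}$, whose right side is a left Riemann sum over the partition $r_K<\dots<r_{T^*}$ of $\int_{r_K}^{r_{T^*}}r^{n-1}\,dr=\tfrac1n(r_{T^*}^{\,n}-r_K^{\,n})$. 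Since $p_K=1-O(\ell/K)$, this says $\alpha$ satisfies $\tfrac1n\bigl(\alpha^{-1}-\alpha^{1/(n-1)}\bigr)=1$ up to error terms, i.e.\ $\alpha^{-1}=n+\alpha^{1/(n-1)}$; as $0<\alpha^{1/(n-1)}\le 1$ this forecasts $\alpha\approx 1/n$ (and $\alpha=\sqrt2-1$ when $n=2$, matching \cref{thm:intro1}). For a rigorous upper bound I would skip the integral approximation: since $r\mapsto r^{n-1}$ is increasing, the left Riemann sum is at most the integral, so $1\le\tfrac{1-G(T^*)}{\alpha}+G(T^*)\tfrac{\alpha^{-1}-\alpha^{1/(n-1)}}{n}$; clearing denominators and inserting $G(T^*)\ge 1-O\!\bigl(\tfrac{\ell/K}{(n-1)\alpha}\bigr)$ turns this into $n\alpha^2-\alpha-O(\ell/K)\le 0$, whence $\alpha=O(1/n)$ using $n\ell/K=O(1)$ from $K\ge 4\e\ell n$.

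\textbf{Step 3: the lower bound, and conclusion.} If $G(T^*)\le\tfrac12$ then $p_{T^*}\ge x^*_{T^*}p_{T^*}=1-G(T^*)\ge\tfrac12=\Omega(1/n)$. Otherwise $G(T^*)\ge\tfrac12$, and I would keep only the block of indices $s$ with $p_{T^*}\le p_s\le 2p_{T^*}$: it is nonempty, and by $\ell$-density (once $K$ is large) its total width in the $r$-variable is at least $r_{T^*}-(2p_{T^*})^{-1/(n-1)}(1+o(1))=r_{T^*}\bigl(1-2^{-1/(n-1)}\bigr)(1+o(1))=\Omega\!\bigl(\tfrac{r_{T^*}}{n-1}\bigr)$. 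On this block $r_s^{n-1}=1/p_s\ge\tfrac1{2p_{T^*}}$, so $\sum_{s=T^*+1}^K(r_{s-1}-r_s)r_s^{n-1}=\Omega\!\bigl(\tfrac{r_{T^*}}{(n-1)p_{T^*}}\bigr)$, and multiplying by $G(T^*)\alpha^{1/(n-1)}$ forces $1\ge\Omega\!\bigl(\tfrac1{np_{T^*}}\bigr)$, i.e.\ $p_{T^*}=\Omega(1/n)$. Together with Step 2, $p_{T^*}=\Theta(1/n)$, and hence $p_{T^*-1}=p_{T^*}-O(\ell/K)=\Theta(1/n)$ too.

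\textbf{Main obstacle.} Carrying out Steps 2--3 with only $K\ge 4\e\ell n$ is the delicate point. Because the integrand $r^{n-1}$ and its variation over $[r_K,r_{T^*}]$ are as large as $r_{T^*}^{\,n-1}=\alpha^{-1}=\Theta(n)$, a naive ``mesh $\times$ variation'' estimate of the Riemann discrepancy would demand $K\gtrsim\ell n^2$; the argument survives at $K\gtrsim\ell n$ only because the exponent $1/(n-1)$ makes every $\ell/K$ error appear divided by $n-1$ --- both in the $r$-mesh ($r_{s-1}-r_s\le\tfrac{\ell/K}{n-1}\alpha^{-1/(n-1)-1}$) and in the atom bound ($1-G(T^*)=O(\tfrac{\ell/K}{(n-1)\alpha})$) --- so these $n$-dependent factors must be tracked rather than bounded crudely. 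One must also dispose of the degenerate case that $T^*$ sits at an index with $p_{T^*}$ far below $1/n$ (e.g.\ $T^*=1$ with tiny $p_1$): this is excluded once $K$ is large because $\ell$-density then controls the growth of the $p_i$ just above $T^*$, so a too-small $p_{T^*}$ would make $\sum_s x^*_s$ exceed $1$ (one may also invoke the structural results \cref{thm:msymmetric}/\cref{thm:unique1-5}). Conveniently, only constant-factor control of $\alpha$ is needed, so no bootstrapping between the error estimate and the size of $\alpha$ is required.
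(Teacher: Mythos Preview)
Your approach is genuinely different from the paper's, and the upper bound (Step~2) is correct, but there is a circularity in Step~3.

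\textbf{Comparison of the upper bounds.} The paper's upper bound is a one--liner: in a stingy race the events ``player $i$ wins'' are disjoint and have common probability equal to the equilibrium payoff, so that payoff is $<1/n$; since playing time $T^*-1$ against $n-1$ copies of $x^*$ yields exactly $p_{T^*-1}$ (every opponent plays later), and this cannot beat the equilibrium payoff, one gets $p_{T^*-1}<1/n$. Your Riemann--sum route also works (your inequality $n\alpha^2-\alpha-O(\ell/K)\le 0$ is correct), but it is considerably more labour than the paper's probabilistic counting.

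\textbf{The gap in the lower bound.} Your Step~3 restricts to the block $\{s:p_s\in[p_{T^*},2p_{T^*}]\}$ and asserts that its $r$--width is $r_{T^*}(1-2^{-1/(n-1)})(1+o(1))$. For that, $\ell$--density must populate the interval $[p_{T^*},2p_{T^*}]$ out to its right end, which needs the $p$--width $p_{T^*}$ to be large compared with the mesh $\ell/K$. Under the only hypothesis available, $K\ge 4\e\ell n$, this is $p_{T^*}\gtrsim 1/(4\e n)$ --- precisely the lower bound you are trying to prove. If a priori $p_{T^*}<\ell/K$, the block may contain no $s>T^*$ at all and your sum over it is zero. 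Your ``Main obstacle'' paragraph flags this, but the suggested escape (``a too--small $p_{T^*}$ would make $\sum_s x^*_s$ exceed $1$'') is not carried out, and carrying it out is not automatic: a direct Riemann--sum lower bound on $\sum_{s>T^*}(r_{s-1}-r_s)r_s^{\,n-1}$ incurs an error proportional to $r_{T^*}^{\,2n-2}\cdot\ell/K$, which blows up as $\alpha\to 0$.

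\textbf{How the paper avoids this.} For the lower bound the paper truncates instead at $p_s\le 1/n$ (an endpoint independent of $p_{T^*}$): on that range $\bar p_s/p_s\ge n-1$, and the sum in the positivity condition $r_{T^*}>0$ telescopes to $(n-1)\bigl(p_{T^*}^{-1/(n-1)}-p_S^{-1/(n-1)}\bigr)$ with $p_S\approx 1/n$. This gives an explicit inequality that can be solved for $p_{T^*}$ without any a priori lower bound on it. The fix to your argument is exactly this: replace the block $[p_{T^*},2p_{T^*}]$ by $[p_{T^*},1/n]$ (or work directly with $r_{T^*}>0$ as the paper does), so that the truncation point does not depend on the quantity you are bounding.
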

This means that the more players there are in a game, the earlier one starts to measure in the unique coinciding equilibrium.  

In light of \cref{question2} we also want to see what the probability of more than one player succeeding at the same time in 
this unique coinciding equilibrium.  We show the following.
\begin{restatable}{theorem}{colprobm}
\label{colprobm}
Let $P_1,....,P_K$ define an $\ell$-dense stingy $n$-player quantum race such that $4\e n \ell \le K$. 
When the players play the coinciding equilibrium of the stingy race, the probability that two or more players 
succeed at the same time is at most $\frac{8\e n\ell}{K}$.
\end{restatable}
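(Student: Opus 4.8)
The plan is to reduce the event to a second-moment estimate and then feed in the explicit form of the coinciding equilibrium. By \cref{thm:msymmetric} every player uses the same strategy $x$, supported on an interval $\{T^*,T^*+1,\dots,K\}$; write $h_s=x_sP_s$ for the probability that a fixed player measures at time $s$ and succeeds. Two or more players succeed at the same time exactly when, for some $s$, at least two players both pick measurement time $s$ and succeed there; since the players act independently, a union bound over the $\binom n2$ pairs and over the times gives
\[
\Pr[\text{two or more players succeed at the same time}]\ \le\ \binom n2\sum_{s=T^*}^{K}h_s^2 .
\]
So everything reduces to showing $\sum_s h_s^2=O(\ell/(nK))$ with a constant small enough that $\binom n2$ times it is at most $8\e n\ell/K$.

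Next I would record the shape of the equilibrium. The indifference condition characterizing the coinciding equilibrium says that $P_s(1-\bar H(s))^{n-1}$ equals a constant $c$ for every $s$ in the support, where $\bar H(s)=\sum_{t=T^*}^{s}h_t$. Writing $v(s)=1-\bar H(s)$ this becomes $v(s)=(c/P_s)^{1/(n-1)}$ on the support, with the convention $v(T^*-1)=1$, and then $h_s=v(s-1)-v(s)$. From $P_{T^*-1}\le c< P_{T^*}$ and $\ell$-density we get $|c-P_{T^*}|\le\ell/K$, so by the bound $P_{T^*}=\Theta(1/n)$ of \cref{lem:bound_PTstar} (applicable since $K\ge4\e\ell n$) we have $c=\Theta(1/n)$.

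The core estimate is a two-part bound on $\sum_s h_s^2$. For $s>T^*$, combine $v(s-1)^{n-1}-v(s)^{n-1}=c(\tfrac1{P_{s-1}}-\tfrac1{P_s})\le\tfrac{c}{P_{s-1}P_s}\cdot\tfrac\ell K$ (using $\ell$-density) with the elementary inequality $a^{n-1}-b^{n-1}\ge(n-1)b^{n-2}(a-b)$ for $a\ge b\ge0$ to obtain $h_s\le\tfrac{v(s-1)^n}{(n-1)c}\cdot\tfrac\ell K$; and $h_{T^*}\le\tfrac{\ell}{(n-1)P_{T^*}K}$ follows directly from $c\ge P_{T^*-1}\ge P_{T^*}-\ell/K$. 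Writing $h_s^2=h_s\,(v(s-1)-v(s))$ and inserting the first bound gives $\sum_{s>T^*}h_s^2\le\tfrac{\ell}{(n-1)cK}\sum_{s>T^*}v(s-1)^n(v(s-1)-v(s))$. The remaining sum is a left-endpoint Riemann sum of the increasing function $v^n$ on the intervals $[v(s),v(s-1)]$, so it overshoots $\int_{v(K)}^{v(T^*)}v^n\,dv\le\tfrac1{n+1}$ by at most $(\max_s h_s)\sum_s(v(s-1)^n-v(s)^n)\le\max_s h_s=O(\ell/K)$. With $c=\Theta(1/n)$ this yields $\sum_{s>T^*}h_s^2=O(\ell/(nK))$, and the isolated term $h_{T^*}^2=O(\ell^2/K^2)$ is also $O(\ell/(nK))$ because $K\ge4\e\ell n$; multiplying by $\binom n2$ gives the claimed $O(n\ell/K)$.

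I expect the main obstacle to be getting the dependence on $n$ exactly right. The naive bound $\sum_s h_s^2\le(\max_s h_s)(\sum_s h_s)$ costs an extra $\log n$ factor, since $\sum_s h_s=1-(c/P_K)^{1/(n-1)}=\Theta(\log n/n)$, and would only give $O(n\ell\log n/K)$; removing the $\log n$ is what forces the $s$-dependent increment estimate and the Riemann-sum comparison above, and the technically delicate point there is that $h_s$ is of order $\ell/K$ near $s=T^*$ but decays like $P_s^{-n/(n-1)}$, so almost all of $\sum_s h_s^2$ comes from the short stretch where $P_s=O(1/n)$. The rest is bookkeeping: tracking constants through \cref{lem:bound_PTstar} and the hypothesis $K\ge4\e\ell n$ to land precisely on $8\e n\ell/K$, including checking the edge behavior at the left end of the support.
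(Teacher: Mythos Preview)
Your approach is correct and yields the right $O(n\ell/K)$ bound, but it is more circuitous than the paper's. Both arguments start from the same pairwise union bound and reduce to controlling $\sum_s (x_sP_s)^2$; the difference is how that sum is estimated. You bound the increments $h_s$ by an $s$-dependent quantity $\tfrac{v(s-1)^n}{(n-1)c}\cdot\tfrac{\ell}{K}$ and then invoke a Riemann-sum comparison to handle $\sum_s v(s-1)^n(v(s-1)-v(s))$. The paper avoids this entirely by reorganizing the sum as $\sum_t x_t\cdot(P_t^2 x_t)$ and showing the \emph{uniform} termwise bound $P_t^2 q_t\le\tfrac{8\e\ell}{K(n-1)}$: one application of concavity of $x^{1/(n-1)}$ together with $P_{T^*-1}\ge\tfrac{1}{2\e n}$ from \cref{lem:bound_PTstar} gives this directly, and since $\sum_t x_t=1$ the bound follows without any integral comparison. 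In other words, your concern about the $\log n$ loss in the naive estimate is resolved not by sharpening the increment bound but by weighting with $x_t$ (which sums to $1$) rather than $h_t=x_tP_t$ (which sums to $\Theta(\log n/n)$).

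Two further remarks. First, your $\binom n2$ union bound is a factor of two tighter than the paper's $\sum_i\cp_i$ decomposition, so once constants are tracked your route could in principle land on $4\e n\ell/K$ rather than $8\e n\ell/K$. Second, the paper actually proves the stronger \cref{lemcp} bounding $\cp_i(x_{-i},v)$ for \emph{arbitrary} deviations $v$, which is reused in \cref{thm:mapprox}; your argument, as written, is specialized to $v=x$ and would need to be redone (trivially, via the same uniform termwise bound) to recover that lemma.
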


Finally, as in the two-party case, we show that the unique coinciding equilibrium of a stingy race is also an approximate Nash equilibrium 
in the corresponding quantum race, provided the sequence of probabilities is $\ell$-dense.
\begin{restatable}{theorem}{mapprox}
\label{thm:mapprox}
Let $P_1,....,P_K$ define an $\ell$-dense stingy $n$-player quantum race, $n \ge 2$,  with $4\e n\ell \leq K$. 
If $x = (x_1,...,x_n)$ is the coinciding Nash equilibrium for this stingy race, then $x$ is an $\frac{8\e \ell}{K}$-
approximate Nash equilibrium of the corresponding quantum race.   
\end{restatable}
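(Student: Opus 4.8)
The plan is to exploit two facts: the quantum race and the stingy race have identical payoff functions except on ties, and $x$ is a genuine Nash equilibrium of the stingy race. Write $u_i$ and $\hat u_i$ for player $i$'s payoff in the quantum race and in the stingy race, respectively. Since the money handed to the participants of a tie is nonnegative in the quantum race and zero in the stingy race, for every profile $y$ we have $\hat u_i(y) \le u_i(y) \le \hat u_i(y) + \tau_i(y)$, where $\tau_i(y)$ is the expected amount player $i$ collects from ties under $y$, i.e.\ the contribution to her payoff from outcomes in which the first successful measurement time is attained simultaneously by player $i$ and at least one other player. Because payoffs are linear in a player's own mixed strategy, it suffices to check the approximate best-response condition against pure deviations, i.e.\ against each time $t$. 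For such a deviation,
\[ u_i(t, x_{-i}) \le \hat u_i(t, x_{-i}) + \tau_i(t, x_{-i}) \le \hat u_i(x) + \tau_i(t, x_{-i}) \le u_i(x) + \tau_i(t, x_{-i}), \]
where the middle inequality uses that $x$ is a Nash equilibrium of the stingy race and the last uses $\hat u_i(x) \le u_i(x)$. So the whole statement reduces to proving $\tau_i(t, x_{-i}) \le \frac{8\e\ell}{K}$ for every player $i$ and every time $t$.

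To bound $\tau_i(t, x_{-i})$, fix $i$ and $t$. If player $i$ measures at time $t$ she receives a tie payment only when she succeeds — probability $P_t$, independently of the others — and the earliest successful measurement among the remaining $n-1$ players (each using its coinciding-equilibrium strategy $x_j$) also falls exactly on time $t$; call the probability of this latter event $W(t)$. Since a tie splits one unit of payoff among at least two players, $\tau_i(t, x_{-i}) \le \tfrac{1}{2} P_t W(t) \le P_t W(t)$, and it remains to establish $\max_t P_t W(t) \le \frac{8\e\ell}{K}$.

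This last estimate is the crux, and it mirrors the proof of \cref{colprobm}. By \cref{thm:msymmetric} the coinciding equilibrium is symmetric with support the interval $\{T^*, T^*+1, \ldots, K\}$, and by \cref{lem:bound_PTstar}, whose hypothesis $4\e n\ell \le K$ we have assumed, $P_{T^*} = \Theta(1/n)$. Together with $\ell$-density this pins down the length of the support and, more importantly, controls both the equilibrium weights $x_j(s)$ on it and the ``survival'' probabilities that none of the other $n-1$ players has succeeded before a given time. Writing $W(t)$ as the product of (the probability that no other player succeeds before $t$) and (the probability that at least one of them measures at $t$ and succeeds), the first factor decays at the same geometric rate as the running product appearing in the proof of \cref{colprobm}, while the second is at most $(n-1)x_j(t)P_t$; the quantity $P_t W(t)$ is largest near the top of the support, and running essentially the computation that yields the collision bound $\frac{8\e n\ell}{K}$ there — now with the single extra player $i$ playing the role that the union bound over $n$ players plays in \cref{colprobm} — gives $P_t W(t) \le \frac{8\e\ell}{K}$. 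Substituting into the displayed chain shows $u_i(t, x_{-i}) \le u_i(x) + \frac{8\e\ell}{K}$ for every $i$ and $t$, which, since best responses may be taken pure, is exactly the assertion that $x$ is a $\frac{8\e\ell}{K}$-approximate Nash equilibrium of the quantum race.

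The main obstacle is precisely the bound $\max_t P_t W(t) \le \frac{8\e\ell}{K}$: one must track, as $t$ moves up through $\{T^*, \ldots, K\}$, how quickly the probability that none of the other players has yet succeeded shrinks, and trade this decay off against the growth of the weights $x_j(t)$ and the factor $P_t$. This is the same delicate product estimate that underlies \cref{colprobm} and \cref{lem:bound_PTstar} (and one could also route it through the two-player asymmetric reduction of \cref{lem:reduction} and \cref{thm:two_approx}), and it is where the hypotheses — $\ell$-density of $P_1, \ldots, P_K$ and $4\e n\ell \le K$ — are genuinely used. Everything surrounding it is soft: the sandwich $\hat u_i(y) \le u_i(y) \le \hat u_i(y) + \tau_i(y)$, the reduction to pure deviations, and the appeal to the stingy Nash property of $x$.
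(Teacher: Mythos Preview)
Your overall framework is exactly the paper's: sandwich the quantum-race payoff between the stingy payoff and the stingy payoff plus the tie contribution, reduce to pure deviations, invoke the stingy Nash property of $x$, and then bound the tie term. Your $P_t W(t)$ is precisely the paper's $\mathrm{cp}_i(x_{-i},e_t)$, and the paper packages the required bound as \cref{lemcp} (valid for any deviation $v$, hence in particular for each pure $e_t$), from which both \cref{colprobm} and \cref{thm:mapprox} follow in one line each.

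Where your sketch diverges is in how you propose to prove the crux estimate. You describe tracking the decay of the survival probability (``no other player has yet succeeded before $t$'') and trading it against the growth of $P_t$ and the weights; the paper does something much cruder. In \cref{lemcp} it simply drops the ``first'' requirement and union-bounds $\mathrm{cp}_i(x_{-i},v)$ by $(n-1)$ times the probability that player $i$ and one fixed other player succeed at the same time (not necessarily first), which reduces to bounding $\tfrac{n-1}{z_{T^*}}\sum_t v_t P_t^2 q_t$. That sum is then controlled using concavity of $s\mapsto s^{1/(n-1)}$, the explicit form of $q_t$, and the lower bound $P_{T^*-1}\ge \tfrac{1}{2\e n}$ from \cref{lem:bound_PTstar}; no running products or geometric decay enter. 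Your route could likely be pushed through, but it is more intricate than necessary, and your claim that ``$P_tW(t)$ is largest near the top of the support'' is neither obvious nor needed. The union-bound shortcut in \cref{lemcp} is the cleaner and more direct argument.
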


\subsection{Application to Bitcoin}
One application of our study of quantum races is to the decentralized digital currency Bitcoin, developed in 2008 by 
Satoshi Nakamoto \cite{Nak09}.  Bitcoin transactions are packaged into blocks and stored in a public ledger called the blockchain.  A 
major obstacle in creating a decentralized currency is to find a way for all parties to agree on the history of transactions.  
In Bitcoin, this is done through \emph{Nakamoto consensus}: the right to create a new block is decided through proof-of-work, a contest to 
solve a computational problem.  The winner of this contest has the right to make a new block of 
transactions, is given a reward in bitcoin, and then the process repeats itself.  The players competing in this process are called miners.  Nakamato consensus remains the 
primary means of achieving consensus across all cryptocurrencies, although there are coins using other consensus mechanisms such as proof-of-stake \cite{KRBO17} or Byzantine agreement \cite{CL99}.  

The proof-of-work task used in Bitcoin (originally developed in a system called Hashcash \cite{back02}) is essentially a search 
problem.  The problem is to find a
value $x$ (called a nonce) such that $h(H \parallel x) \le t$, where $h$ is a hash function (doubly iterated SHA-256 in the case of 
Bitcoin), $H$ is the header of the block of transactions to be processed, and $t$ is a hardness parameter that can be varied so 
that the entire network takes 10 minutes to solve this task, on average. 

Several works have studied the impact that quantum computers would have on the Bitcoin protocol \cite{But13, ABLST17,Sat18}, 
both on the mining process we have described above and on the digital signatures used in Bitcoin to authenticate ownership of coins. 
We will focus here on the impact of quantum computers on Bitcoin mining.

As the Bitcoin proof-of-work is a search task, quantum miners could use Grover's algorithm to find a nonce $x$ satisfying 
$h(H \parallel x) \le t$ with quadratically fewer evaluations of the hash function $h$ than is needed by a classical computer \footnote{While this 
seems to give quantum computers a huge advantage for Bitcoin mining, specialized classical Bitcoin mining hardware currently can perform14 
trillion hashes per second \cite{bitmain} and would outperform a near-term quantum computer with gate speeds of 100MHz \cite{ABLST17}}.
The use of Grover's algorithm creates new issues for proof-of-work that do not exist in the classical case.  Desirable properties of a proof-of-work task have been studied 
from an axiomatic point of view by Biryukov and Khovratovich \cite{BK17}.  One property they give is \emph{progress-freeness}: the probability of a miner 
solving the proof-of work task in any moment is independent of 
previous events.  This is achieved for a classical miner in the Bitcoin proof of work, as every call to the hash function is equally 
likely to find a good nonce $x$.  
Progress-freeness is not achieved for a quantum miner running Grover's algorithm, as the success probability grows roughly quadratically 
with the amount of time the algorithm is run.  

Sattath \cite{Sat18} points out that this gives a way for quantum miners to deviate from the prescribed protocol in order to increase their 
chance of winning a block.  To explain this deviation, imagine a simplified case where the proof-of-work is to find a unique marked item in 
a database of size $N$.  Say that Alice, a quantum miner, receives a new block from the network which was found by Bob.  When Alice 
receives this block she will be in the middle of running Grover's algorithm to find the marked item herself.  The prescribed protocol says that she should immediately 
halt this run of Grover's algorithm and begin working on a new search problem by mining on top of Bob's new block.  However, if Alice has run Grover algorithm for $c\sqrt{N}$ steps, 
for a constant $c$, she will have already built up a constant probability of finding the marked item upon measuring.  From the point of view of maximizing 
her payoff, there is no harm in just measuring to see if she finds the marked item.  
If Alice gets lucky and indeed finds the marked item, then she can broadcast her new block to the network.  Depending on her connectivity to the network, some 
other miners may receive Alice's block before Bob's, and there is some probability that Alice's new block eventually becomes the block accepted by the network rather 
than Bob's, 
meaning that Alice will receive the bitcoin reward.  Note that this does not happen in the classical case, where after Alice receives Bob's block she would still 
just have probability $1/N$ to find the marked item with each additional search query.  In this case it makes sense to immediately start mining on top of the new 
block.

Luckily, Sattath also provides an easy fix for the Bitcoin protocol to remedy this problem.  Without going into the technical details, this fix essentially forces 
miners to commit to how long they will run Grover's algorithm \emph{before they begin}.  Thus if Alice commits to running Grover's algorithm for time $\sqrt{N}/100$, 
yet receives Bob's block after time $\sqrt{N}/200$, if she tries to immediately measure and publish her own block, the network will reject it because of the timing 
discrepancy.  This fix fits in very well with our model of quantum races, as a strategy is exactly a probability distribution over choices of times to measure.  

The quantum race that captures the case of Bitcoin mining is what we call the \emph{Grover race} (see \cref{def:grover_race}).  In this race, the 
success probability $p_t$  is given by the success probability of $t$-iterations of Grover's algorithm \footnote{It is known that $t$ queries of Grover's algorithm maximizes 
the probability of success in a search problem over all $t$-query quantum algorithms \cite{DH08}.}.  This race is an $\ell$-dense race for $\ell = \pi /2$.  The size of the 
search space, and thus the maximum number of iterations $K$ to run Grover's algorithm, is determined by the \emph{difficulty} setting of the Bitcoin protocol.  Currently the 
difficulty (as of September 7, 2018) is approximately $7\cdot 10^{12}$, which, by Bitcoin's definition of difficulty, means that the network has to do roughly 
$2^{32} \cdot 7\cdot 10^{12}$ many hashes to succeed, in expectation.  This leads to a value of $K$ of approximately $10^{11}$.  Thus for this application 
$\frac{\ell}{K}$ is very small, and \cref{thm:mapprox} implies that the unique coinciding equilibrium for the stingy Grover race is an $\epsilon$-approximate Nash equilibrium 
in the Grover race for $\epsilon \le 3 \cdot 10^{-10}$.  This gives a reasonable answer to \cref{question1} for what a good strategy would be for quantum miners, and 
moreover has the desirable property that all miners run the same algorithm.  By \cref{colprobm}, when there are $n$ miners running this strategy the probability of a tie is at most 
$3 n \cdot 10^{-10}$.  This gives an answer to \cref{question2}, that quantum mining is not likely to produce a high forking rate and thereby destabilize the Bitcoin protocol.   

\section{Preliminaries}
We use $\e \approx 2.71828$ for Euler's number.
For ease of reference, we state some simple inequalities we will make use of here:
\begin{align}
\label{eq:geom}
\frac{1}{1-x} &\le 1 + 2x \mbox{ for } 0 \le x \le \frac{1}{2} \enspace , \\
\label{eq:recip}
\frac{1}{a+x} &\ge \frac{1}{a} - \frac{x}{a^2} \mbox{ for } a > 0 \mbox{ and } x > -a \enspace , \\
\label{eq:plus_sqrt}
\sqrt{a+x} &\le \sqrt{a} + \frac{x}{2\sqrt{a}} \mbox{ for } a > 0 \mbox{ and } x \ge -a \enspace , \\
\label{eq:minus_sqrt}
\sqrt{a-x} &\le \sqrt{a} - \frac{x}{2\sqrt{a}} - \frac{x^2}{8a^{3/2}} \mbox{  for } a > 0 \mbox{ and } 0 \le x \le a \enspace .
\end{align}

For a probability $0 \leq p \leq 1$, we set $\bar p = 1-p$. 
For a natural number $n$, we let $[n] = \{1, . . . , n\}$.   We let $\Delta_n = \{x \in \mathbb{R}^n : x \ge 0, \sum_{i=1}^n x_i = 1\}$ be the probability simplex.  For 
$x \in \Delta_n$ we let $\sup(x) = \{ i \in [n]: x_i > 0\}$ be the support of $x$.  

By a vector we mean a column vector, so an 
$n$-dimensional vector is an $n$-by-$1$ matrix. We use $\mathbf{0}$ for the all zeros vector and $\mathbf{1}$ for the all ones vector, where 
the dimension is implied from the context.  The transpose of a matrix $M$ is denoted by $M^{\T}$. 
A {\em $2$-player game} 
between players Alice and Bob is specified by two $m \times n$ real matrices
$A$ and $B$. We call $A$ the {\em payoff} matrix of Alice, and $B$ the payoff matrix of Bob. We sometimes refer to
Alice as the {\em row player} and to Bob as the {\em column player}. A {\em pure strategy} of Alice is an element
of $[m]$, and a pure strategy of Bob is an element of $[n]$. We can think of pure strategies of the players as 
their choice of a row and of a column of an $m \times n $ matrix, respectively. When Alice chooses $i$ and Bob 
chooses $j$, their respective payoffs are $a_{ij}$ and $b_{ij}$. 
A {\em mixed strategy} of the row player is a probability distribution $x$ over $[m]$ written as an $m$-vector. 
Similarly, a mixed strategy of the column player is an $n$-vector $y$ describing a probability distribution over columns.
The {\em support} of a mixed strategy is the set of pure strategies (indices) that have positive probabilities.
We denote the support of a mixed strategy $z$ by $\sup(z)$. When Alice plays the mixed strategy $x$ and Bob
plays the mixed strategy $y$, their expected payoffs are $x^{\T}Ay$ and $x^{\T}By$, respectively. 
When Alice uses the pure strategy $i$ against $y$, her payoff is $e_i^{\T}Ay$, where $e_i$ denotes
the $i^{th}$ standard basis vector.
The purpose of both players in the game is to maximize their respective expected payoffs.

A pure strategy $i$ of Alice is a {\em best response} to the mixed strategy $y$ of Bob if it maximizes 
her payoff against $y$.
The set of best responses of Alice against $y$ is denoted by $\br(y)$.
A mixed strategy $x$ of Alice is a {\em best response} to the mixed strategy $y$ of Bob if it contains only best responses, that is $\sup(x) \subseteq \br(y)$. Analogous definitions hold with the roles of the players exchanged.
A pair of mixed strategies $(x,y)$ is a {\em Nash equilibrium} if they are best responses to each other. It is quite
easy to see that $(x,y)$ is a Nash equilibrium exactly when the strategies $x$ and $y$ maximize,
for both players, their expected payoff against the strategy of the other player.

\begin{definition}[quantum race]
A 2-{\em player quantum race} is specified by two sequences of increasing probabilities
$0  < p_1 < p_2 < \cdots < p_K \leq 1$, and $0 < P_1 < P_2 < \cdots < P_K \le 1$ for some integer $K \ge 2$. The set of pure strategies
of both players is $[K]$. The $K \times K$ payoff matrix $A$ of Alice is defined as
$$
A(i,j) =
\begin{cases}
p_i  & \text{if} ~ i < j,\\
p_i \bar P_j + \frac{1}{2}p_iP_i & \text{if} ~ i = j, \\
p_i \bar P_j & \text{otherwise.}
\end{cases}
$$
The payoff matrix $B$ is defined as 
$$
B(i,j) =
\begin{cases}
P_j  & \text{if} ~ j < i,\\
P_j \bar p_i + \frac{1}{2} p_i P_i& \text{if} ~ i = j, \\
P_j \bar p_i & \text{otherwise.}
\end{cases}
$$
If $p_i = P_i$ for all $i=1, \ldots, K$ then we call the game a \emph{symmetric} quantum race.  Note that in this case 
$B=A^{\T}$.
\end{definition}

In our study of quantum races, a key role will be played by an auxiliary game that is easier to analyze called a stingy quantum race.  
A stingy quantum race differs from a quantum race only in that no payout is given in the case of a tie.
\begin{definition}[stingy quantum race]
A 2-{\em player stingy quantum race} is specified by two sequences of increasing probabilities
$0  < p_1 < p_2 < \cdots < p_K \leq 1$, and $0 < P_1 < P_2 < \cdots < P_K \le 1$ for some integer $K \ge 2$. The set of pure strategies
of both players is $[K]$. The $K \times K$ payoff matrix $A_0$ of Alice is defined as
$$
A_0(i,j) =
\begin{cases}
p_i  & \text{if} ~ i < j,\\
p_i \bar P_j & \text{otherwise.}
\end{cases}
$$
The payoff matrix of Bob $B_0$ is defined as 
$$
B_0(i,j) =
\begin{cases}
P_j  & \text{if} ~ j < i,\\
P_j \bar p_i & \text{otherwise.}
\end{cases}
$$
If $p_i = P_i$ for all $i=1, \ldots, K$ then we call the game a \emph{symmetric} stingy quantum race.  Note that in this case 
$B_0=A_0^{\T}$.
\end{definition}

The main specific quantum race we will be interested in is the \emph{Grover race}.  This results from two players competing 
to find a marked item in a database and playing by running Grover's algorithm for a certain amount of time and then measuring. 
Formally, the race is defined as follows.  
\begin{definition}[Grover race]
\label{def:grover_race}
We define the (stingy) Grover race on $N$ items as the symmetric (stingy) quantum race with
$K = \ceil{ \tfrac{\pi}{4}\sqrt{N} - 3/2}$ and
\[
p_t = \sin^2 \left(2(t+1/2)\arcsin\left(\frac{1}{\sqrt{N}}\right)\right) \enspace,
\]
for $1 \leq t \leq K$.
\end{definition}
Here $p_t$ is the success probability of Grover's algorithm of finding a unique marked item in a database of 
$N$ items.  It is known that $p_t$ is the highest success probability for finding a marked item 
for any quantum algorithm making $t$ many calls to the database \cite{DH08}. 

The Grover race has many nice properties, and we will abstract out one of them here.  This allows us to show results 
for a general class of quantum races, rather than just the Grover race.
\begin{definition}[dense race]
\label{def:dense}
Let $p_1 < p_2 < \cdots < p_K \le 1$.  We call 
the sequence $(p_1, \ldots, p_K)$ $\ell$-dense if $p_1 \le \frac{\ell}{K}$, $p_K \ge 1-\frac{\ell}{K}$, and 
$p_{i+1} - p_i \le \tfrac{\ell}{K}$ for all $i=1, \ldots, K-1$.  
For a dense sequence $(p_1, \ldots, p_K)$ will similarly call the symmetric (stingy) quantum race defined by this 
sequence a symmetric (stingy) $\ell$-dense quantum race.
\end{definition}

The (stingy) Grover race is an $\ell$-dense race with $\ell= \pi/2$.

\section{Two-player stingy quantum races}
\label{sec:too_stingy}
We will first analyze Nash equilibria in stingy quantum races.  We can show several structural properties about the support of 
Nash equilibria in stingy quantum races that make them easier to analyze than quantum races.  After our analysis in this 
section, we will  bootstrap our knowledge of Nash equilibria in stingy quantum races to find an approximate Nash 
equilibrium for quantum races. 

Consider a stingy quantum race given by the probabilities $0 < p_1 < p_2 < \cdots < p_K \leq 1$ and 
$0 < P_1 < P_2 < \cdots < P_K$,
and let $y$ be a mixed strategy of Bob. For $t \leq K$, we let 
$\sup_{\leq t }(y)= \{j \in \sup(y) : j \le t\}$ be the set of times played by Bob that are at most $t$, and similarly,
we let 
$\sup_{> t }(y)= \{j \in \sup(y) : j > t\}$ be the set of times played by Bob that are greater than $t$.
Observe that when Alice plays the pure strategy $t$ against $y$,
her payoff is 
$$
e_t^{\T}Ay = p_{t} \left(\sum_{j \in \sup_{\leq t}(y)} y_j \bar{P}_j + \sum_{j \in \sup_{> t}(y)} y_j \right).
$$

\begin{claim}
\label{cla:interval}
Let $(x, y)$ be a Nash equilibrium of a 2-player stingy quantum race.
If $t_1 \in \sup (x)$ then there does 
not exist $t_2 > t_1$ with $\sup_{\leq t_1}(y) = \sup_{\leq t_2}(y)$.  
\end{claim}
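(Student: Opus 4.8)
The plan is to argue by contradiction: suppose $t_1 \in \sup(x)$ and there exists $t_2 > t_1$ with $\sup_{\le t_1}(y) = \sup_{\le t_2}(y)$. Since $t_1 \in \sup(x)$ and $(x,y)$ is a Nash equilibrium, $t_1$ is a best response for Alice against $y$, so $e_{t_1}^\T A y \ge e_{t_2}^\T A y$. I would write out both payoffs using the formula $e_t^\T A y = p_t\bigl(\sum_{j \in \sup_{\le t}(y)} y_j \bar P_j + \sum_{j \in \sup_{>t}(y)} y_j\bigr)$ displayed just before the claim, and then compare them using the hypothesis that the ``$\le$'' part of Bob's support is unchanged when moving from $t_1$ to $t_2$.

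The key observation is that the bracketed factor $\bigl(\sum_{j \in \sup_{\le t}(y)} y_j \bar P_j + \sum_{j \in \sup_{>t}(y)} y_j\bigr)$ is \emph{non-increasing} in $t$ in general (as $t$ grows, some indices $j$ move from the second sum, where they contribute $y_j$, to the first sum, where they contribute the smaller $y_j \bar P_j$), and under the hypothesis $\sup_{\le t_1}(y) = \sup_{\le t_2}(y)$ it is in fact \emph{equal} at $t = t_1$ and $t = t_2$, since no index changes which sum it belongs to. Call this common value $c$; note $c > 0$ because $y$ is a probability distribution (its support is nonempty, and every index lies in one of the two sums). Then $e_{t_1}^\T A y = p_{t_1} c$ and $e_{t_2}^\T A y = p_{t_2} c$, so the best-response inequality $p_{t_1} c \ge p_{t_2} c$ forces $p_{t_1} \ge p_{t_2}$, contradicting the strict monotonicity $p_{t_1} < p_{t_2}$ of the probability sequence (valid since $t_1 < t_2$).

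There is no real obstacle here — the argument is essentially a one-line consequence of the payoff formula plus strict monotonicity of the $p_i$'s. The only point requiring a moment's care is confirming that the bracketed factor $c$ is genuinely the \emph{same} at $t_1$ and $t_2$ and not merely comparable: one must check that for every $j \in \sup(y)$, the condition $j \le t_1$ holds iff $j \le t_2$, which is exactly the hypothesis $\sup_{\le t_1}(y) = \sup_{\le t_2}(y)$ (equivalently $\sup_{>t_1}(y) = \sup_{>t_2}(y)$), so the two sums defining $c$ are term-by-term identical in the two cases. I would also note explicitly that $c > 0$ so that we may cancel it. This claim is the seed of the interval structure of Nash equilibria (Corollary \ref{cor:support_structure}): it says that between two consecutive elements of $\sup(y)$ there can be at most one element of $\sup(x)$ that ``sees'' them the same way, which is what ultimately pins the union of supports down to an interval ending at $K$.
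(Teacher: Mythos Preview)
Your proposal is correct and follows essentially the same approach as the paper: both observe that under the hypothesis the bracketed factor in the payoff formula is identical at $t_1$ and $t_2$, so $e_{t_2}^{\T}Ay = \tfrac{p_{t_2}}{p_{t_1}} e_{t_1}^{\T}Ay > e_{t_1}^{\T}Ay$, contradicting that $t_1$ is a best response. Your explicit remark that the common factor $c$ is strictly positive is a nice point of care that the paper leaves implicit.
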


\begin{proof}
Say that the game is defined by probabilities $0 < p_1 < p_2 < \cdots < p_K \leq 1$ and 
$0 < P_1 < P_2 < \cdots < P_k$
If $\sup_{\leq t_1}(y) = \sup_{\leq t_2}(y)$ then
$$
e_{t_2}^{\T}Ay = \frac{p_{t_2}}{p_{t_1}} e_{t_1}^{\T}Ay.
$$
As $p_{t_1} < p_{t_2}$,  the payoff for playing 
$t_2$ is strictly larger than that for playing $t_1$.  
Therefore $t_1$ is not a best response for $y$, in contradiction with the definition of a Nash equilibrium.

\end{proof}

This claim implies that the support structure of Nash equilibria in stingy quantum races
is relatively simple.  We first make the following definition.

\begin{definition}
A pair of strategies $(x,y)$ is called {\em coinciding} if $\sup(x) = \sup(y)$.
A pair of strategies $(x,y)$ is called {\em alternating} if there exists $1\le t_1 < t_2 \le K$ such that 
the support of one player is
$\{t_1, t_1+2, \ldots, t_2 -1\}$ and the support of the other is $\sup(y)=\{t_1+1,t_1+3, \ldots, t_2\}$.
A pair of strategies $(x,y)$ is called {\em $(t_1,c,t_2)$-alternating-coinciding} if there are 
natural numbers $1 \le t_1 < t_2 \le K$ and $t_1+2 \le c \le t_2$ such that the support of one player is
$\{t_1, t_1 +2, \ldots, c-2, c, c+1, c+2, \ldots, t_2\}$ and the support of the other is
$\sup(y) = \{t_1+1, t_1 +3, \ldots, c-3, c-1,c, c+1, c+2, \ldots, t_2\}$.  
\end{definition}

\begin{corollary}
\label{cor:support_structure}
Let $(x, y)$ be a Nash equilibrium of a stingy quantum race specified by probabilities 
$0 < p_1 < p_2 < \cdots < p_K \leq 1$ and $0 < P_1 < P_2 < \cdots < P_k$. Then
there is some $1 \le T \leq K$ such that 
\begin{itemize}
\item $\sup(x) \cup \sup(y) = [T,K]$ \enspace ,
\item $(x,y)$ is either coinciding, alternating, or alternating-coinciding.
\end{itemize}
\label{cor:interval}
\end{corollary}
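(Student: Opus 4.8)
The plan is to extract the whole statement combinatorially from \cref{cla:interval} and its mirror image, with no further game‑theoretic input. First I would record a convenient reformulation: applying \cref{cla:interval} with $t_2=t_1+1$ shows that if $t_1\in\sup(x)$ and $t_1<K$ then $t_1+1\in\sup(y)$ (otherwise $\sup_{\le t_1}(y)=\sup_{\le t_1+1}(y)$), and conversely if $t_1+1\in\sup(y)$ then every window $(t_1,t_2]$ with $t_2>t_1$ meets $\sup(y)$, so \cref{cla:interval} is in fact equivalent to this statement; call it (P1). Swapping the two players — legitimate by the symmetry of the stingy‑race payoff matrices under transposition and the exchange $(p_i)\leftrightarrow(P_i)$ — gives the companion statement (P2): if $t_1\in\sup(y)$ and $t_1<K$ then $t_1+1\in\sup(x)$.

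Next I would establish the first bullet. Put $S=\sup(x)\cup\sup(y)$, $T=\min S$, $M=\max S$. If $M<K$, then $M$ lies in $\sup(x)$ or $\sup(y)$, and (P1) or (P2) forces $M+1\in S$, contradicting $M=\max S$; so $M=K$. If $[T,K]\setminus S$ is nonempty, let $t$ be its least element; then $t-1\in S$ and $t-1<K$, and (P1) or (P2) applied to $t-1$ forces $t\in S$, a contradiction. Hence $S=[T,K]$.

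For the classification I would colour each $i\in[T,K]$ according to whether it is private to Alice ($i\in\sup(x)\setminus\sup(y)$), private to Bob ($i\in\sup(y)\setminus\sup(x)$), or shared ($i\in\sup(x)\cap\sup(y)$); by the previous paragraph every $i$ receives exactly one colour. The contrapositive of (P1) says a position private to Alice with $i<K$ is followed by one private to Bob or shared, and symmetrically for (P2); a shared position with $i<K$ lies in both supports, so (P1) and (P2) together force its successor to be shared. Consequently, reading $[T,K]$ from $T$ upward, one sees a (possibly empty) block on which the two kinds of private positions strictly alternate, followed by a (possibly empty) block of shared positions — once a position is shared, so are all later positions. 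I would then split into three cases: the shared block is all of $[T,K]$ (so $\sup(x)=\sup(y)=[T,K]$, a \emph{coinciding} pair); the shared block is empty (the private positions alternate across $[T,K]$; with $t_1=T$, $t_2=K$ the even offsets of $[T,K]$ form one player's support and the odd offsets the other's, an \emph{alternating} pair, the block necessarily having length at least $2$ since a singleton interval would have to be shared); or the shared block is a proper nonempty suffix $[c,K]$ (then $[T,c-1]$ alternates and $[c,K]$ is shared, which with $t_1=T$, $t_2=K$ is a $(t_1,c,t_2)$-\emph{alternating-coinciding} pair, the ``one player/other player'' roles being pinned down by who owns the last private position $c-1$).

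The main obstacle I expect is exactly this last matching step. Everything up to the colouring is a short deduction from \cref{cla:interval}, but turning the qualitative picture ``alternating prefix, then shared suffix'' into the precise index-set templates of \emph{alternating} and \emph{alternating-coinciding} — choosing $t_1,c,t_2$ correctly, fixing the labelling of the two players, and checking that every boundary configuration (alternating block of length $0$ or $1$, $T=K$, shared block absent or already present at $T$) falls under exactly one template — takes some care, and it is where I would spend most of the effort.
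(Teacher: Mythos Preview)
Your approach is essentially the same as the paper's: both extract the support structure purely from \cref{cla:interval} (and its mirror for the column player) with no further game-theoretic input. The paper's proof is a two-line sketch --- it records that $\sup(x)\cup\sup(y)$ is an interval ending at $K$ and that between any two points of $\sup(x)$ lies a point of $\sup(y)$, then asserts the trichotomy follows --- whereas your (P1)/(P2) reformulation (``$t\in\sup(x)$ and $t<K$ forces $t+1\in\sup(y)$'') is an equivalent but sharper restatement that makes the colouring and the ``once shared, always shared'' induction explicit; your detailed case split is exactly the work the paper leaves implicit.
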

\begin{proof}
From \cref{cla:interval} we can easily derive the following two statements:
\begin{itemize}
   \item $\sup(x) \cup \sup(y)$ is an interval containing the time with maximum success probability,
   \item For every $t_1, t_2 \in \sup(x)$ there must be $t \in \sup(y)$ with $t_1 < t \le t_2$.  
\end{itemize}
These statements immediately imply the claim.
\end{proof}

We now study the particularly simple coinciding equilibria.  In \cref{sec:app_alt} we give a full characterization 
of all Nash equilibria in a symmetric stingy quantum race.

\subsection{Unique coinciding equilibrium}
We first look for Nash equilibria where the mixed strategies of Alice and Bob have the same support.  
If the number of strategies is $K$,
we know by 
\cref{cor:interval} that this support must be a set $\{T, T+1, \ldots, K\}$, for some 
$1 \leq T \leq K$.  
\begin{lemma}
\label{lem:characterize}
Consider a stingy quantum race defined by 
$0  < p_1 < \ldots < p_K \leq 1$ and $0 < P_1 < P_2 < \cdots < P_k$.  
Let $x,y \in \mathbb{R}^K$.  Then $(x,y)$ is a Nash equilibrium for this game 
with support $\{T, T+1, \ldots, K\}$, for some $1 \leq T \leq K$,
if and only if $x$ and $y$ satisfy the following system of equations and inequalities.
\begin{align}
e_t^{\T}Ay & =  e_{T}^{\T}Ay, ~  & \text{for} ~~ T < t \leq K ~, \label{eq:ne1}\\
e_{T-1}^{\T}Ay & \leq e_{T}^{\T}Ay,    & \label{eq:ne2} \\
y_t & =   0  & \text{for} ~~ 0 < t < T ~, \label{eq:zero} \\
y_t & > 0  & \text{for} ~~ T \leq t \leq K ~, \label{eq:ne3} \\
\sum_{t = T}^K y_t &= 1 & \label{eq:ne4} \\
x^{\T}Be_t & =  x^{\T}Be_T, ~  & \text{for} ~~ T < t \leq K ~,\label{eq:ne1x} \\\
x^{\T}Be_{T-1} & \leq x^{\T}Be_T,    & \label{eq:ne2x} \\
x_t & =   0  & \text{for} ~~ 0 < t < T ~, \label{eq:zerox}\\
x_t & > 0  & \text{for} ~~ T \leq t \leq K ~, \\
\sum_{t = T}^K x_t &= 1~ . & \label{eq:ne4x}
\end{align}
\end{lemma}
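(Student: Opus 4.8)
The plan is to simply unwind the definition of Nash equilibrium from the preliminaries: a pair $(x,y)$ with $x,y\in\Delta_K$ is a Nash equilibrium precisely when $\sup(x)\subseteq\br(y)$ and $\sup(y)\subseteq\br(x)$. Under the standing hypothesis that the common support is the interval $\{T,T+1,\ldots,K\}$, the conditions \eqref{eq:zero}, \eqref{eq:ne3}, \eqref{eq:ne4} (and their $x$-counterparts \eqref{eq:zerox} and \eqref{eq:ne4x}) are nothing more than a restatement of ``$y\in\Delta_K$ and $\sup(y)=\{T,\ldots,K\}$'' (resp.\ for $x$). So the real content is to prove that \eqref{eq:ne1}--\eqref{eq:ne2} is equivalent to $\{T,\ldots,K\}\subseteq\br(y)$ and, symmetrically, that \eqref{eq:ne1x}--\eqref{eq:ne2x} is equivalent to $\{T,\ldots,K\}\subseteq\br(x)$.

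For the best-response condition on Alice's side I would use the payoff formula $e_t^{\T}Ay = p_t\big(\sum_{j\in\sup_{\leq t}(y)} y_j\bar P_j + \sum_{j\in\sup_{>t}(y)} y_j\big)$ recorded just before \cref{cla:interval}. The key observation is that for $t<T$ every $j\in\sup(y)$ satisfies $j>t$, so the first sum is empty and, using \eqref{eq:ne4}, $e_t^{\T}Ay = p_t\sum_{j=T}^K y_j = p_t$. Since $p_1<\cdots<p_K$, the payoffs of the pure strategies below $T$ are strictly increasing in $t$, hence among all $t\notin\{T,\ldots,K\}$ (and there are none above $K$) the largest payoff against $y$ is achieved uniquely at $t=T-1$, with value $p_{T-1}=e_{T-1}^{\T}Ay$. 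Consequently $\{T,\ldots,K\}\subseteq\br(y)$ holds if and only if (i) all of the strategies $T,T+1,\ldots,K$ yield the same payoff against $y$, which is exactly \eqref{eq:ne1}, and (ii) that common value is at least $e_{T-1}^{\T}Ay$, which is exactly \eqref{eq:ne2}. This settles both directions for Alice at once; when $T=1$ there are no pure strategies below $T$ and \eqref{eq:ne2} is vacuous.

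The argument for Bob is identical after replacing $A$ by $B$ and the $p_i$ by the $P_i$: for a column $t<T$ and any row $i\in\sup(x)=\{T,\ldots,K\}$ one has $t<i$, so $B(i,t)=P_t$ and hence $x^{\T}Be_t=P_t$, again strictly increasing in $t$ for $t<T$ and maximized at $t=T-1$. Thus \eqref{eq:ne1x} says that columns $T,\ldots,K$ are all tied for Bob's best payoff against $x$, and \eqref{eq:ne2x} says this common value ties or beats column $T-1$; together these are equivalent to $\{T,\ldots,K\}\subseteq\br(x)$. Combining the two best-response equivalences with the simplex-and-support conditions yields the stated ``if and only if.''

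I do not anticipate a serious obstacle here; the only point requiring a little care is the reduction from ``check every $t<T$'' to ``check only $t=T-1$,'' which rests on monotonicity of $(p_t)$ (and $(P_t)$) together with the fact that Bob's (resp.\ Alice's) support lies strictly above $T-1$, which forces every early pure strategy's payoff to collapse to $p_t$ (resp.\ $P_t$). One should also keep in mind the degenerate case $T=1$, where \eqref{eq:ne2}, \eqref{eq:ne2x}, and the index sets in \eqref{eq:zero} and \eqref{eq:zerox} are all empty and the equivalence is immediate.
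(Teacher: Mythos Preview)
Your proposal is correct and follows essentially the same approach as the paper: identify the support/simplex conditions as \eqref{eq:zero}--\eqref{eq:ne4} and \eqref{eq:zerox}--\eqref{eq:ne4x}, and reduce the best-response check outside the support to the single index $T-1$ via the observation that $e_t^{\T}Ay=p_t$ (equivalently, $e_t^{\T}Ay=\tfrac{p_t}{p_{T-1}}e_{T-1}^{\T}Ay$) for $t<T$ together with monotonicity of the $p_t$ (and symmetrically for Bob). Your treatment of the $T=1$ degenerate case is also appropriate.
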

\begin{proof}
\cref{eq:zero}--(\ref{eq:ne4}) and (\ref{eq:zerox})--(\ref{eq:ne4x}) express that $x$ and $y$ are probability distributions with 
support $\{T, T+1, \ldots, K\}$.  The other conditions for a Nash equilibrium are that all strategies in the support of $x$ are best 
responses against $y$ and vice versa.  That all strategies in the support of $x$ are best responses against $y$ means
\begin{align*}
e_t^{\T}Ay & =  e_{T}^{\T}Ay, ~  & \text{for} ~~ T < t \leq K ~, \\
e_{t-1}^{\T}Ay & \leq e_{t}^{\T}Ay,    &  \text{for} ~~ 1 \le t < T \\
\end{align*}
The first equation here is exactly \cref{eq:ne1}.  The inequality here is implied by \cref{eq:ne2}.  This is because 
for $t < T-1$
\[ 
e_{t}^{\T}Ay = \frac{p_{t}}{p_{T-1}} e_{T-1}^{\T} A y 
\]
as $\sup_{\le T-1}(y) = \sup_{\le t}(y)$.  A similar argument show that \cref{eq:ne1x}--(\ref{eq:ne2x}) show that all strategies 
in the support of $y$ are best responses against $x$.
\end{proof}

When $P_K=1$, then Alice has zero payoff on playing time $K$.  Thus as long as $K \ge 2$, 
when $P_K=1$ there is no coinciding Nash equilibrium $(x,y)$ where $\sup(x)=\sup(y)=\{K\}$.  A similar 
argument applies when $p_K=1$.  We will therefore 
exclude the case $T=K$ and either $p_K=1$ or $P_K=1$ for the next definition and \cref{lem:unicity}.

\begin{definition}
We define the values $q_T^A,q_T^B$, for $T = 2, \ldots , K$, and $r_T^A,r_T^B,z_T^A,z_T^B$, for $T = 1, \ldots , K-1$.  The 
values $z_K^A, r_K^A$ are not defined when $T=K, p_K=1$ and $z_K^B, r_K^B$ are not defined when 
$T=K, P_K=1$.  
\begin{align}
q_i^A &= \frac{1}{p_i}\left(\frac{1}{P_{i-1}} - \frac{1}{P_i} \right)  \\
q_i^B &= \frac{1}{P_i}\left(\frac{1}{p_{i-1}} - \frac{1}{p_i} \right) \\
r_T^A &= \frac{1}{\bar{p}_T}\left(\frac{1}{P_K} - \sum_{i=T+1}^K \bar{p}_i q_i^A\right) \\
r_T^B &= \frac{1}{\bar{P}_T}\left(\frac{1}{p_K} - \sum_{i=T+1}^K \bar{P}_i q_i^B\right)  \\
z_T^A &= r_T^A + \sum_{i = T+1}^K q_i^A \\
z_T^B &= r_T^B + \sum_{i = T+1}^K q_i^B 
\end{align}
\end{definition}

\begin{lemma}
\label{lem:unicity}
For every $1\leq T \leq K-1$, and the case $T=K$ and $P_K \ne 1$,
the system of linear equations composed of the \cref{eq:ne1},~(\ref{eq:zero}) and~(\ref{eq:ne4})
of \cref{lem:characterize} has a unique solution given by
\[
y_t =
\begin{cases}
{r_{T}^B}/{z_{T}^B}& \text{if} ~~~ t= T,\\
{q_t^B}/{z_{T}^B}& \text{if} ~~~ T < t \leq K,\\
0 & \text{otherwise} \enspace .
\end{cases}
\]
\end{lemma}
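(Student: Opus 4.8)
The plan is to reduce the linear system formed by \cref{eq:ne1}, \cref{eq:zero}, \cref{eq:ne4} to a telescoping recurrence for the coordinates $y_T, \ldots, y_K$, solve that recurrence up to a single free scalar $V$, and then use the normalization condition to pin $V$ down uniquely. This simultaneously produces the formula and establishes uniqueness.

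Concretely, I would proceed as follows. \cref{eq:zero} immediately forces $y_1 = \cdots = y_{T-1} = 0$, so the remaining unknowns are $y_T, \ldots, y_K$ and $\sup(y) \subseteq \{T,\ldots,K\}$. Using the payoff expression for $e_t^{\T}Ay$ recorded just before \cref{cla:interval}, for $T \le t \le K$ I write $e_t^{\T}Ay = p_t f(t)$ with $f(t) = \sum_{j=T}^{t} y_j \bar P_j + \sum_{j=t+1}^{K} y_j$. Setting $V := e_T^{\T}Ay$, the equations \cref{eq:ne1} (together with the tautology at $t=T$) say exactly $f(t) = V/p_t$ for all $T \le t \le K$. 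Two elementary facts then finish the derivation: (i) $f(t) - f(t-1) = -P_t\, y_t$ for $T < t \le K$, which, compared with $f(t)-f(t-1) = V(1/p_t - 1/p_{t-1})$, yields $y_t = V q_t^B$; and (ii) $f(K) = \sum_{j=T}^{K} \bar P_j y_j = V/p_K$, into which I substitute the values from (i) and solve for $y_T$, obtaining $y_T = V r_T^B$. Solving for $y_T$ in (ii) is legitimate because $\bar P_T \neq 0$: if $T<K$ then $P_T < P_K \le 1$, and the case $T=K$ is precisely where the hypothesis $P_K \neq 1$ is used. At this stage every solution has the form $y = V\,(0,\ldots,0,r_T^B,q_{T+1}^B,\ldots,q_K^B)$, and \cref{eq:ne4} reads $V z_T^B = 1$.

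The only point that requires a genuine argument — and the one I expect to be the main, if still short, obstacle — is that this last equation determines $V$ uniquely, i.e.\ that $z_T^B \neq 0$. I would prove the stronger claim $z_T^B > 0$: since the $p_i$ are strictly increasing, $q_i^B = \frac{1}{P_i}\bigl(\frac{1}{p_{i-1}} - \frac{1}{p_i}\bigr) > 0$ for every $i$, and rewriting the definition gives
\[
z_T^B = \frac{1}{\bar P_T\, p_K} + \sum_{i=T+1}^{K} q_i^B \cdot \frac{P_i - P_T}{\bar P_T},
\]
where every summand is nonnegative and the first term is strictly positive. Hence $V = 1/z_T^B$ is well defined, so the candidate $y$ in the statement is the only possible solution; substituting it back — that is, reversing the equalities in (i)–(ii), which are all reversible — confirms it does satisfy \cref{eq:ne1}, \cref{eq:zero} and \cref{eq:ne4}. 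I would also remark that when $T=K$ the system degenerates, \cref{eq:ne1} being vacuous and \cref{eq:ne4} reducing to $y_K=1$, and the claimed formula correctly returns $y_K = r_K^B/z_K^B = 1$, which is consistent with the convention that $z_K^B, r_K^B$ are defined exactly when $P_K \neq 1$.
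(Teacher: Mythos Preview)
Your proof is correct and follows essentially the same route as the paper: the paper writes the system in matrix form and subtracts consecutive rows, which is exactly your telescoping identity $f(t)-f(t-1)=-P_t y_t$; both then read off $y_t=q_t^B$ for $t>T$ and $y_T=r_T^B$ after scaling so that the common payoff equals~$1$, and finally normalize. Your presentation is in fact slightly more careful: the paper tacitly assumes one can divide by $\sum_t y_t=z_T^B$ without checking it is nonzero, whereas you establish $z_T^B>0$ explicitly via the rewriting $z_T^B=\tfrac{1}{\bar P_T p_K}+\sum_{i>T}q_i^B\,\tfrac{P_i-P_T}{\bar P_T}$.
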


\begin{proof}
For convenience, we drop the normalization condition $\sum_{t = T}^K y_t  = 1$ and 
instead scale $y$ such that the expected payoff of a best response is 1.  That is, we replace \cref{eq:ne4} by
\cref{eq:ne5}:
\begin{equation}
\label{eq:ne5}
e_{T}^{\T}Ay =1.
\end{equation}
Clearly the solutions of \cref{eq:ne1},(\ref{eq:zero}),(\ref{eq:ne4}) and the solutions of 
\cref{eq:ne1},(\ref{eq:zero}),(\ref{eq:ne5}) differ only by a constant multiplicative factor, and from a solution of the latter system
a solution of the former one can be obtained by dividing it coordinate-wise by $\sum_{t = T}^K y_t$.

We first want to find the solutions of \cref{eq:ne1} and~(\ref{eq:ne5}). Together, they can be expressed in matrix form
as:

\begin{equation}
\label{eq:main}
\begin{bmatrix}
p_T\bar{P}_T & p_T & p_T & \cdots & p_T \\
p_{T+1}\bar{P}_T & p_{T+1} \bar{P}_{T+1} & p_{T+1} & \cdots & p_{T+1} \\
p_{T+2}\bar{P}_T & p_{T+2}\bar{P}_{T+1} & p_{T+2}\bar{P}_{T+2} & \cdots & p_{T+2} \\
\vdots & & &\ddots & \vdots \\
p_K\bar{P}_T & p_K \bar{p}_{T+1} & p_K\bar{P}_{T+2} & \cdots & p_K\bar{P}_K \\
\end{bmatrix}
\begin{bmatrix}
y_T \\ y_{T+1} \\ y_{T+2} \\ \vdots \\ y_K
\end{bmatrix}
=
\begin{bmatrix}
1 \\ 1 \\ 1 \\ \vdots \\ 1
\end{bmatrix}.
\end{equation}
We can simplify the above system of linear equations as follows:
\[
\label{eq:main_simp}
\begin{bmatrix}
\bar{P}_T & 1 & 1 & \cdots & 1 \\
\bar{P}_T & \bar{P}_{T+1} & 1 & \cdots & 1 \\
\bar{P}_T & \bar{P}_{T+1} & \bar{P}_{T+2} & \cdots & 1 \\
\vdots & & & \ddots & \vdots \\
\bar{P}_T & \bar{P}_{T+1} & \bar{P}_{T+2} & \cdots & \bar{P}_K \\
\end{bmatrix}
\begin{bmatrix}
y_T \\ y_{T+1} \\ y_{T+2} \\ \vdots \\ y_K
\end{bmatrix}
=
\begin{bmatrix}
1/p_T \\ 1/p_{T+1} \\ 1/p_{T+2} \\ \vdots \\ 1/p_K
\end{bmatrix}.
\]
Now for each row before the last we subtract from it the next row, which gives:
\[
\begin{bmatrix}
0 & P_{T+1} & 0 & \cdots & 0 \\
0 & 0 & P_{T+2} & \cdots & 0 \\
0 & 0 & 0 & \cdots & 0 \\
\vdots & & & \ddots & \vdots \\
\bar{P}_T & \bar{P}_{T+1} & \bar{P}_{T+2} & \cdots &\bar{P}_K \\
\end{bmatrix}
\begin{bmatrix}
y_T \\ y_{T+1} \\ y_{T+2} \\ \vdots \\ y_K
\end{bmatrix}
=
\begin{bmatrix}
1/p_T-1/p_{T+1} \\ 1/p_{T+1}-1/p_{T+2} \\ 1/p_{T+2} - 1/p_{T+3} \\ \vdots \\ 1/p_K
\end{bmatrix}.
\]
From this we can see that, for every $1 \leq T \leq K$, \cref{eq:main} has
a unique solution, given by $y_T = r_T^B$, and $y_t = q_t^B$, for $T < t \le K$.
\end{proof}
\begin{definition}
\label{def:tstar}
Define $T^*_A$ (respectively $T^*_B$) as the smallest integer $1 \leq T \leq K$ such that $r_T^A > 0$ 
(respectively $r_T^B > 0$).  When $T^*_A = T^*_B$ we denote their common value as $T^*$.  
\end{definition}
This definition makes sense as in the case $p_K=1$ (when $r_K^A$ is not defined) we see that $r_{K-1}^A > 0$ and otherwise 
$r_K^A = \tfrac{1}{\bar{p}_K P_K} >0$.  A similar argument applies for $r_K^B$.

The following theorem characterizes the coinciding Nash equilibria in a stingy quantum race.
\begin{lemma}
\label{thm:unique1-5}
\cref{eq:ne1}--(\ref{eq:ne4}) have a solution if and only if $T=T^*_B$.  In the case $T=T^*_B$, the solution is 
unique and is given by 
$$
y_{t} =
\begin{cases}
{r_{T^*_B}^B}/{z_{T^*_B}^B}& \text{if} ~~~ t= T^*_B,\\
{q_t^B}/{z_{T^*_B}^B}& \text{if} ~~~ T^*_B < t \leq K. 
\end{cases}
$$
\end{lemma}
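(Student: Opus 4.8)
The plan is to combine the two structural lemmas just proved: \cref{lem:characterize} reduces the existence of a coinciding Nash equilibrium with support $\{T,\dots,K\}$ to the solvability of the system \cref{eq:ne1}--(\ref{eq:ne4}) together with the inequality \cref{eq:ne2} (and their Bob-side counterparts, which for this statement I only need to track on Alice's side, since the lemma is phrased in terms of $y$; by symmetry of the argument everything transfers), while \cref{lem:unicity} already gives the \emph{unique} candidate vector $y$ solving the equality part \cref{eq:ne1},(\ref{eq:zero}),(\ref{eq:ne4}), namely $y_T=r_T^B/z_T^B$, $y_t=q_t^B/z_T^B$. So the only thing left to decide is: for which $T$ does this candidate actually satisfy the remaining constraints, i.e.\ the nonnegativity/positivity conditions \cref{eq:ne3} and the best-response inequality \cref{eq:ne2}? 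I would show these two are equivalent to the single condition $T=T^*_B$.

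First I would handle the positivity condition \cref{eq:ne3}. Observe from the definition that $q_i^B = \frac{1}{P_i}\left(\frac{1}{p_{i-1}}-\frac{1}{p_i}\right) > 0$ for all $i$, since the $p_i$ are strictly increasing; hence $\sum_{i=T+1}^K q_i^B > 0$ always. Therefore, once $r_T^B > 0$, we get $z_T^B = r_T^B + \sum_{i=T+1}^K q_i^B > 0$, so all the entries $y_T = r_T^B/z_T^B$ and $y_t = q_t^B/z_T^B$ are strictly positive, i.e.\ \cref{eq:ne3} holds; conversely if $r_T^B \le 0$ then $y_T \le 0$ (after normalizing by the positive quantity $z_T^B$, or noting $z_T^B$ could even vanish or be negative), so the candidate is not a valid probability distribution with full support. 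This already pins down that a valid $y$ requires $r_T^B > 0$. Then I would invoke \cref{def:tstar} and the remark following \cref{lem:unicity} (which guarantees $r_K^B>0$ in the undefined-endpoint convention, so $T^*_B$ is well defined) together with the key monotonicity fact that $r_T^B > 0$ holds for $T = T^*_B$ but fails for all $T < T^*_B$ by minimality — so I need to rule out $T > T^*_B$ as well, and that is exactly what the best-response inequality \cref{eq:ne2} does.

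The main obstacle, and the technical heart of the proof, is establishing that \cref{eq:ne2} holds precisely when $T \le T^*_B$ (equivalently, combined with the above, exactly at $T = T^*_B$). Here I would unfold $e_{T-1}^{\T}Ay$ using the payoff formula: since $\sup_{\le T-1}(y) = \emptyset$ and $\sup_{>T-1}(y) = \{T,\dots,K\}$, we get $e_{T-1}^{\T}Ay = p_{T-1}\sum_{t=T}^K y_t$, while $e_T^{\T}Ay = p_T\left(y_T\bar P_T + \sum_{t=T+1}^K y_t\right)$, which by construction equals $1/z_T^B$ after the normalization (or, in normalized form, the common best-response value). The inequality \cref{eq:ne2} then becomes a clean comparison that I expect to reduce — after substituting the explicit $y_t$ and simplifying the telescoping sums that already appeared in the proof of \cref{lem:unicity} — to an inequality equivalent to $r_{T-1}^B \le 0$, i.e.\ to $T-1 < T^*_B$. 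I would verify this by relating $r_{T-1}^B$ and $r_T^B$ via their defining recursion: from the formula, $\bar P_{T-1} r_{T-1}^B = \frac{1}{p_K} - \sum_{i=T}^K \bar P_i q_i^B = \bar P_T r_T^B - \bar P_T q_T^B + (\text{terms})$, giving a one-step relation that lets me convert the statement "\cref{eq:ne2} holds at $T$" into "$r_{T-1}^B \le 0$", which by the minimality in \cref{def:tstar} holds iff $T \le T^*_B$. Putting the two halves together: a solution to \cref{eq:ne1}--(\ref{eq:ne4}) exists iff $r_T^B>0$ and \cref{eq:ne2} holds, iff $T \ge T^*_B$ and $T \le T^*_B$, iff $T = T^*_B$; and in that case \cref{lem:unicity} gives uniqueness and the stated closed form with $T$ replaced by $T^*_B$. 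The only delicate point to get right is the bookkeeping at the endpoint $T=K$ and the $p_K=1$ / $P_K=1$ cases, which the surrounding discussion has already flagged and which I would dispatch with the observations recorded there (that $r_{K-1}^B>0$ when $P_K=1$, and $r_K^B = \tfrac{1}{\bar P_K p_K} > 0$ otherwise).
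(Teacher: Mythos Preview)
Your proposal is correct and follows essentially the same approach as the paper: invoke \cref{lem:unicity} for the unique candidate, show \cref{eq:ne3} is equivalent to $r_T^B>0$ (hence $T\ge T^*_B$), and reduce \cref{eq:ne2} via the recursion $\bar P_{T-1}r_{T-1}^B=\bar P_T(r_T^B-q_T^B)$ to the condition $r_{T-1}^B\le 0$ (hence $T\le T^*_B$). The one point you gloss over is that both equivalences rely on $r_T^B$ being an \emph{increasing} function of $T$, not just on the minimality in \cref{def:tstar}; the paper states and uses this monotonicity explicitly, and you should too.
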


\begin{proof}
If $K \ge 2$ and $P_K = 1$ then $e_K^TAe_K =0$ and $e_{K-1}^TA e_K > 0$.  Thus \cref{eq:ne1}--(\ref{eq:ne4}) 
do not have a solution with $\sup(y) = \{K\}$ in this case.  We therefore exclude the case $T=K$ and $p_K=1$ from the discussion 
below.  

By \cref{lem:unicity} we know that for every $1 \leq T \leq K$, there exists a unique solution to 
\cref{eq:ne1},~(\ref{eq:zero}) and~(\ref{eq:ne4}). 
For a fixed $T$, the only possible solution is given by \cref{lem:unicity}.
We now examine for which $T$ is it true that the this solution $y$
also satisfies \cref{eq:ne2} and~(\ref{eq:ne3}).
First we claim that \cref{eq:ne3} is satisfied if and only if $T \geq T^*_B$.
To see that, observe that
as the $p_t$ form an increasing sequence, we have $q_t^B> 0$, for all $2 \leq t \le K$.  
Therefore \cref{eq:ne3} is satisfied if and only if $r_T^B > 0$ which, because 
$r_T^B$ is an increasing function of $T$, holds if and only if $T \geq T^*_B$.

We now turn to \cref{eq:ne2}, and we claim that it is satisfied if and only if 
$T \leq T^*_B$.  Substituting in the values of $y$ from \cref{lem:unicity}, we can express 
\cref{eq:ne2} as
\begin{equation}
\label{eq:less}
p_{T-1} \left(  r_{T}^B   +  \sum_{i = T+1}^K  q_i^B \right) \leq 1 .
\end{equation}
Let us also observe that the first row of \cref{eq:main} applied with the start of the interval at $T - 1$ gives
\begin{equation}
\label{eq:equal}
p_{T-1} \left( \bar{P}_{T-1} r_{T-1}^B   +  \sum_{i = T}^K  q_i^B \right) = 1 ,
\end{equation}
for $T = 2, \ldots, K$.

Comparing \cref{eq:less} and~(\ref{eq:equal}), we see that \cref{eq:ne2} holds if and only if
\begin{equation}
\label{eq:final}
r_T^B \leq  \bar{P}_{T-1} r_{T-1}^B   + q_T^B.
\end{equation}
Note that for every $2 \leq T \leq K$, we have
\begin{equation}
\label{eq:yT}
\bar{P}_{T-1} r_{T-1}^B = {\bar{P}_T} (r_T^B - q_T^B) \enspace.
\end{equation}
Replacing $\bar{P}_{T-1} r_{T-1}^B$ by $ {\bar{P}_T} (r_T^B - q_T^B)$ in the right hand side of \cref{eq:final}, we get
that \cref{eq:ne2} is equivalent to
\begin{equation}
\label{eq:finall}
r_T^B \leq  \bar{P}_{T} r_{T}^B   + P_Tq_T^B.
\end{equation}
The right hand side of \cref{eq:finall} is a convex combination of $r_T^B$ and $q_T^B$, therefore it is at least $r_T$
if and only if 
\begin{equation}
\label{eq:finalll}
r_T^B - q_T^B \leq 0.
\end{equation}
Looking again at \cref{eq:yT}, since both $\bar{P}_{T-1}$ and $\bar{P}_{T}$ are positive, we can deduce that 
\cref{eq:finalll}, and therefore also \cref{eq:ne2} holds exactly when
\begin{equation}
\label{eq:lala}
r_{T-1}^B \leq  0.
\end{equation}
By the definition of $T^*_B$, and the fact that $r_T^B$ is an increasing function of $T$, 
\cref{eq:lala} holds when $T \leq T^*_B$.  Thus we conclude that \cref{eq:ne1}--(\ref{eq:ne4}) are 
satisfied if and only if $T=T^*_B$, in which case the solution is given as in \cref{lem:unicity}.
\end{proof}

The following theorem characterizes the coinciding Nash equilibria in a stingy quantum race.
\begin{theorem}
\label{thm:symmetric}
A stingy quantum race defined by probabilities 
$0  < p_1 < \ldots < p_K \leq 1$ and $0  < P_1 < \ldots < P_K \leq 1$ has a coinciding Nash equilibrium 
if and only if $T_A^* = T_B^*$. In this case, letting $T^* = T^*_A = T^*_B$ there is a unique coinciding equilibrium given by 
\[
x_{t} =
\begin{cases}
{r_{T^*}^A}/{z_{T^*}^A}& \text{if} ~~~ t= T^*,\\
{q_t^A}/{z_{T^*}^A}& \text{if} ~~~ T^* < t \leq K
\end{cases} , \qquad
y_{t} =
\begin{cases}
{r_{T^*}^B}/{z_{T^*}^B}& \text{if} ~~~ t= T^*,\\
{q_t^B}/{z_{T^*}^B}& \text{if} ~~~ T^* < t \leq K \enspace. 
\end{cases} 
\]
In particular, when $p_i = P_i$ for all $1 \le i \le K$ then $(x,x)$ is the unique coinciding Nash equilibrium.
\end{theorem}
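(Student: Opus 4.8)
The plan is to combine \cref{lem:characterize}, which reduces the existence of a coinciding Nash equilibrium with support $\{T,T+1,\ldots,K\}$ to the solvability of two \emph{decoupled} systems --- one in $y$, namely Alice's indifference/best-response conditions \cref{eq:ne1}--(\ref{eq:ne4}), and one in $x$, namely Bob's conditions \cref{eq:ne1x}--(\ref{eq:ne4x}) --- with \cref{thm:unique1-5}, which already settles the $y$-system: it is solvable precisely when $T=T_B^*$, and then the solution is the unique one displayed there. The remaining work is to dispatch the $x$-system and assemble the pieces.

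First I would observe that Bob's best-response conditions are the exact mirror image of Alice's under the exchange $p_i \leftrightarrow P_i$ together with $x \leftrightarrow y$. Concretely, $x^{\T}B_0 e_t = P_t\big(\sum_{i>t} x_i + \sum_{i\le t} x_i \bar{p}_i\big)$ has the same shape as $e_t^{\T}A_0 y = p_t\big(\sum_{j>t} y_j + \sum_{j\le t} y_j \bar{P}_j\big)$ with the two probability sequences interchanged. Since swapping $p\leftrightarrow P$ carries $q_i^B,r_T^B,z_T^B$ to $q_i^A,r_T^A,z_T^A$ and hence $T_B^*$ to $T_A^*$, \cref{thm:unique1-5} applied to this mirrored game gives that \cref{eq:ne1x}--(\ref{eq:ne4x}) is solvable precisely when $T=T_A^*$, with the unique solution $x_{T_A^*}=r_{T_A^*}^A/z_{T_A^*}^A$ and $x_t=q_t^A/z_{T_A^*}^A$ for $T_A^*<t\le K$. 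This step is the only one requiring genuine care: the symmetry must be checked at the level of the payoff matrices $A_0$ and $B_0$, not merely asserted.

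Next I would assemble. By \cref{cor:support_structure} any coinciding Nash equilibrium has support $\{T,\ldots,K\}$ for some $1\le T\le K$, and by \cref{lem:characterize} such an equilibrium exists for a given $T$ iff both systems are solvable for that $T$. By the two facts above this forces $T=T_B^*$ and $T=T_A^*$ simultaneously, so a coinciding equilibrium exists iff $T_A^*=T_B^*$; when it does, the common value $T^*$ is determined and the solutions for $x$ and for $y$ are each unique, which yields both uniqueness of the coinciding equilibrium and the displayed formulas. (Homogeneity of the equalities \cref{eq:ne1}, (\ref{eq:ne1x}) and of the inequalities \cref{eq:ne2}, (\ref{eq:ne2x}) in $y$, respectively $x$, ensures the normalized vectors remain valid solutions of \cref{lem:characterize}.) Finally, for the symmetric case $p_i=P_i$ for all $i$, one checks directly that $q_i^A=q_i^B$, $r_T^A=r_T^B$, $z_T^A=z_T^B$, so $T_A^*=T_B^*$ holds automatically and the formulas collapse to $x=y$; hence $(x,x)$ is the unique coinciding Nash equilibrium.

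Beyond the symmetry reduction, the only bookkeeping is around the degenerate endpoints $p_K=1$ or $P_K=1$, where some of $z_K^A,r_K^A,z_K^B,r_K^B$ are undefined; there one excludes the candidate support $\{K\}$ exactly as in the proof of \cref{thm:unique1-5} (since $e_K^{\T}A_0e_K=0$, resp.\ the analogous identity for $B_0$, rules it out) and notes that $T^*\le K-1$ in those cases. I do not expect a genuine obstacle beyond this.
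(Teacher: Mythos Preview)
Your proposal is correct and follows essentially the same approach as the paper: decouple via \cref{lem:characterize}, apply \cref{thm:unique1-5} to the $y$-system, apply its mirror (under $p\leftrightarrow P$) to the $x$-system, and conclude that a coinciding equilibrium exists iff $T_A^*=T_B^*$, with the displayed unique solution. The paper's proof is slightly terser---it simply says ``apply \cref{thm:unique1-5} to (the transpose of) \cref{eq:ne1x}--(\ref{eq:ne4x})'' without spelling out the symmetry as carefully as you do, and it does not separately discuss the degenerate endpoints---but the logical structure is identical.
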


\begin{proof}
By \cref{lem:characterize} a coinciding Nash equilibrium $(x,y)$ supported on $\{T, T+1, \ldots, K\}$ must 
satisfy \cref{eq:ne1}--(\ref{eq:ne4x}).  By \cref{thm:unique1-5}, \cref{eq:ne1}--(\ref{eq:ne4}) 
are satisfied
if and only if $T = T^*_B$ and $y$ is given as in the Lemma.  We can also apply \cref{thm:unique1-5} to 
(the transpose of) \cref{eq:ne1x}--(\ref{eq:ne4x}) to see that they have a solution if and only if 
$T = T^*_A$ and $x$ is given by 
$$
x_t =
\begin{cases}
{r_{T^*_A}^A}/{z_{T^*_A}^A}& \text{if} ~~~ t= T^*_A,\\
{q_t^A}/{z_{T^*_A}^A}& \text{if} ~~~ T^*_A < t \leq K,\\
0 & \text{otherwise}.
\end{cases}
$$
As $x$ and $y$ must have the same support in a coinciding Nash equilibrium, there can only exist a coinciding 
Nash equilibrium if $T^*_A = T^*_B$.  

When $p_i = P_i$ for all $1 \le i \le K$ then clearly $r_T^A = r_T^B$ and $q_i^A = q_i^B$ and it will always 
be the case that $T^*_A = T^*_B$.  Thus there will always exist a Nash equilibrium in this case, given by the 
unique solution to \cref{eq:ne1}--(\ref{eq:ne4x}).
\end{proof}

\subsection{Payoff and collision probability}
In this section we will explore the consequences of the coinciding Nash equilibrium we have found for 
the payoff of the game and for the probability that the two players win at the same time, the collision probability.
For these results, we will only consider the symmetric case when
there is always a unique symmetric equilibrium whose support begins at $T^*=T^*_A = T^*_B$.  Note that the payoff 
for each player with this strategy is $\frac{1}{z_{T^*}}$.  Since a player receives payoff $1$ upon winning, $\frac{1}{z_{T^*}}$ 
is also exactly the each player's winning probability.  

To investigate the collision probability, we will also make the following definitions.
\begin{definition}[Unnormalized collision probability]
Define
\[
\sigma(T) = p_T^2 r_T^2 + \sum_{i=T+1}^K p_i^2 q_i^2 \enspace.
\]
\end{definition}
With this definition, $\frac{1}{z_{T^*}^2}\sigma(T^*)$ is the collision probability we are interested in.

First we analyze the payoff in a symmetric stingy quantum race.
\begin{theorem}
\label{thm:tstar_upper}
Let $p_1 < p_2 < \cdots < p_K$ define a stingy symmetric quantum race.  Then 
\[
z_{T^*}=1 + \sqrt{1+\tfrac{1}{p_K^2}+\sigma(T^*)} \enspace .
\]
In particular, 
\[
\frac{1}{z_{T^*}} \le \sqrt{2}-1 \enspace .
\]
\end{theorem}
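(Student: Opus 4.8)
The plan is to evaluate the bilinear form $v\mapsto v^\T(A+A^\T)v$ at the (unnormalized) coinciding equilibrium and exploit a simple low-rank-plus-diagonal decomposition of $A+A^\T$, where $A$ denotes the stingy payoff matrix. First I would record that decomposition: since $P_i=p_i$, checking the three cases in the definition of $A$ shows $A(s,t)+A(t,s)=p_s+p_t-p_sp_t=1-\bar p_s\bar p_t$ for $s\ne t$, whereas on the diagonal $2A(s,s)=2p_s\bar p_s=(1-\bar p_s^2)-p_s^2$. Hence, with $\bar p=(\bar p_1,\dots,\bar p_K)^\T$ and $D=\diag(p_1^2,\dots,p_K^2)$,
\[
A+A^\T=\mathbf 1\mathbf 1^\T-\bar p\,\bar p^\T-D .
\]
(This is also the source of the negative-semidefiniteness of $A+A^\T$ mentioned earlier.)

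Next I would feed the equilibrium into this identity. Let $y\in\R^K$ be supported on $\{T^*,\dots,K\}$ with $y_{T^*}=r_{T^*}$ and $y_i=q_i$ for $T^*<i\le K$; this is $z_{T^*}$ times the coinciding equilibrium of \cref{thm:symmetric}, and since that equilibrium has payoff $1/z_{T^*}$ it follows that $e_t^\T Ay=1$ for every $t$ in the support (equivalently, one can cite \cref{lem:unicity}). Now four quantities fall out: $\mathbf 1^\T y=r_{T^*}+\sum_{i>T^*}q_i=z_{T^*}$ by definition; $\bar p^\T y=\bar p_{T^*}r_{T^*}+\sum_{i>T^*}\bar p_iq_i=1/p_K$, which is precisely the defining identity for $r_{T^*}$ (equivalently the equation $e_K^\T Ay=p_K(\bar p^\T y)=1$); $y^\T Dy=p_{T^*}^2r_{T^*}^2+\sum_{i>T^*}p_i^2q_i^2=\sigma(T^*)$; and $y^\T Ay=\sum_t y_t(e_t^\T Ay)=\sum_t y_t=z_{T^*}$, so that $y^\T(A+A^\T)y=2\,y^\T Ay=2z_{T^*}$ (using $y^\T Ay=y^\T A^\T y$).

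Equating the two evaluations of $y^\T(A+A^\T)y$ then yields
\[
2z_{T^*}=(\mathbf 1^\T y)^2-(\bar p^\T y)^2-y^\T Dy=z_{T^*}^2-\tfrac{1}{p_K^2}-\sigma(T^*),
\]
i.e.\ $z_{T^*}^2-2z_{T^*}-\tfrac{1}{p_K^2}-\sigma(T^*)=0$. Since $r_{T^*}>0$ and each $q_i>0$ we have $z_{T^*}>0$, which excludes the negative root $1-\sqrt{1+\tfrac{1}{p_K^2}+\sigma(T^*)}$; hence $z_{T^*}=1+\sqrt{1+\tfrac{1}{p_K^2}+\sigma(T^*)}$. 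For the ``in particular'' part, $p_K\le 1$ gives $\tfrac{1}{p_K^2}\ge 1$ and $\sigma(T^*)\ge 0$, so $z_{T^*}\ge 1+\sqrt2$ and therefore $\tfrac{1}{z_{T^*}}\le\tfrac{1}{1+\sqrt2}=\sqrt2-1$.

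The only genuinely non-mechanical points are spotting the decomposition of $A+A^\T$ and realizing that evaluating the resulting quadratic form at the equilibrium vector produces a closed (self-referential) quadratic equation for $z_{T^*}$; everything after that is bookkeeping. The mild care points are the boundary cases: when $p_K=1$ one has $T^*\le K-1$ and $\bar p_K=0$, and when $T^*=K$ the sums over $i>T^*$ are empty — in both cases the identities above continue to hold verbatim.
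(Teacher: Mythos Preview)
Your proof is correct. The decomposition $A+A^\T=\mathbf 1\mathbf 1^\T-\bar p\,\bar p^\T-D$ holds (and is equivalent to $p\mathbf 1^\T+\mathbf 1 p^\T-pp^\T-D$, which the paper later uses for the non-stingy matrix), and the four evaluations $\mathbf 1^\T y=z_{T^*}$, $\bar p^\T y=1/p_K$, $y^\T Dy=\sigma(T^*)$, $y^\T Ay=z_{T^*}$ are all correct, yielding the same quadratic $z_{T^*}^2-2z_{T^*}-\tfrac{1}{p_K^2}-\sigma(T^*)=0$.

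The paper reaches this equation by a probabilistic decomposition rather than a matrix identity: it partitions the outcome space into ``Alice wins'', ``Bob wins'', ``both succeed at the same time'', and ``both fail'', with respective probabilities $1/z_{T^*}$, $1/z_{T^*}$, $\sigma(T^*)/z_{T^*}^2$, and $(\bar p^\T y/z_{T^*})^2=1/(p_K z_{T^*})^2$, and uses that these sum to $1$. Your algebraic route is the same identity written entrywise --- the equation $A(s,t)+A(t,s)=1-\bar p_s\bar p_t-[s{=}t]\,p_s^2$ is exactly the statement that the winning probabilities plus the failure probability plus the collision probability equal $1$ --- so the two arguments are really two presentations of one computation. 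What your framing buys is a direct link to the rank-two-plus-diagonal structure exploited later in \cref{thm:payoff_upper}; what the paper's framing buys is that each term has an immediate probabilistic meaning without any matrix bookkeeping.
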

\begin{proof}
We look at all different possible outcomes of the race: Alice wins, Bob wins, or no one wins.  There are two 
different ways in which no one wins: Alice and Bob both succeed at the same time, or, no one 
succeeds.  The probability that Alice wins is $1/z_{T^*}$, and the same for Bob.  Thus we have
\begin{align*}
1 &= \frac{2}{z_{T^*}} + \frac{\sigma(T^*)}{z_{T^*}^2} + 
\left(\frac{1}{z_T^*} \left( \bar{p}_{T^*}r_{T^*}+\sum_{i=T^*+1}^K \bar{p}_i q_i \right)\right)^2 \\
&= \frac{2}{z_{T^*}} + \frac{\sigma(T^*)}{z_{T^*}^2} + \left(\frac{1}{p_Kz_{T^*}} \right)^2 \enspace .
\end{align*}
Taking the positive root gives
\[
z_{T^*} = 1 + \sqrt{1+\tfrac{1}{p_K^2}+\sigma(T^*)} \enspace .
\]
The ``in particular'' statement follows by noting $z_{T^*} \ge 1 + \sqrt{2}$.  
\end{proof}

\begin{corollary}
\label{cor:ptstar}
If $T^* \ge 2$ then
\[
p_{T^*-1} \le \sqrt{2}-1 \enspace .
\]
\end{corollary}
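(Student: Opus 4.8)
The plan is to combine \cref{thm:tstar_upper}, which gives $\tfrac{1}{z_{T^*}}\le\sqrt{2}-1$, with the inequality $p_{T^*-1}\le\tfrac{1}{z_{T^*}}$, which is what I would actually establish. Throughout I work in the symmetric case, writing $r_T,q_T,z_T$ for the common values of $r_T^A=r_T^B$, $q_T^A=q_T^B$, and $z_T^A=z_T^B$.

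First I would check that the relevant quantities are all well defined: $q_{T^*}$ is defined since $T^*\ge 2$, while $z_{T^*}$ and $r_{T^*}$ are defined since $T^*=K$ is impossible when $p_K=1$ (as observed right after \cref{def:tstar}, $r_{K-1}>0$ in that case). The same remark shows that $T^*-1$ is a legitimate index. I can then apply, at $T=T^*$, two identities established in the proof of \cref{thm:unique1-5}: \cref{eq:equal} (which is the first row of \cref{eq:main} with the interval starting at $T^*-1$) and \cref{eq:yT}. Specialized to the symmetric case these read
\[
p_{T^*-1}\Bigl(\bar p_{T^*-1} r_{T^*-1} + \sum_{i=T^*}^K q_i\Bigr) = 1
\qquad\text{and}\qquad
\bar p_{T^*-1} r_{T^*-1} = \bar p_{T^*}\bigl(r_{T^*}-q_{T^*}\bigr).
\]

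Next I would substitute $\sum_{i=T^*+1}^K q_i = z_{T^*}-r_{T^*}$ (from the definition of $z_{T^*}$) together with the second identity into the first, simplifying the left-hand side to
\[
p_{T^*-1}\bigl(z_{T^*} - p_{T^*}(r_{T^*}-q_{T^*})\bigr) = 1 .
\]
Finally, the minimality in \cref{def:tstar} together with $T^*\ge2$ gives $r_{T^*-1}\le 0$, and since $\bar p_{T^*-1}>0$ and $\bar p_{T^*}>0$ the second identity forces $r_{T^*}-q_{T^*}\le 0$. Hence $z_{T^*}-p_{T^*}(r_{T^*}-q_{T^*})\ge z_{T^*}$, so the displayed equation yields $p_{T^*-1}z_{T^*}\le 1$, i.e.\ $p_{T^*-1}\le\tfrac{1}{z_{T^*}}\le\sqrt{2}-1$ by \cref{thm:tstar_upper}.

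I expect the only delicate point to be the algebraic rearrangement in the middle step: the term $\bar p_{T^*-1}r_{T^*-1}$ is nonpositive, so bounding it crudely by $0$ would push the estimate in the wrong direction; routing it through \cref{eq:yT} is precisely what makes it cancel against $q_{T^*}$ and leave $z_{T^*}$ plus a manifestly nonnegative correction. Verifying that every indexed quantity is defined (the $p_K=1$ corner case) is routine bookkeeping.
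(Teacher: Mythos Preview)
Your argument is correct. Both you and the paper reduce to the same inequality $p_{T^*-1}\,z_{T^*}\le 1$ and then invoke \cref{thm:tstar_upper}; the difference is in how that inequality is obtained. The paper does it in one line by a game-theoretic observation: if Bob deviates to time $T^*-1$ against the equilibrium strategy $x$, his payoff is exactly $p_{T^*-1}$ (he always moves first since $T^*-1$ precedes every element of $\sup(x)$), and this must be at most the equilibrium payoff $\tfrac{1}{z_{T^*}}$. You instead rederive this same fact algebraically from the identities \eqref{eq:equal} and \eqref{eq:yT} inside the proof of \cref{thm:unique1-5}, together with $r_{T^*-1}\le 0$. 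Your route is sound and self-contained, but it is essentially unpacking by hand what the Nash condition \eqref{eq:ne2} already says; the paper's conceptual reading of ``Bob plays $T^*-1$'' gets there without any algebra.
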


\begin{proof}
As can be seen from Bob playing time $T^*-1$, we 
have $p_{T^*-1} z_{T^*} \le 1$, thus $p_{T^*-1} \le \sqrt{2}-1$ by \cref{thm:tstar_upper}.
\end{proof}

Although \cref{thm:tstar_upper} gives an exact expression for the payoff, we would like to get a general lower bound 
on the payoff.  This requires showing an upper bound on the collision probability.  Showing an upper bound on the collision 
probability is also important for the application to Bitcoin, to estimate the forking probability amongst quantum miners.

The first step to upper bounding the collision probability is to get a rough lower bound on $p_{T^*}$.  This is our initial 
bootstrap, which will then let us upper bound the collision probability and then in turn get a sharper lower bound on $p_{T^*}$ 
in \cref{cor:payoff}.  For these results we restrict to $\ell$-dense stingy quantum races.

\begin{lemma}
\label{p_T_lb_2}
Let $p_1, \ldots, p_K$ define an $\ell$-dense symmetric stingy quantum race.  If $K \ge 6\ell$ then 
$p_{T^*} > \tfrac{5}{21}$.  In particular, $T^* \ge 2$.  
\end{lemma}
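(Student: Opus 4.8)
The plan is to prove the equivalent statement that $r_T\le 0$ for \emph{every} index $T$ with $p_T\le\tfrac{5}{21}$. Since $T^*$ is by definition the least $T$ with $r_T>0$ and the sequence $(p_t)$ is increasing, this forces $p_{T^*}>\tfrac{5}{21}$; and then $T^*\ge 2$ follows at once from $p_1\le\tfrac{\ell}{K}\le\tfrac16<\tfrac{5}{21}$, using $K\ge 6\ell$. Note first that any such $T$ automatically satisfies $T\le K-1$, so that $r_T$ is defined, because $p_K\ge 1-\tfrac{\ell}{K}\ge\tfrac56>\tfrac{5}{21}$.

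Write $\alpha=\tfrac{\ell}{K}\le\tfrac16$ and $u_i=1/p_i$, so that $q_i=u_i(u_{i-1}-u_i)$. Multiplying the definition of $r_T$ by $\bar p_T>0$, substituting $\bar p_iq_i=q_i-p_iq_i$, and telescoping $\sum_{i=T+1}^K p_iq_i=\sum_{i=T+1}^K(u_{i-1}-u_i)=u_T-u_K$, I obtain $\bar p_Tr_T=u_T-\sum_{i=T+1}^K q_i$; hence it suffices to show $\sum_{i=T+1}^K q_i\ge u_T$ whenever $p_T\le\tfrac5{21}$. The crucial step is to lower-bound each term without losing too much: writing $q_i=\tfrac{u_i}{u_{i-1}+u_i}(u_{i-1}^2-u_i^2)$, using the identity $\tfrac{u_i}{u_{i-1}+u_i}=(1+p_i/p_{i-1})^{-1}$, and bounding $p_i/p_{i-1}\le 1+\alpha/p_{i-1}\le 1+\alpha u_T$ — valid for $i\ge T+1$, since then $p_{i-1}\ge p_T$ and $p_i-p_{i-1}\le\alpha$ — gives $q_i\ge(u_{i-1}^2-u_i^2)/(2+\alpha u_T)$. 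Summing over $i=T+1,\dots,K$ and telescoping, $\sum_{i=T+1}^K q_i\ge(u_T^2-u_K^2)/(2+\alpha u_T)$.

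It then remains to check the scalar inequality $(u_T^2-u_K^2)/(2+\alpha u_T)\ge u_T$, i.e.\ $(1-\alpha)u_T^2-2u_T-u_K^2\ge0$. Bounding $u_K^2\le(1-\alpha)^{-2}$ (from $p_K\ge 1-\alpha$) and $u_T\ge\tfrac{21}{5}$ (from $p_T\le\tfrac5{21}$), and noting that $u\mapsto(1-\alpha)u^2-2u-(1-\alpha)^{-2}$ is increasing for $u\ge\tfrac1{1-\alpha}$ — which holds because $\tfrac{21}{5}>\tfrac65\ge\tfrac1{1-\alpha}$ — while this expression is also decreasing in $\alpha\in(0,\tfrac16]$, the whole task reduces to the single numerical check at $u_T=\tfrac{21}{5}$, $\alpha=\tfrac16$, where the value equals $\tfrac56\big(\tfrac{21}{5}\big)^2-2\cdot\tfrac{21}{5}-\big(\tfrac56\big)^{-2}=\tfrac{441}{30}-\tfrac{42}{5}-\tfrac{36}{25}=\tfrac{243}{50}>0$. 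This closes the argument, and then $T^*\ge 2$ is immediate as noted above.

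The step I expect to be the main obstacle is the lower bound on $q_i$. The naive approach of comparing $\sum_{i=T+1}^K q_i$ to the integral $\int_{p_T}^{p_K}\tfrac{1-p}{p^3}\,dp$ incurs a Riemann-sum error of order $\alpha/p_T^3$, which dominates the main term $1/(2p_T^2)$ precisely in the regime $p_T=\Theta(\alpha)$ — exactly the hardest case here. Keeping the estimate multiplicative through the factor $\tfrac{u_i}{u_{i-1}+u_i}\ge(2+\alpha u_T)^{-1}$ avoids this loss and yields a bound strong enough uniformly over all $p_T\le\tfrac5{21}$.
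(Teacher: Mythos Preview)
Your argument is correct. The identity $\bar p_T r_T = u_T - \sum_{i=T+1}^K q_i$ is valid, the lower bound $q_i \ge (u_{i-1}^2 - u_i^2)/(2+\alpha u_T)$ follows exactly as you say (using $p_{i-1}\ge p_T$ for $i\ge T+1$), and the resulting scalar inequality $(1-\alpha)u_T^2 - 2u_T - u_K^2 \ge 0$ is correctly reduced to the numerical check at the corner $(u_T,\alpha)=(21/5,1/6)$, where it holds with room to spare.

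Your route is genuinely different from the paper's. The paper works directly from $r_{T^*}>0$, i.e.\ $1/p_K > \sum_{i=T^*+1}^K \bar p_i q_i$, truncates the sum at the index $L=\max\{T:p_T\le 1/2\}$, uses the crude bound $1/p_i - 1 \ge 1$ for $i\le L$, and telescopes to get $1/p_K > 1/p_{T^*} - 1/p_L$; then $1/p_L\le 3$ and $1/p_K\le 6/5$ give $1/p_{T^*}\le 21/5$. That argument is shorter and more elementary --- one truncation and one telescoping --- whereas you first recast the problem via the exact identity $\bar p_T r_T = u_T - \sum q_i$ and then use a multiplicative estimate on each $q_i$. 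What your approach buys is a sharper intermediate bound: your quadratic inequality would in fact yield $p_{T^*}\gtrsim 0.33$ rather than $5/21\approx 0.238$ if pushed, and it avoids the somewhat ad hoc choice of the cutoff $L$. The paper's approach buys simplicity and makes the origin of the constant $5/21 = (6/5+3)^{-1}$ completely transparent.
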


\begin{proof}
As $r_{T^*} > 0$ we have 
\begin{align*}
\frac{1}{p_K} &> \sum_{i=T^*+1}^K \bar{p}_i q_i \\
&= \sum_{i=T^*+1}^K \left( \frac{1}{p_i} -1 \right) \left( \frac{1}{p_{i-1}} - \frac{1}{p_i} \right)
\end{align*}
Let $L = \max_T \ \{T: p_T \le 1/2\}$.  By assumption of an $\ell$-dense quantum race, we have
 $p_L \ge \tfrac{1}{2}-\tfrac{\ell}{K}$.  Whenever $p_i \le 1/2$ we have $(1/p_i -1) \ge 1$, thus
\begin{align*}
\frac{1}{p_K} &> \sum_{i=T^*+1}^L \frac{1}{p_{i-1}} - \frac{1}{p_i} \\
&= \frac{1}{p_{T^*}} - \frac{1}{p_L} \\
&\ge \frac{1}{p_{T^*}} -3
\end{align*}
where for the last inequality we have used $\tfrac{1}{p_L} \le 3$ by the assumption $K \ge 6\ell$.  
This then implies that $\tfrac{1}{p_{T^*}} \le \tfrac{6}{5}+3$, again using the fact that $K \ge 6\ell$ and 
so $p_K \ge \tfrac{5}{6}$.  

The ``in particular'' holds since $p_1 \le 1/6$ by the assumption of $\ell$-density and that $K \ge 6 \ell$, 
and $1/6 < 5/21$.  
\end{proof}  

\begin{theorem}
\label{thm:col}
Let $p_1, \ldots, p_K$ define an $\ell$-dense symmetric stingy quantum race.  If $K \ge 6 \ell$ then 
\[
\frac{\sigma(T^*)}{z_{T^*}^2} \le \frac{6\ell}{K} \qquad \mbox{ and } \qquad \sigma(T^*) \le \frac{196\ell}{K} \enspace .
\]
\end{theorem}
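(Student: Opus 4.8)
The plan is to reduce both inequalities to a single pointwise estimate on the equilibrium distribution. Since the race is symmetric, $p_i=P_i$, and by \cref{thm:symmetric} the unique coinciding equilibrium is $(y,y)$ with $y_{T^*}=r_{T^*}/z_{T^*}$ and $y_i=q_i/z_{T^*}$ for $T^*<i\le K$ (I drop the superscripts $A=B$). As recorded just after \cref{thm:tstar_upper}, $\sigma(T^*)/z_{T^*}^2$ is the collision probability, and unwinding the definition of $\sigma$ it equals $\sum_{i=T^*}^K p_i^2y_i^2$, the probability that both players succeed at the same step. So the first inequality is exactly a bound on $\sum_i p_i^2y_i^2$, and the second follows from the first by multiplying through by $z_{T^*}^2$ once we know $z_{T^*}$ is bounded by an absolute constant.

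First I would collect the inputs. From \cref{p_T_lb_2} (this is where $K\ge 6\ell$ is used) we get $p_{T^*}>\tfrac5{21}$ and $T^*\ge 2$; combined with $\ell$-density this gives $p_{T^*-1}\ge p_{T^*}-\tfrac{\ell}{K}>\tfrac5{21}-\tfrac16=\tfrac1{14}$. From the Nash conditions I need two facts about $z_{T^*}$, both already appearing in the excerpt: (i) $p_{T^*}r_{T^*}=z_{T^*}-\tfrac1{p_{T^*}}$ — this is just the condition that $T^*$ is a best response against $y$ (the row of \cref{eq:equal} at $T=T^*+1$), rewritten using $z_{T^*}=r_{T^*}+\sum_{i>T^*}q_i$ — which in particular forces $z_{T^*}p_{T^*}>1$ since $r_{T^*}>0$; and (ii) $p_{T^*-1}z_{T^*}\le 1$, i.e.\ $z_{T^*}\le 1/p_{T^*-1}$, because playing $T^*-1$ (which lies outside $\sup(y)$) does not beat playing $T^*$ against $y$, exactly as in the proof of \cref{cor:ptstar}. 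Together (ii) and $p_{T^*-1}>\tfrac1{14}$ give $z_{T^*}<14$.

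The heart of the argument is the claim $p_i^2y_i\le \tfrac{\ell}{K}$ for every $T^*\le i\le K$. For $i=T^*$, fact (i) gives $p_{T^*}^2y_{T^*}=\tfrac{p_{T^*}^2r_{T^*}}{z_{T^*}}=\tfrac{p_{T^*}z_{T^*}-1}{z_{T^*}}=p_{T^*}-\tfrac1{z_{T^*}}$, and then fact (ii) ($\tfrac1{z_{T^*}}\ge p_{T^*-1}$) together with $\ell$-density gives $p_{T^*}^2y_{T^*}\le p_{T^*}-p_{T^*-1}\le \tfrac{\ell}{K}$. For $i>T^*$, the definition of $q_i$ gives $p_i^2y_i=\tfrac{p_i^2q_i}{z_{T^*}}=\tfrac{p_i-p_{i-1}}{z_{T^*}p_{i-1}}$, and since $p_{i-1}\ge p_{T^*}$ and $z_{T^*}p_{T^*}>1$ by fact (i), this is at most $\tfrac{p_i-p_{i-1}}{z_{T^*}p_{T^*}}<p_i-p_{i-1}\le \tfrac{\ell}{K}$. (Bounding $p_i^2y_i$ rather than $y_i$ or $p_iy_i$ is exactly what makes this clean: one factor of $p_i$ turns the harmonic-type tail into a telescoping difference of the $p$'s, while the one-sided bounds $z_{T^*}p_{T^*-1}\le 1$ and $z_{T^*}p_{T^*}>1$ absorb the denominators at the endpoint and in the tail, respectively.)

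Finally, since $\sum_i y_i=1$, $\sum_{i=T^*}^K p_i^2y_i^2=\sum_i(p_i^2y_i)\,y_i\le \max_i(p_i^2y_i)\le \tfrac{\ell}{K}$, which is at most $\tfrac{6\ell}{K}$; and $\sigma(T^*)=z_{T^*}^2\cdot\tfrac{\sigma(T^*)}{z_{T^*}^2}<14^2\cdot\tfrac{\ell}{K}=\tfrac{196\ell}{K}$. I expect the only genuine obstacle is spotting that $p_i^2y_i$ is the right quantity and that the two elementary consequences (i), (ii) of the equilibrium conditions give precisely the leverage needed; everything else is routine. A minor bookkeeping point is to check that \cref{p_T_lb_2} applies and gives $T^*\ge2$, so that $p_{T^*-1}$ is defined.
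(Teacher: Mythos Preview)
Your proof is correct and, in fact, slightly sharper and more self-contained than the paper's. Both arguments rest on \cref{p_T_lb_2} (giving $p_{T^*-1}>\tfrac1{14}$) and the best-response inequality $p_{T^*-1}z_{T^*}\le 1$, but the route to the collision bound differs.

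The paper first replaces $r_{T^*}$ by the larger $q_{T^*}$ to treat all terms uniformly, then bounds $\frac{1}{z_{T^*}}\sum_i v_i p_i^2 q_i \le \frac{\ell}{K}\cdot\frac{1}{z_{T^*}p_{T^*-1}}$ for an arbitrary distribution $v$, and finally invokes \cref{thm:tstar_upper} ($1/z_{T^*}\le \sqrt2-1$) to absorb the factor $\tfrac{1}{z_{T^*}p_{T^*-1}}\le 14(\sqrt2-1)<6$. You instead prove the pointwise estimate $p_i^2 y_i\le \ell/K$ directly: at $i=T^*$ via the exact identity $p_{T^*}^2 y_{T^*}=p_{T^*}-1/z_{T^*}$ combined with $1/z_{T^*}\ge p_{T^*-1}$, and for $i>T^*$ via the simple observation $z_{T^*}p_{T^*}>1$ (equivalent to $r_{T^*}>0$). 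This yields $\sigma(T^*)/z_{T^*}^2\le \ell/K$ without appealing to \cref{thm:tstar_upper} and with a better constant; note that your pointwise bound also immediately gives the paper's ``more general'' statement for arbitrary $v$. The second inequality then follows identically in both proofs from $z_{T^*}\le 1/p_{T^*-1}<14$. The only cost of your approach is that one must spot the identity $p_{T^*}r_{T^*}=z_{T^*}-1/p_{T^*}$, whereas the paper's use of $r_{T^*}\le q_{T^*}$ is perhaps more mechanical.
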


\begin{proof}
We will actually prove a more general statement:
for any probability distribution $v$ on $[K]$ it holds that 
\begin{equation}
\label{eq:more_general}
\frac{1}{z_{T^*}} \left( v_{T^*} p_{T^*}^2 r_{T^*} + \sum_{i=T^*+1}^K v_i p_i^2 q_i \right) \le \frac{6 \ell}{K} \enspace .
\end{equation}
Applying this with the distribution $v_{T^*} = \frac{r_{T^*}}{z_{T^*}}$ and $v_T = \frac{q_T}{z_{T^*}}$ for $T^*+1 \le T \le K$ gives 
the first inequality.  We discuss how to derive the second inequality of the Theorem after the proof of \cref{eq:more_general}. 

Now we turn to the proof of \cref{eq:more_general}.  First, note that $T^* \ge 2$ by \cref{p_T_lb_2} and thus it makes sense to talk about $T^*-1$.
As $r_{T^*} \le q_{T^*}$ we have:
\begin{align*}
\frac{1}{z_{T^*}} \left( v_{T^*} p_{T^*}^2 r_{T^*} + \sum_{i=T^*+1}^K v_i p_i^2 q_i \right) 
&\le \frac{1}{z_{T^*}} \sum_{i=T^*}^K v_i p_i^2 q_i \\
&= \frac{1}{z_{T^*}} \sum_{i=T^*}^K v_i p_i \left(\frac{1}{p_{i-1}} - \frac{1}{p_i} \right) \\
&= \frac{1}{z_{T^*}} \sum_{i=T^*}^K v_i \left(\frac{p_i - p_{i-1}}{p_{i-1}}\right) \\
&\le \frac{\ell}{Kz_{T^*}} \sum_{i=T^*}^K v_i \left(\frac{1}{p_{i-1}}\right) \\
&\le \frac{\ell}{Kz_{T^*}} \frac{1}{p_{T^*-1}} \enspace ,
\end{align*}
since $v$ is a probability distribution.  Now note that 
$p_{T^*-1} \ge \frac{5}{21}-\frac{1}{6} \ge \frac{1}{14}$ by \cref{p_T_lb_2} and the assumption $\frac{\ell}{K} \le \frac{1}{6}$.
\cref{eq:more_general} now follows from the upper bound $\frac{1}{z_{T^*}} \leq \sqrt{2}-1$ from 
\cref{thm:tstar_upper}.  

To see the second inequality, note 
\[
\sigma(T^*) \le \frac{\ell}{K} \frac{z_{T^*}}{p_{T^*-1}} \le \frac{\ell}{K} \frac{1}{p_{T^*-1}^2} \le \frac{196\ell}{K} \enspace .
\]

\end{proof}

Now that we have an upper bound on the collision probability, we obtain the following corollary to \cref{thm:tstar_upper}.
\begin{corollary}
\label{cor:payoff}
Let $p_1, \ldots, p_K$ define an $\ell$-dense symmetric stingy quantum race.  Let 
$\tau = \frac{50\sqrt{2}\ell}{K}$.  If $K \ge 6 \ell$ then 
\[
z_{T^*} \le \sqrt{2}+1 + \tau, \qquad \frac{1}{z_{T^*}} \ge  \sqrt{2}-1 - 
\tau(\sqrt{2}-1)^2, \qquad p_{T^*} \ge \sqrt{2}-1 - \tau(\sqrt{2}-1)^2 \enspace .
\]
\end{corollary}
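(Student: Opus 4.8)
The plan is to read off all three bounds from the exact identity $z_{T^*} = 1 + \sqrt{1 + \tfrac{1}{p_K^2} + \sigma(T^*)}$ proved in \cref{thm:tstar_upper}, fed by the collision estimate $\sigma(T^*) \le \tfrac{196\ell}{K}$ of \cref{thm:col} and by $\ell$-density. The three inequalities form a chain: first bound $z_{T^*}$ from above, then invert to bound $\tfrac{1}{z_{T^*}}$ from below, and finally observe $p_{T^*} \ge \tfrac{1}{z_{T^*}}$.

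For the upper bound on $z_{T^*}$ it suffices to show $1 + \tfrac{1}{p_K^2} + \sigma(T^*) \le (\sqrt2 + \tau)^2$, since taking square roots and adding $1$ then yields $z_{T^*} \le 1 + \sqrt2 + \tau$. I would bound the left-hand side by writing $\tfrac{1}{p_K^2} - 1 = \tfrac{(1-p_K)(1+p_K)}{p_K^2}$ and using $1 - p_K \le \tfrac{\ell}{K}$ together with $p_K \ge 1 - \tfrac{\ell}{K} \ge \tfrac{5}{6}$ (from $K \ge 6\ell$), which gives $\tfrac{1}{p_K^2} - 1 \le \tfrac{2\ell/K}{25/36} = \tfrac{72\ell}{25K}$; combined with $\sigma(T^*) \le \tfrac{196\ell}{K}$ this yields $1 + \tfrac{1}{p_K^2} + \sigma(T^*) \le 2 + \tfrac{(72/25 + 196)\ell}{K} \le 2 + \tfrac{200\ell}{K}$. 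On the other side, since $\tau = \tfrac{50\sqrt2\ell}{K}$ we have $2\sqrt2\,\tau = \tfrac{200\ell}{K}$, so $(\sqrt2+\tau)^2 = 2 + 2\sqrt2\,\tau + \tau^2 \ge 2 + \tfrac{200\ell}{K}$, which closes this step.

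The lower bound on $\tfrac1{z_{T^*}}$ follows by applying \eqref{eq:recip} with $a = \sqrt2 + 1$ and $x = \tau$, using $\tfrac1{\sqrt2+1} = \sqrt2-1$ and $\tfrac1{(\sqrt2+1)^2} = (\sqrt2-1)^2$: from $z_{T^*} \le \sqrt2 + 1 + \tau$ we get $\tfrac1{z_{T^*}} \ge (\sqrt2-1) - \tau(\sqrt2-1)^2$. For the lower bound on $p_{T^*}$, note that $T^*$ lies in the support of the equilibrium strategy $x$, so the pure strategy $T^*$ is a best response against $x$ and hence earns exactly the equilibrium payoff $\tfrac1{z_{T^*}}$; on the other hand, since $x$ is a distribution on $\{T^*,\dots,K\}$ and the relevant entries of the payoff matrix are $A(T^*,T^*)=p_{T^*}\bar p_{T^*}$ and $A(T^*,j)=p_{T^*}$ for $j>T^*$, we have $e_{T^*}^{\T}Ax = p_{T^*}\bigl(x_{T^*}\bar p_{T^*} + (1-x_{T^*})\bigr) = p_{T^*}(1 - x_{T^*}p_{T^*}) \le p_{T^*}$. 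Hence $p_{T^*} \ge \tfrac1{z_{T^*}} \ge (\sqrt2-1) - \tau(\sqrt2-1)^2$, which gives the third bound.

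The main obstacle is the constant bookkeeping in the upper bound on $z_{T^*}$: the value $\tau = \tfrac{50\sqrt2\ell}{K}$ is calibrated precisely so that $2\sqrt2\,\tau = \tfrac{200\ell}{K}$, and since \cref{thm:col} already spends $\tfrac{196\ell}{K}$ on $\sigma(T^*)$, only about $\tfrac{4\ell}{K}$ of slack is left for the $\tfrac1{p_K^2} - 1$ term. The crude estimate $\tfrac1{p_K^2} \le (1 + \tfrac{2\ell}{K})^2$ is not tight enough; one genuinely needs to write $\tfrac1{p_K^2} - 1 = \tfrac{1-p_K^2}{p_K^2}$ and exploit $p_K \ge \tfrac56$ to keep that term below $\tfrac{3\ell}{K}$. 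Everything else is a routine substitution of the stated inequalities.
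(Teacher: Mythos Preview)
Your proof is correct and follows essentially the same route as the paper: both start from the exact formula of \cref{thm:tstar_upper}, feed in $\sigma(T^*)\le\tfrac{196\ell}{K}$ and $p_K\ge 1-\tfrac{\ell}{K}$ to reach $1+\tfrac{1}{p_K^2}+\sigma(T^*)\le 2+\tfrac{200\ell}{K}$, then pass to $z_{T^*}\le\sqrt2+1+\tau$, invert via \eqref{eq:recip}, and finish with $p_{T^*}\ge\tfrac{1}{z_{T^*}}$ from the payoff at $T^*$. The only cosmetic differences are that the paper bounds $\tfrac{1}{p_K^2}\le 1+\tfrac{4\ell}{K}$ by appeal to \eqref{eq:geom} and then cites \eqref{eq:plus_sqrt} for the square-root step, whereas you factor $\tfrac{1}{p_K^2}-1=\tfrac{(1-p_K)(1+p_K)}{p_K^2}$ directly and square the target $(\sqrt2+\tau)^2$ instead; your version is arguably cleaner on the constant bookkeeping you flagged.
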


\begin{proof}
\label{cor:col}
By \cref{thm:tstar_upper} we have
\begin{align*}
z_{T^*} &= 1 + \sqrt{1 + \frac{1}{p_K^2} + \sigma(T^*)} \\
&\le 1 + \sqrt{1 + \frac{K^2}{(K-\ell)^2} + \frac{196\ell}{K}} \\
&\le 1 + \sqrt{2 + \frac{200\ell}{K}} \enspace,
\end{align*}
as $\frac{K^2}{(K-\ell)^2} \le 1 + \frac{4\ell}{K}$ by \cref{eq:geom} (using the assumption that $\frac{\ell}{K} \le \frac{1}{6}$).
Continuing by applying \cref{eq:plus_sqrt} we have 
\[
z_{T^*} \le \sqrt{2} + 1 + \frac{50\sqrt{2} \ell}{K} \enspace, 
\]
giving the first inequality of the Corollary.
The second inequality of the Corollary follows by applying \cref{eq:recip} to the first inequality.

The third inequality of the Corollary follows from the second inequality from the observation that $p_{T^*} \ge \frac{1}{z_{T^*}}$.  
This is because 
\begin{align*}
\frac{1}{z_{T^*}} &= \frac{p_{T^*}}{z_{T^*}}\left(r_{T^*}\bar{p}_{T^*} + \sum_{t = T^*+1}^K q_{t} \right) \\
&\le \frac{p_{T^*}}{z_{T^*}}\left(r_{T^*}+ \sum_{t = T^*+1}^K q_{t} \right) \\
&=p_{T^*} \, .
\end{align*}
\end{proof}

\section{Two-player quantum races}
\label{sec:two_races}
In this section, we bootstrap our results about symmetric stingy quantum races to analyze symmetric quantum races.  
Our main results are two-fold.  
\begin{enumerate}
\item The unique coinciding Nash equilibrium in an $\ell$-dense  symmetric stingy quantum race is an approximate 
Nash equilibrium in the corresponding quantum race.
\item The approximate Nash equilibrium from~(1) achieves a payoff which is nearly optimal among all symmetric 
Nash equilibria in a symmetric quantum race.
\end{enumerate}

The intuition for item~(1) is clear.  The only difference between a stingy quantum race and a quantum race is the payoff on ties.  
For the unique coinciding Nash equilibrium we have shown that the collision probability is
$O(\ell/K)$, thus the change in payoff on ties 
will make only a small change to the payoffs under this strategy.  

For item~(2), we use the quadratic programming characterization of Nash equilibria \cite{MS1964}.  Consider a game $(A,B)$ 
where $A,B$ are $m$-by-$n$ matrices.  The program
\begin{equation}
\label{eq:quad_prog_gen}
\begin{aligned}
& \underset{x \in \Delta^m, y \in \Delta^n, \alpha,\beta \in \R}{\text{maximize}}
& &  x^T (A+B) y -\alpha - \beta \\
& \text{subject to}
& & A y \le \alpha \mathbf{1},\\
&& & B^T x \le \beta \mathbf{1},\\
\end{aligned}
\end{equation}
has an optimal value of $0$, and any $(x,y)$ attaining the value $0$ is a Nash equilibrium.  In the case of a symmetric quantum 
race $(A,A^T)$, when we restrict to symmetric strategies $(x,x)$ the objective function in \cref{eq:quad_prog_gen} becomes 
negative definite plus linear, making this a standard quadratic program.  
We then use the tight dual formulation of a quadratic program \cite{Dorn1960} to give an upper bound on the 
payoff of any symmetric Nash equilibrium, by explicitly constructing solutions to the dual problem.  This allows 
us to show that the payoff of the unique coinciding equilibrium in a stingy race achieves a payoff which is within 
$O(\sqrt{\ell/K})$ of optimal amongst all symmetric equilibria in the corresponding quantum race.

We now proceed to show these two results.

\subsection{Approximate Nash equilibrium}

\begin{definition}
A two-player game described by payoff matrices $(A,B)$ is said to have an $\epsilon$-approximate Nash equilibrium 
$(p,q)$, for $\epsilon \ge 0$, if the following two conditions hold
\begin{align}
    p^{\T}Aq \geq v^{\T}Aq - \epsilon \mbox{ for all }  v \in \Delta_m \label{eq:ane1} \\
    p^{\T}Bq \geq p^{\T}Bu - \epsilon \mbox{ for all }  u \in \Delta_n \label{eq:ane2} \enspace .
\end{align}
\end{definition}

\begin{definition}
A two-player game described by payoff matrices $(A,B)$ is said to have an $\epsilon$-well supported Nash equilibrium 
$(p,q)$, for $\epsilon \ge 0$ if
\begin{align*}
    e_i^\T Aq \geq e_j^\T Aq - \epsilon \mbox{ for all }  i \in \sup(p) \mbox{ and } j \in [m] \\
    p^{\T}Be_i \geq p^{\T}Be_j - \epsilon \mbox{ for all }  i \in \sup(q) \mbox{ and } j \in [n] \enspace .
\end{align*}
\end{definition}
Note that an $\epsilon$-well supported Nash equilibrium is also an $\epsilon$-approximate Nash equilibrium, but the 
reverse does not hold.  

Before showing that the unique coinciding equilibrium in an $\ell$-dense symmetric stingy quantum race is a well supported 
equilibrium in the corresponding $\ell$-dense symmetric quantum race, we need the following claim.
\begin{claim}
\label{clm:easy_diff}
Let $p_1, \ldots, p_K$ be an $\ell$-dense sequence such that $K\ge 6\ell$.  Then for all $T^* \le i,j \le K$
\[
\left| \frac{p_i}{p_{i-1}} - \frac{p_j}{p_{j-1}} \right | \le \frac{14\ell}{K} \enspace .
\]
\end{claim}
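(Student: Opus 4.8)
The claim bounds how much the ratio $p_i/p_{i-1}$ can vary over the interval $[T^*,K]$. The natural approach is to write $\frac{p_i}{p_{i-1}} = 1 + \frac{p_i - p_{i-1}}{p_{i-1}}$, so that
\[
\left|\frac{p_i}{p_{i-1}} - \frac{p_j}{p_{j-1}}\right| = \left|\frac{p_i - p_{i-1}}{p_{i-1}} - \frac{p_j - p_{j-1}}{p_{j-1}}\right| \le \frac{p_i - p_{i-1}}{p_{i-1}} + \frac{p_j - p_{j-1}}{p_{j-1}},
\]
since both terms are nonnegative (the $p$'s are increasing). By $\ell$-density, $p_i - p_{i-1} \le \frac{\ell}{K}$, so each term is at most $\frac{\ell}{K}\cdot\frac{1}{p_{i-1}}$, and since $i-1 \ge T^*-1$ and the $p$'s are increasing, $\frac{1}{p_{i-1}} \le \frac{1}{p_{T^*-1}}$. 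So the whole thing is at most $\frac{2\ell}{K}\cdot\frac{1}{p_{T^*-1}}$.

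The main point is then to lower bound $p_{T^*-1}$. This is exactly where I would invoke the earlier results: \cref{p_T_lb_2} gives $p_{T^*} > \frac{5}{21}$ under the hypothesis $K \ge 6\ell$, and combined with $\ell$-density ($p_{T^*} - p_{T^*-1} \le \frac{\ell}{K} \le \frac{1}{6}$) this yields $p_{T^*-1} \ge \frac{5}{21} - \frac{1}{6} = \frac{3}{42} = \frac{1}{14}$. (Note \cref{p_T_lb_2} also guarantees $T^* \ge 2$, so $p_{T^*-1}$ makes sense.) Substituting $\frac{1}{p_{T^*-1}} \le 14$ gives the bound $\frac{2\ell}{K}\cdot 14 = \frac{28\ell}{K}$ — which is a factor of $2$ worse than claimed. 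To get $\frac{14\ell}{K}$ I would sharpen the first step: rather than bounding each term separately, observe that if $i < j$ then $\frac{p_i - p_{i-1}}{p_{i-1}}$ and $\frac{p_j - p_{j-1}}{p_{j-1}}$ are being subtracted, and the quantity $\frac{p_i - p_{i-1}}{p_{i-1}}$ is not necessarily larger than $\frac{p_j-p_{j-1}}{p_{j-1}}$, so the naive triangle-inequality split is lossy. Instead, note each individual ratio satisfies $1 \le \frac{p_i}{p_{i-1}} \le 1 + \frac{\ell/K}{p_{T^*-1}} \le 1 + 14\frac{\ell}{K}$, so all the ratios lie in an interval of length $14\frac{\ell}{K}$, hence any two differ by at most $14\frac{\ell}{K}$. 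That gives exactly the claimed bound.

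The only real obstacle is getting the constant right, and the resolution above (bounding all ratios into a common interval of length $14\ell/K$ rather than summing two separate $7\ell/K$-type bounds) handles it cleanly. I would write the proof as: (i) by \cref{p_T_lb_2}, $T^* \ge 2$ and $p_{T^*} > \frac{5}{21}$, so $p_{T^*-1} \ge \frac{5}{21} - \frac{\ell}{K} \ge \frac{5}{21} - \frac{1}{6} = \frac{1}{14}$; (ii) for every $T^* \le i \le K$, since $p_{i-1} \ge p_{T^*-1} \ge \frac{1}{14}$ and $p_i - p_{i-1} \le \frac{\ell}{K}$, we have $1 \le \frac{p_i}{p_{i-1}} \le 1 + 14\frac{\ell}{K}$; (iii) hence for any $T^* \le i,j \le K$ the two ratios lie in $[1, 1+14\ell/K]$ and so $\left|\frac{p_i}{p_{i-1}} - \frac{p_j}{p_{j-1}}\right| \le \frac{14\ell}{K}$.
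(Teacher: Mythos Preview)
Your proposal is correct and matches the paper's proof essentially line for line: lower-bound $p_{T^*-1}\ge\frac{5}{21}-\frac{1}{6}=\frac{1}{14}$ via \cref{p_T_lb_2} and $\ell$-density, then observe $1\le \frac{p_i}{p_{i-1}}\le 1+\frac{\ell}{K p_{T^*-1}}\le 1+\frac{14\ell}{K}$ for all $T^*\le i\le K$, so any two ratios differ by at most $\frac{14\ell}{K}$. Your initial triangle-inequality attempt losing a factor of $2$ and the subsequent fix via the common interval is exactly the right diagnosis.
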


\begin{proof}
\cref{p_T_lb_2} along with the $\ell$ dense condition gives us a lower bound on $p_{T^* -1} \ge \frac{5}{21} - \frac{1}{6} = \frac{1}{14}.$ Now note that for all $T^* \le i \le K,$ we have 
\[1\le \frac{p_i}{p_{i-1}} \le 1 + \frac{\ell}{Kp_{i-1}} \le 1 + \frac{\ell}{Kp_{T^*-1}} \le 1 + \frac{14\ell}{K},\] thus proving the claim. 
\end{proof}

\begin{theorem}
\label{thm:two_approx}
Let $p_1, \ldots, p_K$ be an $\ell$-dense sequence defining the symmetric stingy quantum race $(A_0, A_0^T)$ and the 
symmetric quantum race $(A,A^T)$.  Let $(x,x)$ be the unique coinciding Nash equilibrium for the stingy quantum race 
$(A_0, A_0^T)$ given by \cref{thm:symmetric}.  Then $(x,x)$ is a $\tfrac{7(\sqrt{2}-1)\ell}{K}$-well supported Nash equilibrium in 
the quantum race $(A,A^T)$. 
\end{theorem}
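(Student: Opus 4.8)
The plan is to exploit the fact that the quantum-race payoff matrix differs from the stingy one only on the diagonal. In the symmetric case $P_i = p_i$ we have $\tfrac12 p_iP_i = \tfrac12 p_i^2$, so writing $D = \diag(p_1^2,\dots,p_K^2)$ we get $A = A_0 + \tfrac12 D$ and $A^\T = A_0^\T + \tfrac12 D$. Since the game $(A,A^\T)$ is symmetric and the tested profile $(x,x)$ is symmetric, the two well-supportedness conditions coincide, so it suffices to show, with $\epsilon = \tfrac{7(\sqrt 2-1)\ell}{K}$,
\[
e_i^\T A x \ \ge\ e_j^\T A x - \epsilon \qquad \text{for all } i \in \sup(x),\ j \in [K].
\]
For every index $k$ one has $e_k^\T A x = e_k^\T A_0 x + \tfrac12 p_k^2 x_k$, so the whole argument reduces to controlling the two quantities $e_k^\T A_0 x$ and $p_k^2 x_k$ on the support $[T^*,K]$ and outside it.

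First I would record what \cref{thm:symmetric} gives: the stingy equilibrium $(x,x)$ has support $[T^*,K]$, and by the normalization built into \cref{lem:unicity}, $e_t^\T A_0 x = 1/z_{T^*}$ for every $t \in \sup(x)$; write $V := 1/z_{T^*}$, and recall $V \le \sqrt 2 - 1$ from \cref{thm:tstar_upper}. For $k < T^*$ we have $\sup_{\le k}(x) = \emptyset$, hence $e_k^\T A_0 x = p_k$, and \cref{eq:ne2} of \cref{lem:characterize} at $T = T^*$ gives $p_{T^*-1}\le V$ (this is \cref{cor:ptstar}); here $T^* \ge 2$ by \cref{p_T_lb_2}, using $K \ge 6\ell$. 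Now split on $j$: if $j < T^*$ then $x_j = 0$, so $e_j^\T A x = p_j \le p_{T^*-1} \le V \le e_i^\T A x$ and the inequality holds with no loss; if $j \in \sup(x)$ then $e_j^\T A x = V + \tfrac12 p_j^2 x_j$ while $e_i^\T A x \ge V$, so it remains to bound $\tfrac12 p_j^2 x_j \le \epsilon$.

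For that last step I would use $x_{T^*} = r_{T^*}/z_{T^*}$ with $r_{T^*}\le q_{T^*}$ (which holds because $r_{T^*-1}\le 0$ by definition of $T^*$, exactly as invoked in the proof of \cref{thm:col}) and $x_j = q_j/z_{T^*}$ for $j > T^*$, so that $p_j^2 x_j \le p_j^2 q_j/z_{T^*}$ for every $j \in \sup(x)$. Since $p_j^2 q_j = \tfrac{p_j}{p_{j-1}} - 1$, the estimate $\tfrac{p_j}{p_{j-1}} \le 1 + \tfrac{14\ell}{K}$ from the proof of \cref{clm:easy_diff} (valid because $p_{j-1}\ge p_{T^*-1}\ge \tfrac1{14}$ by \cref{p_T_lb_2} and $\ell$-density) yields $p_j^2 x_j \le \tfrac{14\ell}{K}V \le \tfrac{14(\sqrt 2-1)\ell}{K}$, hence $\tfrac12 p_j^2 x_j \le \tfrac{7(\sqrt 2-1)\ell}{K} = \epsilon$, completing both cases.

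The only real content here is this uniform upper bound on $p_j^2 x_j$ over the support; everything else is bookkeeping about which entries of $A$ versus $A_0$ are active at each pure strategy against $x$. That bound rests entirely on the two ingredients $p_{T^*-1}\ge\tfrac1{14}$ (from \cref{p_T_lb_2}) and $r_{T^*}\le q_{T^*}$, both already established. I would also flag at the outset that the argument is specific to the symmetric case (so that $B = A^\T$ and $\tfrac12 p_iP_i = \tfrac12 p_i^2$) and uses the standing hypothesis $K \ge 6\ell$.
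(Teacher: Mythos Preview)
Your proof is correct and follows essentially the same approach as the paper: both exploit the decomposition $A = A_0 + \tfrac{1}{2}\diag(p_1^2,\ldots,p_K^2)$, handle $j < T^*$ by $e_j^\T A x = p_j \le V$, and for $j \in \sup(x)$ bound the diagonal correction via $p_j^2 q_j = \tfrac{p_j}{p_{j-1}} - 1 \le \tfrac{14\ell}{K}$ together with $\tfrac{1}{z_{T^*}} \le \sqrt{2}-1$. The only cosmetic difference is that the paper keeps both diagonal terms and bounds the difference $\tfrac{1}{2}(p_i^2 x_i - p_j^2 x_j)$ using \cref{clm:easy_diff} (forcing a separate treatment of $i = T^*$), whereas you drop the nonnegative term $\tfrac{1}{2}p_i^2 x_i$ at the outset and bound $\tfrac{1}{2}p_j^2 x_j$ alone, which slightly streamlines the case analysis; your remark that the hypothesis $K \ge 6\ell$ is being used is also well taken, since the paper invokes \cref{p_T_lb_2} and \cref{clm:easy_diff} without restating it.
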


\begin{proof}
To show that $(x,x)$ is an $\epsilon$-well 
supported Nash equilibrium in the quantum race $(A,A^T)$ it suffices to show that 
$e_i^\T A x \ge e_j^\T A x - \epsilon$ for all $T^* \le i \le K$ and $j \in [K]$.

Note that 
 \begin{equation}
    e_i^{\T}A_0 x=
    \begin{cases}
      \frac{1}{z_{T^*}} & \text{if}\ T^* \leq i \leq K \\
      p_i & \text{if}\ i < T^*  \label{eq:ane3}
    \end{cases} \enspace .
  \end{equation}
As $A = A_0 + \tfrac{1}{2} \diag(p_1^2, \ldots, p_K^2)$ this means
 \begin{equation}
    e_i^{\T}Ax=
    \begin{cases}
      \frac{1}{z_{T^*}} + \frac{1}{2} p_i^2 x_i & \text{if}\ T^* \leq i \leq K \\
      p_i & \text{if}\ i < T^*  \label{eq:ane3}
    \end{cases} \enspace .
  \end{equation}
For $T^* \le i \le K$ and $j < T^*$ we have $e_i^\T A x \ge e_i^\T A_0 x \ge e_j^\T A_0 x = e_j^\T A x$.  

Now we show that for $T^* \le i, j \le K$ we have $e_i^\T A x \ge e_j^\T A x - \tfrac{7(\sqrt{2}-1)\ell}{z_{T^*}K}$.  
First we consider the case where $T^* < i \le K$ and $T^* \le j \le K$.  
In this case 
\begin{align*}
e_i^\T A x - e_j^\T A x &=  \frac{1}{2}\left( p_i^2 x_i - p_j^2 x_j \right)  \\
&\ge \frac{1}{2z_T^*}\left(\frac{p_i}{p_{i-1}} - \frac{p_j}{p_{j-1}}\right)\\
&\ge  -\frac{7(\sqrt{2}-1)\ell}{K} \enspace ,
\end{align*}
by \cref{clm:easy_diff} along with \cref{thm:tstar_upper}.  The second line holds with equality except for the case $j = T^*$,
where we have used the fact that $q_{T^*} \ge r_{T^*}$.  

Finally, consider the case where $i=T^*$ and $T^* < j \le K$.  Then
\begin{align*}
e_{T^*}^\T A x - e_j^\T A x &= \frac{1}{2z_{T^*}}\left(p_{T^*}^2 r_{T^*} - \frac{p_j}{p_{j-1}} +1 \right) \\
& \ge \frac{1}{2z_{T^*}}\left(1- \frac{p_j}{p_{j-1}}\right) \\
& \ge -\frac{7(\sqrt{2}-1)\ell}{K}. 
\end{align*}
\end{proof}

\subsection{Upper bound on payoff}
Let $(x,x)$ be the unique coinciding Nash equilibrium in an $\ell$-dense symmetric stingy quantum race $(A_0,A_0^\T)$.  We have just 
shown that $(x,x)$ is a $\frac{7(\sqrt{2}-1)\ell}{K}$-well supported Nash equilibrium in the corresponding quantum race $(A,A^\T)$.  By \cref{cor:payoff},
$(x,x)$ achieves payoff at least $\sqrt{2}-1 - 50\sqrt{2}(\sqrt{2}-1)^2 \frac{\ell}{K}$ in the game $(A,A^\T)$.  
In this section, we show 
that this payoff is within $O(\sqrt{\ell/K})$ of optimal among all symmetric strategies 
$(y,y)$ for the game $(A,A^\T)$.  

Our starting point is to use the program in \cref{eq:quad_prog_gen} to provide a means to upper bound the value 
of any symmetric equilibrium.

\begin{lemma} 
\label{lem:quad}
Let $(A,A^T)$ be a symmetric game and define for $c \ge 0$
\begin{equation*}
\begin{aligned}
\gamma_A(c) =\  & \underset{x}{\text{maximize}}
& & \frac{1}{2} x^T (A+A^T) x  \\
& \text{subject to}
& & A x \le c \mathbf{1},\\
&&& \mathbf{1}^{\T}x = 1, x \ge \mathbf{0}.
\end{aligned}
\end{equation*}
For all $c_0$, such that $\gamma_A(c) < c$ for all $c \ge c_0$, the payoff of any symmetric Nash equilibrium in 
the game $(A, A^\T)$ is less than $c_0$.
\end{lemma}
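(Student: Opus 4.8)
The plan is to show that any symmetric Nash equilibrium $(y,y)$ of the game $(A,A^\T)$ produces a feasible point of the maximization defining $\gamma_A$ at which the objective value equals the equilibrium payoff; the hypothesis on $\gamma_A$ then pins the payoff strictly below $c_0$. First I would fix notation: let $(y,y)$ be a symmetric Nash equilibrium and set $v = y^\T A y$. Since $y^\T A^\T y$ is a scalar it equals its own transpose $y^\T A y = v$, so both players indeed receive payoff $v$, and the objective of the program at the point $y$ satisfies $\tfrac12 y^\T (A + A^\T) y = \tfrac12 (y^\T A y + y^\T A^\T y) = v$.

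Next I would extract from the equilibrium condition that $Ay \le v\mathbf{1}$. By definition of a Nash equilibrium, $\sup(y) \subseteq \br(y)$, so every index $j \in \sup(y)$ attains the common best-response value $\beta = \max_k e_k^\T A y$ against $y$. Expanding the payoff, $v = y^\T A y = \sum_{j \in \sup(y)} y_j\,(e_j^\T A y) = \beta \sum_j y_j = \beta$, using $\mathbf{1}^\T y = 1$. Hence $e_j^\T A y \le v$ for every $j$, i.e. $Ay \le v\mathbf{1}$. Together with $\mathbf{1}^\T y = 1$ and $y \ge \mathbf{0}$, this shows $y$ is feasible for the optimization problem defining $\gamma_A(v)$, and since the objective at $y$ equals $v$ we conclude $\gamma_A(v) \ge v$.

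Finally I would argue by contradiction. Suppose $v \ge c_0$. Applying the hypothesis with $c = v$ (which is admissible since $v \ge c_0$) gives $\gamma_A(v) < v$, contradicting $\gamma_A(v) \ge v$ just established. Therefore $v < c_0$, which is exactly the claim that the payoff of any symmetric Nash equilibrium is less than $c_0$.

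There is no real obstacle here; the argument is short. The only points that require a little care are (i) verifying that in a Nash equilibrium the best-response value coincides with the equilibrium payoff, so that the coordinate-wise bound $Ay \le v\mathbf{1}$ holds, and (ii) checking that the quadratic objective $\tfrac12 x^\T(A+A^\T)x$ evaluates to exactly $v$ at $x = y$; both follow from the elementary observation that the scalar $y^\T A^\T y$ equals $y^\T A y$.
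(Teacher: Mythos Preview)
Your proof is correct and follows essentially the same contrapositive argument as the paper: you show that a symmetric Nash equilibrium $(y,y)$ with payoff $v$ is feasible for the program defining $\gamma_A(v)$ with objective value $v$, yielding $\gamma_A(v)\ge v$, which contradicts the hypothesis if $v\ge c_0$. The only cosmetic difference is that the paper obtains the constraints $Ay\le v\mathbf{1}$ and $\tfrac12 y^\T(A+A^\T)y=v$ by appealing to the Mangasarian--Stone quadratic program \eqref{eq:quad_prog_gen}, whereas you derive them directly from the best-response condition and the scalar identity $y^\T A^\T y = y^\T A y$; your route is slightly more self-contained.
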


\begin{proof}
We show the contrapositive.  Suppose there is a symmetric Nash equilibrium $(x,x)$ with payoff $c \ge c_0$.  Then 
$(x,x,c,c)$ 
is a feasible solution to the program in \cref{eq:quad_prog_gen} with objective value $0$.  Thus 
$Ax \le c$ and $\frac{1}{2} x^\T (A+A^\T) x = c$.  
\end{proof}

This is the approach we take to upper bounding the payoff of symmetric Nash equilibria in a quantum race.

\begin{theorem}
\label{thm:payoff_upper}
Let $p_1, \ldots, p_K$ be an $\ell$-dense sequence with $K \ge 6\ell$.  Then any symmetric Nash equilibrium $(x,x)$ in the 
two-player quantum race defined by $p_1, \ldots, p_K$ has payoff at most $\sqrt{2}-1 + 5 \sqrt{\frac{\ell}{K}}$.
\end{theorem}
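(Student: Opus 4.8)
\textbf{Proof proposal for \cref{thm:payoff_upper}.}

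The plan is to invoke \cref{lem:quad}: it suffices to exhibit a value $c_0 = \sqrt{2}-1 + 5\sqrt{\ell/K}$ such that $\gamma_A(c) < c$ for every $c \ge c_0$, where $A$ is the (non-stingy) quantum race payoff matrix. First I would recall that $A = A_0 + \tfrac12 \diag(p_1^2,\ldots,p_K^2)$, so $A + A^\T = A_0 + A_0^\T + \diag(p_1^2,\ldots,p_K^2)$. The key structural fact, hinted at in the introduction, is that $A_0 + A_0^\T$ (equivalently $A + A^\T$) is negative semidefinite; I would verify this and use it to conclude that the objective $\tfrac12 x^\T(A+A^\T)x$ is concave, so $\gamma_A(c)$ is a concave quadratic maximization over a polytope — a genuine convex program whose value can be upper bounded by exhibiting a feasible solution to Dorn's dual \eqref{eq:dual} (referenced in the introduction). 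The heart of the argument is to construct such a dual-feasible solution explicitly, and the natural candidate is built from the quantities $r_T, q_i, z_T$ associated with the stingy race, reusing the analysis of \cref{sec:too_stingy}.

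Concretely, I would fix $c \ge c_0$ and set up the dual. Since the primal is $\max \tfrac12 x^\T M x$ with $M = A+A^\T \preceq 0$ subject to $Ax \le c\mathbf{1}$, $\mathbf{1}^\T x = 1$, $x \ge 0$, Dorn's dual is a minimization problem in variables dual to the constraints $Ax \le c\mathbf1$ and to the simplex constraint, together with a vector playing the role of the primal point. I would guess the dual solution using the coinciding stingy equilibrium $x$ of \cref{thm:symmetric} (supported on $[T^*,K]$) as the "primal-like" part, and dual multipliers supported where the stingy best-response value is attained, scaled appropriately. Feasibility will reduce to inequalities of the form "the stingy payoff $e_i^\T A x$ is within $O(\ell/K)$ of $\tfrac{1}{z_{T^*}}$ for $i$ in the support, and below it outside," which is exactly \eqref{eq:ane3} in the proof of \cref{thm:two_approx}, combined with \cref{clm:easy_diff} to control the diagonal correction $\tfrac12 p_i^2 x_i$. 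The value of this dual solution should come out to roughly $\tfrac{1}{z_{T^*}}$ plus the collision/diagonal correction, which by \cref{thm:col} and \cref{cor:payoff} is $\sqrt{2}-1 + O(\ell/K)$, and I would then need the slack between this and $c$; the $\sqrt{\ell/K}$ (rather than $\ell/K$) gap presumably arises because the concave quadratic $\gamma_A(c)$ meets the line $y=c$ only at a "grazing" angle near $c = \sqrt2-1$, so a perturbation of size $\delta$ in the bound on the objective forces a perturbation of size $\sqrt\delta$ in where $\gamma_A(c) < c$ first holds — I would make this precise by a one-variable analysis of how the optimal value of the quadratic varies with the right-hand side $c$.

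The main obstacle I anticipate is writing down the correct Dorn dual and guessing a feasible point for it that is both explicit enough to analyze and tight enough to give the $\sqrt{2}-1 + O(\sqrt{\ell/K})$ bound rather than something weaker. In particular, I expect the delicate part to be handling the diagonal perturbation $\tfrac12\diag(p_i^2)$: it destroys the clean equalities \eqref{eq:ne1} of the stingy race, so the stingy equilibrium $x$ is only an approximate best response (this is \cref{thm:two_approx}), and carrying that $O(\ell/K)$ slack through the dual feasibility check — then showing it blows up to at most $O(\sqrt{\ell/K})$ in the payoff bound via the grazing-intersection phenomenon — is where the real work lies. A secondary, more mechanical obstacle is verifying negative semidefiniteness of $A+A^\T$; I would do this by exhibiting an explicit factorization or by checking that the associated quadratic form $x^\T(A+A^\T)x = 2\sum_i x_i p_i(\ldots)$ can be written as $(\sum \text{something})^2$ minus a sum of squares, using the ordering $p_1 < \cdots < p_K$ and telescoping as in the proof of \cref{lem:unicity}.
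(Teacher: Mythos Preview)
Your high-level strategy --- invoke \cref{lem:quad}, pass to Dorn's dual, and exhibit an explicit feasible dual point --- is exactly what the paper does, and your intuition that the $\sqrt{\ell/K}$ arises from a grazing-intersection of a quadratic with the line $y=c$ is spot on: in the paper this is precisely the step where one reduces to $\beta(c) - \beta(c)^2/2 < c$, whose roots are $1 \pm \sqrt{1-2c}$.

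However, your proposed dual witness is the wrong one, and this is a genuine gap rather than a cosmetic difference. You suggest taking the dual variable $v$ to be (a scaling of) the stingy coinciding equilibrium $x$ supported on $[T^*,K]$. But $(A_0^\T x)_i$ is \emph{constant} on that support (equal to $1/z_{T^*}$), whereas the dual constraint $A^\T v \ge (1-\lambda)p - d\mathbf{1}$ has a right-hand side that is \emph{linear in $p_i$}. If $v = \alpha x$ then the binding constraint is at $i=K$, forcing $d \ge (1-\lambda)p_K - \alpha/z_{T^*}$, and the dual objective becomes at least $\tfrac12\lambda^2 + (1-\lambda)p_K + \alpha(c - 1/z_{T^*})$. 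For $c$ just above $\sqrt{2}-1 \approx 1/z_{T^*}$ the $\alpha$-term is tiny, and minimizing over $\lambda$ leaves you with $p_K - \tfrac12 p_K^2 \approx \tfrac12$, which is \emph{not} less than $c$ in the regime $c \le \tfrac12$ where the work lies. The paper instead builds $v$ to solve $(A_0^\T v)_i = (1-\lambda)p_i - d$ \emph{exactly} on an interval $[S,K]$ whose left endpoint $S$ depends on $c$ (it is the first index with $p_S \ge c$, not $T^*$), and keeps $\lambda$ as a free parameter optimized at the end to $\lambda = \beta(c)$. This $v$ is \emph{not} a scaling of the stingy equilibrium; its entries are proportional to $\tfrac{1}{p_i}\bigl(\tfrac{1}{p_i}-\tfrac{1}{p_{i+1}}\bigr)$ and it has $v(K)=0$. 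The stingy-race estimates (\cref{cor:ptstar}, \cref{thm:col}, \cref{cor:payoff}) do enter, but only later, inside the proof that $\beta(c) < 1 - \sqrt{1-2c}$.

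One smaller point: the negative semidefiniteness is much easier than you suggest. The paper just observes $A + A^\T = p\mathbf{1}^\T + \mathbf{1}p^\T - pp^\T$, so on the simplex $x^\T(A+A^\T)x = 2p^\T x - (p^\T x)^2$, which is manifestly concave in $x$ --- no factorization or telescoping needed.
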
 

\begin{proof}
Let $(A, A^\T)$ be the payoff matrices of a two-player quantum race defined by $p_1, \ldots, p_K$.  We will show that 
$\gamma_A(c) < c$ for all $c > \sqrt{2}-1 + 5 \sqrt{\frac{\ell}{K}}$. By \cref{lem:quad} this proves the theorem.

In the case of a quantum race $A+A^\T = p1^\T + 1p^\T - pp^\T$.  This means that over the probability simplex, the quadratic form 
$x^{\T}(A+A^{\T})x = - x^{\T}(pp^{\T}) x +2p^{\T}x$ is a negative-definite plus linear function.  In this case, $\gamma_A(c)$ is in the 
standard form of a quadratic program and has a dual program with matching value \cite{Dorn1960}.

\begin{equation}
\label{eq:dual}
\begin{aligned}
\gamma_A(c) = \ & \underset{v \in \mathbb{R}^K,\lambda,d \in \mathbb{R}}{\text{minimize}}
& & \frac{1}{2} \lambda^2 + c\cdot \mathbf{1}^\T v + d  \\
& \text{subject to}
& & A^T v \ge (1-\lambda) p - d\mathbf{1} \enspace , \\
&&& v \ge \mathbf{0}.
\end{aligned}
\end{equation}

Our approach will be to construct a feasible solution to \cref{eq:dual} to demonstrate that $\gamma_A(c) < c$ for all 
$c > \sqrt{2}-1 + 5\sqrt{\frac{\ell}{K}}$.

First note that for $c > \tfrac{1}{2}$ there is a trivial solution where $\lambda=1, d = 0$ and $v$ is the all-zero vector 
which shows that $\gamma(c) < c$.  We now focus on the case $c \le \tfrac{1}{2}$.  Let 
$\sqrt{2}-1 \le c \le \tfrac{1}{2}$.  We will develop a lower bound on $c$ which implies $\gamma_A(c) < c$.  

Let $S$ be the smallest index $i$ such that $p_i \ge c$.  Note that as $A$ is an $\ell$-dense quantum race we have 
$p_S \le c + \tfrac{\ell}{K}$.  We let 
\[
d = (1-\lambda)\left(1+p_K - \frac{p_K}{p_S} \right)
\]
and 
\[
v(i) = 
\begin{cases}
0 & \text{ if } 1 \le i < S \\
(1-\lambda - d) \frac{p_{S}}{\bar{p}_{S}}\frac{1}{p_i} \left(\frac{1}{p_i} - \frac{1}{p_{i+1}} \right) & \text{ if } S\le i < K \\
(1-\lambda -d) \frac{p_{S}}{p_K^2 \bar{p}_{S}} - \frac{(1-\lambda)}{p_K} & \text{ if } i = K \enspace .
\end{cases}
\]
The choice of $v$ comes from solving the system of linear equations $(A_0^\T v)_i = (1-\lambda)p_i -d$ for $S \le i \le K$.
The parameter $\lambda$ will be chosen later.

Let us see that $v$ satisfies the constraints of \cref{eq:dual}.  Note that 
\[
1-\lambda -d = (1-\lambda) p_K \frac{\bar{p}_S}{p_S}
\]
thus $v(K) =0$ and $v \ge 0$ so long as $\lambda \le 1$.  

As mentioned, by construction $v$ satisfies $(A_0^Tv)_i = (1-\lambda)p_i - d$ for $S \le i \le K$.  Thus as 
$A= A_0 + \tfrac{1}{2}\mathrm{diag}(p)^2$ and 
$v \ge 0$ we also have $(A^Tv)_i \ge (1-\lambda)p_i - d$ for $S \le i \le K$.

For $i < S$ we have that 
\begin{align*}
(A^T v)_i &\ge (A_0^T v)_i \\
&= \bar{p}_i p^T v \\
&\ge \bar{p}_{S} p^T v \\
&= (1-\lambda) p_{S} -d \\
&\ge (1-\lambda) p_i -d \enspace .
\end{align*}
Thus the constraint $A_0^T v \ge (1-\lambda)p -d\mathbf{1}$ is satisfied.

We have shown that $v$ is a feasible solution for any choice of $\lambda \le 1$.  We now choose $\lambda$ to minimize the 
objective value.  

Substituting our choices of $v,d$ into the objective value we have
\begin{align*}
\gamma_A(c) &\le \frac{1}{2}\lambda^2 + (1-\lambda)\left(1 + p_K - \frac{p_K}{p_S}\right) + c(1-\lambda)p_K 
\sum_{i=S}^{K-1} \frac{1}{p_i} \left(\frac{1}{p_i} - \frac{1}{p_{i+1}} \right) \\
&= \frac{1}{2}\lambda^2 + (1-\lambda) \left(1 + p_K\left(-\frac{\bar{p}_S}{p_S}+ 
c \cdot \sum_{i=S}^{K-1} \frac{1}{p_i} \left(\frac{1}{p_i} - \frac{1}{p_{i+1}} \right)  \right) \right)
\end{align*}

Define 
\[
\beta(c) = 1 + p_K\left(-\frac{\bar{p}_S}{p_S}+ 
c \cdot \sum_{i=S}^{K-1} \frac{1}{p_i} \left(\frac{1}{p_i} - \frac{1}{p_{i+1}} \right)  \right) \enspace .
\]
The objective value $\tfrac{1}{2}\lambda^2 + (1-\lambda) \beta(c)$ is minimized over $\lambda$ by taking 
$\lambda = \beta(c)$.  This makes the objective value $\beta(c) - \beta(c)^2/2$.  

We have now reduced the problem to showing
\[
\beta(c) - \beta(c)^2/2 - c < 0
\]
The roots of the corresponding quadratic equation are $1 \pm \sqrt{1-2c}$.  Note that $c \le 1/2$, thus the square root term will be 
real.  Thus we will simultaneously have $\beta(c) \le 1$ and $\beta(c) - \beta(c)^2/2 < c$ when $\beta(c) < 1 - \sqrt{1-2c}$.  
In \cref{clm:end_of_proof}, we show that $\beta(c) < 1 - \sqrt{1-2c^*}$ when 
$\sqrt{2}-1+ 5\sqrt{\frac{\ell}{K}} \le c \le \tfrac{1}{2}$.  
This will conclude the proof.  
\end{proof}

\begin{lemma}
\label{clm:end_of_proof}
$\beta(c) < 1 - \sqrt{1-2c}$ for any $\sqrt{2}-1 + 5\sqrt{\frac{\ell}{K}}  \le c \le \tfrac{1}{2}$.
\end{lemma}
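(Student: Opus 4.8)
The plan is to compare $\beta(c)$ with its ``continuous limit''
\[
g(c) := 2 - \frac{1}{2c} - \frac{c}{2},
\]
which is what $\beta(c)$ becomes after replacing $p_K$ by $1$, $p_S$ by $c$, and the telescoping-type sum $\Sigma := \sum_{i=S}^{K-1}\frac{1}{p_i}\bigl(\frac{1}{p_i}-\frac{1}{p_{i+1}}\bigr)$ by its Riemann-sum value $\tfrac12\bigl(\tfrac1{c^2}-1\bigr)$; here $S$ and $\beta(c)$ are as in the proof of \cref{thm:payoff_upper}. The three ingredients are: (a) an upper bound $\beta(c)\le g(c)+O(\ell/K)$ from $\ell$-density; (b) the algebraic identities $g(\sqrt2-1)=h(\sqrt2-1)$ and $g'(\sqrt2-1)=h'(\sqrt2-1)$, where $h(c):=1-\sqrt{1-2c}$, i.e.\ the curve $h$ is tangent to $g$ at the critical value $c=\sqrt2-1$; and (c) a second-order Taylor-with-remainder comparison giving $h(c)-g(c)\ge \kappa\,(c-(\sqrt2-1))^2$ on $[\sqrt2-1,\tfrac12]$ with an explicit $\kappa$. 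Combining these, when $c-(\sqrt2-1)\ge 5\sqrt{\ell/K}$ the quadratic gain $\kappa\cdot 25\,\ell/K$ dominates the linear loss $O(\ell/K)$, so $\beta(c)<h(c)$, which is the claim. (If no $c$ satisfies $\sqrt2-1+5\sqrt{\ell/K}\le c\le\tfrac12$ the statement is vacuous, and otherwise $\ell/K<10^{-3}$, which I would record up front so that all error terms are genuinely small.)

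For ingredient (a) I would first note $S\ge2$ and $c\le p_S\le c+\ell/K$ by $\ell$-density (using $c>\sqrt2-1>\ell/K\ge p_1$), together with $1-\ell/K\le p_K\le1$. To control $\Sigma$, I would write
\[
\frac{1}{p_i}\Bigl(\frac1{p_i}-\frac1{p_{i+1}}\Bigr)=\frac{p_{i+1}}{p_i+p_{i+1}}\Bigl(\frac1{p_i^2}-\frac1{p_{i+1}^2}\Bigr),
\]
and bound $\frac{p_{i+1}}{p_i+p_{i+1}}-\frac12=\frac{p_{i+1}-p_i}{2(p_i+p_{i+1})}\le\frac{\ell}{4Kc}$ (using $p_{i+1}-p_i\le\ell/K$ and $p_i\ge p_S\ge c$); summing over $i=S,\dots,K-1$ telescopes to $\Sigma\le\bigl(\tfrac12+\tfrac{\ell}{4Kc}\bigr)\bigl(\tfrac1{p_S^2}-\tfrac1{p_K^2}\bigr)\le\bigl(\tfrac12+\tfrac{\ell}{4Kc}\bigr)\bigl(\tfrac1{c^2}-1\bigr)$. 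Combining with $\tfrac{\bar p_S}{p_S}\ge\tfrac{1}{c+\ell/K}-1\ge\tfrac1c-1-\tfrac{\ell}{Kc^2}$ from \cref{eq:recip}, one gets $-\tfrac{\bar p_S}{p_S}+c\Sigma\le g(c)-1+O(\ell/K)<0$, hence $\beta(c)=1+p_K\bigl(-\tfrac{\bar p_S}{p_S}+c\Sigma\bigr)\le 1+(1-\tfrac{\ell}{K})\bigl(-\tfrac{\bar p_S}{p_S}+c\Sigma\bigr)\le g(c)+O(\ell/K)$, where the sign of the bracket and the bound $|{-}\tfrac{\bar p_S}{p_S}+c\Sigma|\le\tfrac1c-1\le\sqrt2$ make the constant explicit (a small multiple of $3+2\sqrt2$, comfortably below the $175$ produced by the other side).

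For ingredients (b) and (c) I would verify directly that $3-2\sqrt2=(\sqrt2-1)^2$, so $h(\sqrt2-1)=1-(\sqrt2-1)=2-\sqrt2=g(\sqrt2-1)$ and $h'(\sqrt2-1)=(\sqrt2-1)^{-1}=\sqrt2+1=\tfrac1{2(\sqrt2-1)^2}-\tfrac12=g'(\sqrt2-1)$. Since $g''(s)=-s^{-3}<0$ on $(0,1]$ we have $g(c)\le g(\sqrt2-1)+g'(\sqrt2-1)\,t$ with $t:=c-(\sqrt2-1)\ge0$, while $h''(s)=(1-2s)^{-3/2}$ is positive and increasing, so the integral form of Taylor's theorem gives $h(c)\ge h(\sqrt2-1)+h'(\sqrt2-1)\,t+\tfrac12 h''(\sqrt2-1)\,t^2$ with $h''(\sqrt2-1)=(\sqrt2-1)^{-3}=(\sqrt2+1)^3=5\sqrt2+7$. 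Subtracting and using the tangency identities, $h(c)-g(c)\ge\tfrac{5\sqrt2+7}{2}\,t^2\ge 7t^2$ on $[\sqrt2-1,\tfrac12]$ (the endpoint $c=\tfrac12$ being handled by continuity or by the trivial bound $h(\tfrac12)=1>g(\tfrac12)+O(\ell/K)$). Finally, for $c\ge\sqrt2-1+5\sqrt{\ell/K}$ we have $t\ge5\sqrt{\ell/K}$, so $h(c)-g(c)\ge 7\cdot25\,\ell/K=175\,\ell/K$, which exceeds the $O(\ell/K)$ error from (a); hence $\beta(c)\le g(c)+O(\ell/K)<h(c)=1-\sqrt{1-2c}$.

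The main obstacle is conceptual rather than computational: because $h$ is tangent to the continuous proxy of $\beta$ at $c=\sqrt2-1$, no first-order estimate can separate them, so one is forced into the second-order comparison in (c); this is exactly why the threshold must carry a $\sqrt{\ell/K}$ and not an $\ell/K$. The one place requiring genuine care is keeping the $\ell$-density error in (a) truly $O(\ell/K)$ — in particular ensuring the sum $\Sigma$ contributes nothing of order $1$ beyond $g(c)$ — and the telescoping identity above is precisely the trick that makes this routine.
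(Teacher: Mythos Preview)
Your proof is correct and takes a genuinely different route from the paper's. The paper rewrites the target inequality as a bound on $\Sigma=\sum_{i=S}^{K-1}\tfrac{1}{p_i}(\tfrac{1}{p_i}-\tfrac{1}{p_{i+1}})$ and then estimates $\Sigma$ by splitting the sum at $T^*$ and invoking the equilibrium machinery already developed (\cref{cor:ptstar}, \cref{thm:col}, \cref{cor:payoff} via the identity $\sum_{i\ge T^*}\tfrac{1}{p_i}(\tfrac{1}{p_i}-\tfrac{1}{p_{i+1}})\le z_{T^*}+\sigma(T^*)$). After several claims this reduces to $\sqrt{2}-(\sqrt{2}+1)c-\tfrac{\sqrt{1-2c}}{p_K}-\tfrac{150\ell}{K}>0$, which is then handled by the second-order estimate \cref{eq:minus_sqrt} around $c=\sqrt{2}-1$.

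Your argument bypasses the equilibrium analysis entirely: the telescoping identity $\tfrac{1}{p_i}(\tfrac{1}{p_i}-\tfrac{1}{p_{i+1}})=\tfrac{p_{i+1}}{p_i+p_{i+1}}(\tfrac{1}{p_i^2}-\tfrac{1}{p_{i+1}^2})$ gives $\Sigma\le\tfrac12(\tfrac{1}{c^2}-1)+O(\ell/K)$ directly from $\ell$-density, with no reference to $T^*$, $z_{T^*}$, or $\sigma(T^*)$. This makes the lemma self-contained and yields a sharper intermediate bound on $\Sigma$ (e.g.\ at $c=\tfrac12$ you get $\tfrac32$ versus the paper's $3-\sqrt{2}$). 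Your framing via the continuous proxy $g(c)=2-\tfrac{1}{2c}-\tfrac{c}{2}$ and the tangency $g(\sqrt{2}-1)=h(\sqrt{2}-1)$, $g'(\sqrt{2}-1)=h'(\sqrt{2}-1)$ is also more transparent about \emph{why} the threshold must scale like $\sqrt{\ell/K}$: the curves meet to first order, so only the second-order gap $\tfrac12 h''(\sqrt{2}-1)t^2=\tfrac{7+5\sqrt{2}}{2}t^2$ is available to absorb the $O(\ell/K)$ discretisation error. The paper's final step (\cref{clm:last_ineq}) is doing the same second-order expansion, just in different coordinates. What the paper's approach buys is reuse of structural facts about the stingy equilibrium; what yours buys is independence from that structure and a cleaner error budget (your constant is below $10$, against the paper's $150$).
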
 

\begin{proof}
The claim is equivalent to showing
\begin{equation}
   \label{ineqcond}
\sum_{i=S}^{K-1} \frac{1}{p_i} \left(\frac{1}{p_i} - \frac{1}{p_{i+1}} \right) < \frac{1}{c} \left( \frac{\bar{p}_S}{p_S} - 
\frac{\sqrt{1-2c}}{p_K} \right) 
\enspace ,
\end{equation}
for all $\sqrt{2}-1 + 5 \sqrt{\frac{\ell}{K}} < c \le \frac{1}{2}$.

We proceed by a series of claims to simplify both the left hand side and the right hand side of 
\cref{ineqcond} until we 
can work the inequality into something easier to prove.  

We begin by simplifying the left hand side, for which we use the following claim.
\begin{claim}
\label{clm:lhs_helper}
Let $p_1, \ldots, p_K$ be an $\ell$-dense sequence.  If $K \ge 6 \ell$ then
\[
\sum_{i=T^*}^{K-1} \frac{1}{p_i} \left(\frac{1}{p_i} - \frac{1}{p_{i+1}} \right) \le \sqrt{2}+1 + 267\frac{\ell}{K} \enspace .
\]
\end{claim}

\begin{proof}
\begin{align*}
    \sum_{i=T^*}^{K-1} \frac{1}{p_i} \left(\frac{1}{p_i} - \frac{1}{p_{i+1}} \right) &= \sum_{i=T^*}^{K-1} \frac{1}{p_{i+1}} \left(\frac{1}{p_i} - \frac{1}{p_{i+1}} \right) + \sum_{i=T^*}^{K-1} \left(\frac{1}{p_i} - \frac{1}{p_{i+1}} \right)^2\\
    &\le z_{T^*} + \sigma(T^*) \\
    &\le \sqrt{2}+1 +  (50\sqrt{2}+196)\frac{\ell}{K} \enspace ,
\end{align*}
by \cref{thm:col} and \cref{cor:payoff}.
\end{proof}

We now use \cref{clm:lhs_helper} to upper bound the left hand side of \cref{ineqcond}.
\begin{claim}
\label{clm:lhs}
Let $\sqrt{2}-1 \le c \le p_S \le c + \tfrac{\ell}{K}$.  Then
\[
\sum_{i=S}^{K-1} \frac{1}{p_i} \left(\frac{1}{p_i} - \frac{1}{p_{i+1}} \right)  \le 
\frac{1}{c} \left(\frac{1}{c} - (1-c)(\sqrt{2}+1) \right) + 
287\frac{\ell}{K} \enspace .
\]
\end{claim}
\begin{proof}
\begin{align*}
\sum_{i=S}^{K-1} \frac{1}{p_i} \left(\frac{1}{p_i} - \frac{1}{p_{i+1}} \right)  
&=\sum_{i=T^*}^{K-1} \frac{1}{p_i} \left(\frac{1}{p_i} - \frac{1}{p_{i+1}} \right) -
\sum_{i=T^*}^{S-1} \frac{1}{p_i} \left(\frac{1}{p_i} - \frac{1}{p_{i+1}} \right) \\
&\le \sqrt{2}+1 + \frac{267\ell}{K} - \frac{1}{p_S} \sum_{i=T^*}^{S-1} \left(\frac{1}{p_i} - \frac{1}{p_{i+1}}\right) \\
&= \sqrt{2}+1 + \frac{267\ell}{K} - \frac{1}{p_S} \left(\frac{1}{p_{T^*}} - \frac{1}{p_S}\right) \\
&\le \sqrt{2}+1 + \frac{267\ell}{K} + \frac{1}{p_S^2} - \frac{\sqrt{2}+1}{p_S} + \frac{\ell(\sqrt{2}+1)^2}{p_S K} \\
&\le \frac{1}{p_S} \left(\frac{1}{p_S} - \bar p_S(\sqrt{2}+1) \right) + 
\frac{\ell}{K}\left(267+ \frac{(\sqrt{2}+1)^2}{p_S} \right) \\
&\le \frac{1}{c} \left(\frac{1}{c} - (1-c)(\sqrt{2}+1) \right) + 
\frac{\ell}{K}\left(267+ (\sqrt{2}+1)^2 + \frac{(\sqrt{2}+1)^2}{c} \right) \enspace .
\end{align*}
For the fourth line we used \cref{cor:ptstar} that $p_{T^*-1} \le \sqrt{2}-1$, together with $\ell$-density to get 
$p_{T^*} \le \sqrt{2}-1 + \tfrac{\ell}{K}$.  This then implies 
$\tfrac{1}{p_{T^*}} \ge \sqrt{2}+1 - \tfrac{\ell(\sqrt{2}+1)^2}{K}$ by \cref{eq:recip}.
\end{proof}

Now that we have obtained an upper bound on the left hand side of \cref{ineqcond}, we proceed to lower 
bound the right hand side.  We use the following claim to do this.

\begin{claim}
\label{clm:rhs}
Let $\sqrt{2}-1 \le c \le \tfrac{1}{2}$ and $c \le p_S \le c + \tfrac{\ell}{K}$.  Then
\[
\frac{1}{c} \left( \frac{\bar{p}_S}{p_S} - \frac{\sqrt{1-2c}}{p_K} \right) \ge 
\frac{1}{c} \left( \frac{1-c}{c} - \frac{\sqrt{1-2c}}{p_K} \right) - \frac{\ell}{c^3K} \left(1 -\frac{\ell}{K} \right) \enspace .
\]
\end{claim}

\begin{proof}
As $c \le p_S \le c + \tfrac{\ell}{K}$ we have
\[
\frac{\bar{p}_S}{p_S} \ge \frac{1-c-\frac{\ell}{K}}{c + \frac{\ell}{K}} \enspace .
\]
Applying \cref{eq:recip} to $\frac{1}{c + \ell/K}$ gives 
\begin{align*}
\frac{\bar{p}_S}{p_S} &\ge \left(1-c-\frac{\ell}{K} \right) \left(\frac{1}{c} - \frac{\ell}{Kc^2} \right) \\
&= \frac{1-c}{c} - \frac{(1-c)\ell}{Kc^2} - \frac{\ell}{Kc} + \frac{\ell^2}{K^2c^2} \\
&= \frac{1-c}{c} -\frac{\ell}{Kc^2} + \frac{\ell^2}{K^2c^2} \enspace .
\end{align*}
\end{proof}
Putting \cref{clm:lhs} and \cref{clm:rhs} together and simplifying, it now suffices to show
\begin{equation}
\label{eq:last_step}
0 < \sqrt{2}-c(\sqrt{2}+1) -\frac{\sqrt{1-2c}}{p_K} - \frac{\ell \tau}{K} \enspace, 
\end{equation}
for any $\sqrt{2}-1 + 5 \sqrt{\frac{\ell}{K}} < c \le \frac{1}{2}$, where $\tau = 150$.

We find a lower bound on the value of $c$ for which this inequality holds with the next claim.
\begin{claim}
\label{clm:last_ineq}
For any $\sqrt{2} -1 + \sqrt{\frac{2\ell(\sqrt{2}-1)^3 \left(\tau+\frac{6(\sqrt{2}-1)}{5}\right)}{K}} < c \le 1/2$ 
\begin{equation}
\label{eq:ineq_claim}
0 < \sqrt{2} - (\sqrt{2}+1)c - \frac{\sqrt{1-2c}}{p_K} - \frac{\ell \tau}{K} \enspace .
\end{equation}
\end{claim}

\begin{proof}
We will use \cref{eq:minus_sqrt} to bound $\sqrt{a-c}$, specifically in our context the inequality
\begin{equation}
\label{eq:sp_sqrt}
\sqrt{3-2\sqrt{2}- 2\delta} \le \sqrt{2}-1- (\sqrt{2}+1)\delta- \frac{1}{2}(\sqrt{2}+1)^3 \delta^2 \enspace .
\end{equation}
Consider some $\delta \ge 0.$ By substituting $c = \sqrt{2}-1 + \delta$ into the left hand side of \cref{eq:ineq_claim} and applying 
\cref{eq:sp_sqrt} we obtain
\begin{align*}
\sqrt{2} - (\sqrt{2}+1)c - \frac{\sqrt{1-2c}}{p_K} - \frac{\ell \tau}{K} &= \sqrt{2} -1 - (\sqrt{2}+1)\delta - \frac{\sqrt{3-2\sqrt{2}-2\delta}}{p_K} - \frac{\ell \tau}{K} \\
&\ge (\sqrt{2}-1 -(\sqrt{2}+1)\delta)\left(1-\frac{1}{p_K}\right) +\frac{\delta^2(\sqrt{2}+1)^3}{2p_K} - \frac{\ell \tau}{K} \\
&\ge -\left(\sqrt{2}-1 -(\sqrt{2}+1)\delta\right)\frac{6\ell}{5K} +\frac{\delta^2(\sqrt{2}+1)^3}{2} - \frac{\ell \tau}{K} \\
&\ge -\frac{6(\sqrt{2}-1)\ell}{5K} +\frac{\delta^2(\sqrt{2}+1)^3}{2} - \frac{\ell \tau}{K}. \enspace 
\end{align*}
For the second to last inequality, we use the observation that $\frac{1}{p_K} \leq 1 + \frac{6\ell}{5K}$, which follows from the condition $K \ge 6\ell$.  The last expression will be positive for 
\[
\delta > \sqrt{\frac{2\ell(\sqrt{2}-1)^3 \left(\tau+\frac{6(\sqrt{2}-1)}{5}\right)}{K}} \enspace,
\]
giving the claim. 
\end{proof}
Applying \cref{clm:last_ineq} with the value $\tau = 150$, we see that 
\cref{eq:last_step} will hold for any 
\[
\sqrt{2}-1 + 5 \sqrt{\frac{\ell}{K}} < c \le \frac{1}{2} \enspace .
\]
This finishes the proof of \cref{clm:end_of_proof}.
\end{proof}

Let us summarize the results of this section.  Let $(A_0,A_0^T)$ be a $\ell$-dense symmetric stingy quantum race and let $(A,A^T)$ be the 
corresponding quantum race, and suppose that $\frac{\ell}{K} \le \frac{1}{6}$.  \cref{thm:payoff_upper} shows that the payoff of any 
symmetric strategy $(y,y)$ in the game $(A,A^T)$ is at most $\sqrt{2}-1+5\sqrt{\ell/{K}}$.  On the other hand, letting $(x,x)$ be the unique 
coinciding equilibrium in the game $(A_0,A_0)$, this strategy achieves payoff at least $\sqrt{2}-1 - O(\ell/K)$ in the game $(A,A^T)$ by 
\cref{cor:payoff}.  Thus $(x,x)$ is an approximate Nash equilibrium for the game $(A,A^T)$ and moreover is nearly optimal 
in terms of payoff (within $O(\sqrt{\ell/K})$ amongst all symmetric strategies.

\section{Multiplayer quantum races}
\label{sec:multi}
\subsection{Basic properties}
For an integer $n \geq 2,$ an {\em $n$-player game} 
is specified by a set of {\em pure strategies} $S_i$, and 
{\em payoff} functions $u_i : S \rightarrow
\mathbb{R},$ for each player $i \in [n]$, where by definition $S = S_1 \times
\cdots \times S_n$ is the set of {\em pure strategy profiles}. For
$s \in S$, the value $u_i(s)$ is the payoff of player $i$ for pure
strategy profile $s$. Let $S_{-i} =  S_1 \times \cdots \times
S_{i-1} \times S_{i+1} \times \cdots \times S_n$ be the set of all
pure strategy profiles of players other than $i$. For $s \in S$ and $i \in [n]$, we
set the {\em partial} pure strategy profile $s_{-i} \in S_{-i}$ to be $(s_1 ,
\ldots s_{i-1}, s_{i+1}, \ldots , s_n)$. For $s'$ in $S_{-i}$,
and $s_i \in S_i$, we denote by $(s', s_i)$ the {\em combined}
pure strategy profile $(s'_1, \ldots , s'_{i-1}, s_i, s'_{i+1},
\ldots, s'_n) \in S$.
We will suppose that each player has
$m$ pure strategies 
and that $S_i =  \{ e_1, \ldots , e_m \}$, the canonical basis of the
vector space $\mathbb{R}^m$, for all $i \in [n]$, and
therefore $S = \{ e_1, \ldots , e_m \}^n$. For simplicity, instead of $e_j$ we often say strategy $j$.

A {\em mixed strategy} for player $i$ is a probability distribution
over $S_i$ that we identify with a vector $x_i = (x_i^1, \ldots x_i^m)$ such that
$x_i^j \geq 0,$ for all $j \in [m]$, and $\sum_{j \in [m]} x_i^j = 1$. 
We denote by $\Delta_i$ the set of mixed strategies for
$i$, and we call $\Delta = \Delta_1 \times \cdots \times \Delta_r$
the set of {\em mixed strategy profiles}. 
For a mixed strategy profile $x = (x_1, \ldots , x_n)$ and pure strategy profile $s \in
S$, the product $x^s = x_1^{s_1} x_2^{s_2} \cdots x_n^{s_n}$ 
is the probability of $s$ in $x$. We will consider the multilinear
extension of the payoff functions from $S$ to $\Delta$ defined by
$u_i(x) = \sum_{s \in S} x^s u_i(s)$. 
The set $\Delta_{-i}$, the
partial mixed strategy profile $x_{-i}$, for $x \in \Delta$ and $i \in [n]$,  and the
combined mixed strategy profile $(x', x_i)$ for $x' \in \Delta_{-i}$
and $x_i \in \Delta_i$ are defined analogously to the pure case. 

The pure strategy $s_i$ is a {\em best response} for player $i$ against
the partial mixed strategy profile $x' \in \Delta_{-i}$ if it maximizes
$u_i(x', \cdot).$ For $x \in \Delta$ and $i \in [n]$, we will denote by $\br ( x_{-i} )$ the set
of best responses of player $i$ against $x_{-i}$.
A {\em Nash equilibrium} is a mixed strategy profile $x=(x_1, \ldots , x_n)$ such that 
$\sup(x_i) \subseteq \br ( x_{-i} )$, for all $i \in [n]$.

\begin{definition}[$n$-party stingy quantum race]
Let $n\geq2$ be a positive integer. 
An $n$-{\em party stingy quantum race} is defined by a sequence of increasing probabilities
$0  < P_1 < P_2 < \ldots < P_K \leq 1$, for some positive integer $K$. The set of pure strategies
of all players is $[K]$. For every $i$, the utility function of the $i^{th}$ player is defined as
$$
u_i(s_1, \ldots , s_n) = P_{s_i} \prod_{k\neq i, s_k \leq s_i}\bar{P}_{s_k}.
$$
\end{definition}

Consider an $n$-party stingy quantum race  given by the probabilities 
$0 < P_1  < \ldots < P_K \leq 1$,
and let $x_{-i} \in \Delta_{-i}$ for some $i \in [n]$. If player $i$ 
plays the pure strategy $s$ against $x_{-i}$,
her payoff is 
\begin{equation}
\label{eq:develop}
u_i (x_{-i},s) = P_{s} \prod_{k\neq i}
\left(\sum_{s_k \in \sup_{\leq s}(x_k)} x_k^{s_k} \bar{P}_{s_k} + 
\sum_{s_k \in \sup_{> s}(x_k)} x_k^{s_k}  \right).
\end{equation}

The following is the multiparty analog of \cref{cla:interval}.
\begin{claim}
\label{cla:multi-interval}
Let $x= (x_1, \ldots x_n)$ be a Nash equilibrium of an $n$-party stingy quantum race 
defined by probabilities $P_1< \ldots < P_K$.
If  $s_1 \in \sup (x_i)$, for some $i \in [n]$, then for all 
$s_2 > s_1$ there exists $k \ne i$ such that $\sup_{\leq s_1}(x_k) \ne \sup_{\leq s_2}(x_k)$.
\end{claim}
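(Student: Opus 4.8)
The plan is to follow the proof of \cref{cla:interval} almost verbatim, substituting the multilinear payoff expression \cref{eq:develop} for the bilinear one. First I would argue by contradiction: assume $s_1 \in \sup(x_i)$ and that there exists $s_2 > s_1$ with $\sup_{\le s_1}(x_k) = \sup_{\le s_2}(x_k)$ for every $k \ne i$. For $k \ne i$ I would abbreviate $f_k(s) = \sum_{s_k \in \sup_{\le s}(x_k)} x_k^{s_k}\bar P_{s_k} + \sum_{s_k \in \sup_{>s}(x_k)} x_k^{s_k}$, so that by \cref{eq:develop} the payoff of player $i$ for pure strategy $s$ is $u_i(x_{-i},s) = P_s \prod_{k\ne i} f_k(s)$. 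The crucial observation is that the assumption $\sup_{\le s_1}(x_k)=\sup_{\le s_2}(x_k)$ also forces $\sup_{>s_1}(x_k)=\sup_{>s_2}(x_k)$ (for each $s$, the set $\sup_{>s}(x_k)$ is the complement of $\sup_{\le s}(x_k)$ inside $\sup(x_k)$), so the two sums in $f_k$ agree term by term at $s=s_1$ and $s=s_2$; hence $f_k(s_1)=f_k(s_2)$ for all $k \ne i$, and therefore $u_i(x_{-i},s_2) = \tfrac{P_{s_2}}{P_{s_1}}\,u_i(x_{-i},s_1)$.

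The next step is to upgrade this identity to a strict inequality $u_i(x_{-i},s_2) > u_i(x_{-i},s_1)$, which contradicts $s_1 \in \sup(x_i) \subseteq \br(x_{-i})$. Since $s_1 < s_2$ and the $P$'s are strictly increasing, $P_{s_2}/P_{s_1} > 1$, so it is enough to know $u_i(x_{-i},s_1) > 0$. Here I would note $u_i(x_{-i},s_1) = P_{s_1}\prod_{k\ne i} f_k(s_1)$ with $P_{s_1}>0$, and each factor $f_k(s_1)$ is a sum of nonnegative terms; $f_k(s_1)=0$ would require $\sup_{>s_1}(x_k)=\emptyset$ together with $P_{s_k}=1$ for every $s_k \in \sup(x_k)$. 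Because the $P_i$ are strictly increasing, only $P_K$ can equal $1$, and $s_k \le s_1 < s_2 \le K$ forces $s_k < K$, so $P_{s_k} < 1$ --- a contradiction. Hence $f_k(s_1)>0$, so $u_i(x_{-i},s_1)>0$ and the argument closes.

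I expect essentially no serious obstacle here; the one point that needs a moment's care is the edge case $P_K=1$, which would make some factor $f_k$ vanish if a player placed all of her weight on time $K$, and this is exactly what the bound $s_1 < s_2 \le K$ rules out for the strategy $s_1$, keeping every factor relevant to $u_i(x_{-i},s_1)$ strictly positive.
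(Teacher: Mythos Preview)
Your proof is correct and follows essentially the same approach as the paper: both argue by contradiction, derive the identity $u_i(x_{-i},s_2)=\tfrac{P_{s_2}}{P_{s_1}}\,u_i(x_{-i},s_1)$ from the equality of supports, and conclude that $s_1$ cannot be a best response. Your treatment is in fact more careful than the paper's, which asserts the strict inequality without explicitly verifying that $u_i(x_{-i},s_1)>0$; your handling of the edge case $P_K=1$ closes that small gap.
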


\begin{proof}
If $\sup_{\leq s_1}(x_k) = \sup_{\leq s_2}(x_k)$, for all $k\neq i$ then
$$
u_i(x_{-i}, s_2) = \frac{P_{s_2}}{P_{s_1}} u_i(x_{-i}, s_1).
$$
As $p_{s_1} < p_{s_2}$,  the payoff for playing 
$s_2$ is strictly larger than that for playing $s_1$.  
Therefore $s_1$ is not a best response for $x_{-i}$, in contradiction with the definition of a Nash 
equilibrium.
\end{proof}

This claim implies the following properties for the supports of Nash equilibria in a multiplayer stingy 
quantum race.
\begin{corollary}
Let $x$ be a Nash equilibrium of an $n$-party stingy quantum race defined by probabilities 
$0  < P_1 < P_2 < \ldots < P_K \leq 1$. Then we have:
\begin{itemize}
   \item $\bigcup_{i=1}^{n} \sup(x_i)$  is an interval containing $K$,
   \item for every $i \in [n],$ for every $s_1, s_2 \in \sup(x_i)$ there exists 
   $s \in \bigcup_{k \neq i} \sup(x_k)$ with $s_1 < s \le s_2$.  
\end{itemize}
\label{cor:multi-interval}
\end{corollary}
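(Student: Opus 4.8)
The plan is to derive both items directly from \cref{cla:multi-interval}, exactly as \cref{cor:support_structure} was obtained from \cref{cla:interval} in the two-player case. The one preliminary observation to record is how to use the conclusion of \cref{cla:multi-interval}: if $s_1 < s_2 \le K$ and $\sup_{\le s_1}(x_k) \ne \sup_{\le s_2}(x_k)$ for some player $k$, then the set difference $\sup_{\le s_2}(x_k) \setminus \sup_{\le s_1}(x_k)$ is exactly the collection of elements of $\sup(x_k)$ lying in the half-open interval $(s_1,s_2]$, so this difference being nonempty means there is an $s \in \sup(x_k)$ with $s_1 < s \le s_2$.

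For the second item I would fix $i \in [n]$ and $s_1, s_2 \in \sup(x_i)$, the interesting case being $s_1 < s_2$. Applying \cref{cla:multi-interval} to $s_1 \in \sup(x_i)$ with $s_2 > s_1$ yields some $k \ne i$ with $\sup_{\le s_1}(x_k) \ne \sup_{\le s_2}(x_k)$, and the observation above then produces an $s \in \sup(x_k) \subseteq \bigcup_{k' \ne i} \sup(x_{k'})$ with $s_1 < s \le s_2$. So this item is essentially a restatement of \cref{cla:multi-interval}.

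For the first item the plan is to prove a ``successor closure'' statement and then induct. Concretely: if $s \in \bigcup_i \sup(x_i)$ and $s < K$, pick $i$ with $s \in \sup(x_i)$ and apply \cref{cla:multi-interval} to $s$ and $s+1$ (here $s+1 \le K$, which is what lets us invoke the claim, since it implicitly needs the larger index to name a legal pure strategy so that $P_{s+1}$ is defined); this gives a $k \ne i$ with $s+1 \in \sup(x_k)$, hence $s+1 \in \bigcup_i \sup(x_i)$. Now let $T = \min \bigcup_i \sup(x_i)$, which exists since each $\sup(x_i)$ is nonempty. Iterating the successor step starting from $T$ shows $\{T, T+1, \ldots, K\} \subseteq \bigcup_i \sup(x_i)$; the reverse inclusion is immediate because $T$ is the minimum and all pure strategies lie in $[K]$. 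Hence $\bigcup_i \sup(x_i) = \{T, \ldots, K\}$, an interval that in particular contains $K$.

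I do not expect a genuine obstacle here: the mathematical content is entirely in \cref{cla:multi-interval}, and this corollary merely unpacks it. The only points needing a little care are the translation of the set inequality $\sup_{\le s_1}(x_k) \ne \sup_{\le s_2}(x_k)$ into the existence of a support element in $(s_1,s_2]$, and keeping the larger index at most $K$ when invoking the claim in the successor-closure step.
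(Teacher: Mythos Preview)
Your proposal is correct and follows essentially the same approach as the paper: both items are direct consequences of \cref{cla:multi-interval}, and the paper's own argument (mirroring the two-player \cref{cor:interval}) simply states that the claim yields these two structural facts. You have merely spelled out the details, including the successor-closure step and the translation of $\sup_{\le s_1}(x_k) \ne \sup_{\le s_2}(x_k)$ into an element of $(s_1,s_2]$, which is exactly what is implicit in the paper's terse derivation.
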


Let $x$ be a Nash equilibrium of an $n$-party stingy quantum race. We say that $x$ is 
{\em coinciding} if $\sup(x_i) = \sup(x_k)$, for all $i,k \in [n].$ 
We call this common support in a coinciding Nash equilibrium the {\em support} of the equilibrium.  
In the multiparty case we will only study coinciding Nash equilibria.

\subsection{Coinciding Nash equilibria of stingy multiplayer races}
By \cref{cor:multi-interval} we know that in a coinciding Nash equilibrium of an
$n$-party stingy quantum race
the support of the equilibrium is of the form $\{T, T+1, \ldots, K\}$, for some 
$1 \leq T \leq K$.  
We would like to characterize these coinciding equilibria.

\begin{lemma}
\label{lem:mcharacterize}
Let $x =(x_1, \ldots , x_n)$, where $x_i$ is a $K$-dimensional real vector for every 
$i \in [n ]$, and let $1 \leq T \leq K$. Then
$x$ is a Nash equilibrium  of support $\{T, T+1, \ldots, K\}$ 
in an $n$-party stingy quantum race defined by 
$0  < P_1 < \ldots < P_K \leq 1$
if and only $x$ satisfies the following system, for all $i \in [n]$:
\begin{align}
u_i(x_{-i}, t) & = u_i(x_{-i}, T)~  & \text{for} ~~ T < t \leq K ~, \label{eq:mne1} \\
u_i(x_{-i}, T-1) & \leq u_i(x_{-i}, T) ~,  & \label{eq:mne2} \\
x_i^t   & = 0  & \text{for} ~~ 0 < t < T ~, \label{eq:mzero} \\
x_i^t   & > 0  & \text{for} ~~ T \leq t \leq K ~, \label{eq:mne3} \\
\sum_{t = T}^K x_i^t & = 1 ~ . & \label{eq:mne4}
\end{align}

\end{lemma}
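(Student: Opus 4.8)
The plan is to follow the proof of \cref{lem:characterize} essentially verbatim, replacing the two-player payoff formula by the multiparty expression in \cref{eq:develop}. First I would note that \cref{eq:mzero}--(\ref{eq:mne4}), taken for all $i$, say exactly that each $x_i$ is a probability distribution on $[K]$ with support the interval $\{T, T+1, \ldots, K\}$ (so these conditions already encode that $x$ is coinciding with the prescribed support; there is no need to invoke \cref{cor:multi-interval} here, since the support is part of the hypothesis). Assuming these, it then remains to show that $x$ is a Nash equilibrium if and only if \cref{eq:mne1} and \cref{eq:mne2} hold for every $i \in [n]$.

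By definition $x$ is a Nash equilibrium iff $\sup(x_i) \subseteq \br(x_{-i})$ for every $i$, that is, every strategy $t \in \{T, \ldots, K\}$ maximizes $u_i(x_{-i}, \cdot)$ over $[K]$. For all of $t = T, \ldots, K$ to be maximizers, two things must hold: (a) they all give the same payoff, i.e.\ $u_i(x_{-i}, t) = u_i(x_{-i}, T)$ for $T < t \le K$, which is precisely \cref{eq:mne1}; and (b) this common value is at least $u_i(x_{-i}, t)$ for every $1 \le t < T$. For (b), I would use \cref{eq:develop}: since each $x_k$ is supported on $\{T, \ldots, K\}$, for any $t < T$ we have $\sup_{\le t}(x_k) = \emptyset$ and $\sup_{> t}(x_k) = \sup(x_k)$, so the product factor in \cref{eq:develop} equals $\prod_{k \ne i} (\sum_{s_k \in \sup(x_k)} x_k^{s_k}) = 1$, hence $u_i(x_{-i}, t) = P_t$. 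Since $P_1 < \cdots < P_{T-1}$, the largest of these values is attained at $t = T-1$, so (b) is equivalent to $u_i(x_{-i}, T-1) \le u_i(x_{-i}, T)$, which is \cref{eq:mne2}. This is the same mechanism as in \cref{cla:multi-interval}, where the ratio $P_{s_2}/P_{s_1}$ controls the comparison between a lower and a higher strategy.

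Combining (a) and (b) in both directions gives the claimed equivalence, completing the lemma. I do not expect any genuine obstacle here; the only point requiring a little care is verifying that, because every player's support lies in $\{T, \ldots, K\}$, the ``ahead-or-tied'' product factor in \cref{eq:develop} is identically $1$ on all pure strategies below $T$, which is immediate from \cref{eq:mzero} and \cref{eq:mne4}. This is exactly the multiparty counterpart of the argument used to derive \cref{eq:ne2} from \cref{cla:interval} in the proof of \cref{lem:characterize}.
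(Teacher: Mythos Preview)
Your proposal is correct and follows essentially the same approach as the paper's own proof: \cref{eq:mzero}--(\ref{eq:mne4}) encode the support condition, and among deviations $t < T$ the payoff is maximized at $t = T-1$, so the best-response condition reduces to \cref{eq:mne1} and \cref{eq:mne2}. You give a bit more detail (explicitly computing $u_i(x_{-i},t)=P_t$ for $t<T$ via \cref{eq:develop}), but the structure of the argument is identical.
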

\begin{proof}
For every $i \in [n]$, \cref{eq:mzero}--(\ref{eq:mne4}) express that $x_i$ is a probability 
distribution of support $\{T, T+1, \ldots, K\}$.
For $T\geq 2$, when playing  a strategy $t < T$ against the partial mixed strategy profile $x_{-i}$, the
$i$th player's payoff 
is maximized if she plays \ $T-1$. 
Therefore \cref{eq:mne1} and~(\ref{eq:mne2}) express that the strategies in her support are all
best responses against $x_{-i}$. 
\end{proof}

\begin{definition}
For an $n$-party stingy quantum race defined by 
$0  < P_1 < \ldots < P_K \leq 1$ we define its {\em reduced game} as the 2-party
stingy quantum race defined by the two sequences of probabilities $p_1  < \ldots < p_K$, 
and  $P_1 < \ldots < P_K$  where
$p_j = P_j^{1/(n-1)}$, for $1 \leq j \leq K$. 
\end{definition}
We denote by $A$  the payoff matrix of the first player in the reduced game

\begin{lemma}
\label{lem:reduction}
Let an $n$-party stingy quantum race be defined by 
$0  < P_1 < \ldots < P_K \leq 1$, let
$x = (x_1, \ldots , x_n)$, where $x_i$ is a $K$-dimensional vector, and let $1 \leq T \leq K$.
Then \cref{eq:mne1}--(\ref{eq:mne4}) are satisfied by $x$, for every $i \in [n]$ 
if and only if  \cref{eq:ne1}--(\ref{eq:ne4}) for the reduced game
are satisfied by $x_i$, for every $i \in [n]$.
\end{lemma}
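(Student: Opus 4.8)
The plan is to reduce the whole equivalence to one clean identity. Let $A$ be the payoff matrix of the first player in the reduced game, and recall $p_j=P_j^{1/(n-1)}$, i.e. $P_s=p_s^{\,n-1}$. The key claim is that for every profile $x=(x_1,\dots,x_n)$, every player $i$, and every pure strategy $s\in[K]$,
\[
u_i(x_{-i},s)=\prod_{k\ne i}e_s^{\T}A x_k \enspace .
\]
I would prove this by expanding $e_s^{\T}A x_k$ from the definition of the stingy reduced game, obtaining $e_s^{\T}A x_k=p_s\bigl(\sum_{j\in\sup_{\le s}(x_k)}x_k^{j}\bar P_j+\sum_{j\in\sup_{>s}(x_k)}x_k^{j}\bigr)$; the parenthesised factor is exactly the $k$-th factor appearing in \cref{eq:develop}, and multiplying the $n-1$ copies of $p_s$ with the $n-1$ factors turns $\prod_{k\ne i}(p_s\cdot\text{factor}_k)$ into $p_s^{\,n-1}\prod_{k\ne i}\text{factor}_k=P_s\prod_{k\ne i}\text{factor}_k=u_i(x_{-i},s)$. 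I would also record two elementary facts, both used repeatedly: writing $a_k(t):=e_t^{\T}A x_k$, one has $a_k(t)>0$ for all $t\in\{T,\dots,K\}$ whenever $x_k$ is a probability distribution supported on $\{T,\dots,K\}$ (the sole exception being the degenerate case $T=K$ and $P_K=1$, where both systems are trivially infeasible and the lemma holds vacuously); and $a_k(T-1)=p_{T-1}$ for such $x_k$, since every $j$ in the support has $j\ge T>T-1$, hence $A(T-1,j)=p_{T-1}$.

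Granting the identity, the ``if'' direction is immediate. The distribution conditions \cref{eq:zero}--(\ref{eq:ne4}) for each $x_i$ are literally \cref{eq:mzero}--(\ref{eq:mne4}). By \cref{eq:ne1} each $a_i(t)$ is constant for $t\in\{T,\dots,K\}$, so every product $\prod_{k\ne i}a_k(t)=u_i(x_{-i},t)$ is constant there, which is \cref{eq:mne1}; and since $0<a_k(T-1)=p_{T-1}\le a_k(T)$ for every $k$ by \cref{eq:ne2}, multiplying over $k\ne i$ gives $u_i(x_{-i},T-1)\le u_i(x_{-i},T)$, i.e. \cref{eq:mne2}.

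For the ``only if'' direction I would argue as follows. The distribution conditions transfer back directly. Next, \cref{eq:mne1} says that for every $i$ the product $\prod_{k\ne i}a_k(t)$ is constant in $t$ over $\{T,\dots,K\}$, and I want to deduce that each individual $a_k(t)$ is constant in $t$ --- which is precisely \cref{eq:ne1} for $x_k$. For $n=2$ this is immediate, as the product over $k\ne i$ is a single factor; for $n\ge3$, comparing the constants for two indices $i\ne j$ shows the ratio $a_j(t)/a_i(t)$ is independent of $t$, so $a_k(t)=\lambda_k a_1(t)$ for constants $\lambda_k$, and substituting into any one of the (constant) products gives $a_1(t)^{\,n-1}$ constant, whence $a_1(t)$ --- and so every $a_k(t)$ --- is constant because $a_1(t)>0$ and $n-1\ge1$. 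Now each $x_i$ satisfies \cref{eq:ne1},(\ref{eq:zero}),(\ref{eq:ne4}) of the reduced game, so by \cref{lem:unicity} all the $x_i$ equal one and the same vector; let $a$ denote their common value $a_i(T)$. Finally \cref{eq:mne2} for player $i$ reads $p_{T-1}^{\,n-1}=\prod_{k\ne i}a_k(T-1)\le\prod_{k\ne i}a_k(T)=a^{\,n-1}$, hence $p_{T-1}\le a=e_T^{\T}A x_i$, which is \cref{eq:ne2} for $x_i$. This completes the equivalence.

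The main obstacle is the ``only if'' direction, specifically the passage from \cref{eq:mne2} to \cref{eq:ne2}: player $i$'s inequality only controls a product over the \emph{other} players, and one genuinely needs the symmetrisation supplied by \cref{lem:unicity} (all $x_i$ coincide) to extract the single-player inequality $p_{T-1}\le a_i(T)$. The positivity bookkeeping for the $a_k(t)$ and the handling of the degenerate endpoint case $T=K$, $P_K=1$ are minor but should be stated explicitly, and it is worth flagging that when $n=2$ the reduced game coincides with the original race and the lemma is essentially a tautology.
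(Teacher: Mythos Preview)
Your proof is correct and follows essentially the same route as the paper's: both exploit the multiplicative structure linking $u_i(x_{-i},s)$ to the reduced-game payoffs, both derive \cref{eq:ne1} from \cref{eq:mne1} by an algebraic manipulation of these products, and both then invoke \cref{lem:unicity} to force $x_1=\cdots=x_n$ before extracting \cref{eq:ne2} from \cref{eq:mne2}.

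The one organizational difference worth noting is that you isolate the identity $u_i(x_{-i},s)=\prod_{k\ne i}e_s^{\T}A x_k$ at the outset and then deduce constancy of each $a_k(t)$ via the ratio argument $a_j(t)/a_i(t)=C_i/C_j$, whereas the paper instead writes down the formula $(e_t^{\T}A x_i)^{n-1}=\bigl(\prod_\ell u_\ell(x_{-\ell},t)\bigr)/u_i(x_{-i},t)^{n-1}$ and reads constancy off directly. These are equivalent manipulations of the same product structure. Your framing is arguably cleaner, and in particular your ``if'' direction avoids the paper's (unnecessary) appeal to \cref{lem:unicity} there, since constancy of each factor $a_k(t)$ already gives constancy of the product without knowing the $x_k$ coincide.
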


\begin{proof}
First we do the ``if'' direction.  By \cref{lem:unicity}, there is a unique solution to \cref{eq:ne1},~(\ref{eq:zero}) and~(\ref{eq:ne4}),
thus if these are satisfied by each $x_i$ then we must have $x_1 = \cdots = x_n$.  Let $y$ be this common vector. Then for every 
$t \in [T-1, K]$,  
\begin{align*}
u_i (x_{-i},t) &=  P_t \prod_{k \ne i} \left(\sum_{s_k \in \sup_{\leq t}(x_k)} x_k^{s_k} \bar P_{s_k} + 
\sum_{s_k \in \sup_{>t}(x_k)} x_k^{s_k} \right) \\
&= P_t \left(\frac{e_t^\T A y}{p_t} \right)^{n-1} \\
&= \left(e_t^\T A y\right)^{n-1} \enspace .
\end{align*}

We now prove the ``only if'' direction.
Let us suppose that $x$ satisfies \cref{eq:mne1}--(\ref{eq:mne4}).
\cref{eq:zero}--(\ref{eq:ne4}) are respectively identical to 
\cref{eq:mzero}--(\ref{eq:mne4}),
and therefore are satisfied. Let us consider \cref{eq:ne1}.  We fix some $T < t \leq K$.
From \cref{eq:develop} we have, for every $i \in [n]$,
\begin{equation}
\label{eq:little}
u_i (x_{-i},t) = P_{t} \prod_{k\neq i}
\left(\sum_{j \leq t} x_k^{j} \bar{P}_{j} + 
\sum_{j > t}   x_k^{j}  \right), 
\end{equation}
and
\begin{equation}
\label{eq:big}
u_i (x_{-i},T) = P_{T} \prod_{k\neq i}
\left( x_k^{T} \bar{P}_{T} + 
\sum_{j > T}   x_k^{j}  \right).
\end{equation}
Therefore
\begin{align*}
 \left(  e_t^{\T}Ax_i \right)^{n-1} &= 
 p_{t}^{n-1} \left(\sum_{j \leq t} x_i^{j} \bar{P}_{j} +  \sum_{j > t}   x_i^{j}  \right)^{n-1} \\
& = \frac{p_t^{n-1}P_{t}^{n-1}}{P_t^{n}} \frac{1}{u_i(x_{-i}, t)^{n-1}} \prod_{\ell=1}^n  u_{\ell}(x_{-\ell}, t) \\
& = \frac{1}{u_i(x_{-i}, t)^{n-1}} \prod_{\ell=1}^n  u_{\ell}(x_{-\ell}, t)  \\
&= \frac{1}{u_i(x_{-i}, T)^{n-1}} \prod_{\ell=1}^n  u_{\ell}(x_{-\ell}, T) \\
& =  \left(  e_T^{\T}Ax_i \right)^{n-1}  \enspace .
\end{align*}
The second equality follows from the fact that 
$$\prod_{\ell=1}^n  u_{\ell}(x_{-\ell}, t) = P_t^n \prod_{k= 1}^n  \left(\sum_{j \leq t} x_k^{j} \bar{P}_{j} + 
\sum_{j > t}   x_k^{j}  \right)^{n-1} ,$$
for the third equality we used $p_{t-1}^{n-1}=P_{t-1}$, and for the fourth equality we used \cref{eq:mne1}.  
As $e_t^{\T}Ax_i, e_T^{\T}Ax_i$ are real and nonnegative, it follows that $e_t^{\T} A x_i = e_T^{\T}A x_i$ for all $T < t \le K$, 
establishing that \cref{eq:ne1} holds. 

Now we turn to \cref{eq:ne2}. Since for every $i \in [n]$, 
\cref{eq:ne1},~(\ref{eq:zero}) and~(\ref{eq:ne4}) are
satisfied by $x_i$, from \cref{lem:unicity} we know that $x_1 = \dots = x_n$. Thus for every $i \in [n]$,
we have
\begin{align*}
 \left(  e_{T-1}^{\T}Ax_i \right)^{n-1} &= 
p_{T-1}^{n-1} \prod_{k\neq i} \left(   \sum_{j = T}^K   x_k^{j}  \right) \\
&= u_i (x_{-i},T-1)  \\
&  \leq u_i(x_{-i}, T) \\
& = P_{T} \prod_{k\neq i} \left( x_k^{T} \bar{P}_{T} + \sum_{j > T}   x_k^{j}  \right) \\
& = \left(  e_T^{\T}Ax_i \right)^{n-1}  \enspace ,
\end{align*}
where for the first equality we used $x_1 = \dots = x_n$, for the second equality we used $p_{t-1}^{n-1}=P_{T-1}$, and
for the inequality we used \cref{eq:mne2}.
The statement of the Lemma therefore follows.
\end{proof}

\begin{theorem}
\label{thm:msymmetric}
An $n$-party stingy quantum race always has a unique coinciding Nash equilibrium
$x=(x_1, \ldots, x_n)$, where $x_1 = \cdots = x_n$.
If the game is defined by the probabilities 
$0 < P_1 < P_2 < \cdots < P_K$ then
the coinciding equilibrium 
has support $\{T^*, T^*+1,  \ldots, K\}$, where $T^*=T_B^*$ of the reduced game, 
and for all $i \in [n]$, the distribution $x_i$  is defined on its support as
$$
x_i^t =
\begin{cases}
{r_{T^*}^B}/{z_{T^*}^B}& \text{if} ~~~ t= T^*,\\
{q_t^B}/{z_{T^*}^B}& \text{if} ~~~ T^* < t \leq K. 
\end{cases}
$$
\end{theorem}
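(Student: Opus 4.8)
The plan is to derive this theorem as a bookkeeping chain through the three results immediately preceding it: \cref{cor:multi-interval} constrains the support of any coinciding equilibrium to be an interval ending at $K$, \cref{lem:mcharacterize} rephrases "coinciding Nash equilibrium with support $\{T,\dots,K\}$" as the system \cref{eq:mne1}--(\ref{eq:mne4}) holding for every player $i$, \cref{lem:reduction} transports that system to the reduced two-player stingy race (showing it is equivalent to each $x_i$ satisfying \cref{eq:ne1}--(\ref{eq:ne4}) there), and \cref{thm:unique1-5} then simultaneously pins down the left endpoint of the interval and produces the unique solution. A preliminary remark is that the reduced game is well defined: since $x \mapsto x^{1/(n-1)}$ is strictly increasing on $(0,1]$, the numbers $p_j = P_j^{1/(n-1)}$ form a strictly increasing sequence in $(0,1]$, so $T_B^*$ of the reduced game is meaningful, with the convention of \cref{def:tstar} handling the case $p_K = P_K = 1$.

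For existence, I would take $T = T_B^*$ of the reduced game and set $x_1 = \cdots = x_n = y$, where $y$ is the vector furnished by \cref{thm:unique1-5}. By that theorem $y$ satisfies \cref{eq:ne1}--(\ref{eq:ne4}) of the reduced game, so the ``if'' direction of \cref{lem:reduction} gives that $x = (x_1,\dots,x_n)$ satisfies \cref{eq:mne1}--(\ref{eq:mne4}) for every $i$, and \cref{lem:mcharacterize} then certifies that $x$ is a coinciding Nash equilibrium with support $\{T^*,T^*+1,\dots,K\}$ and the stated formula. For uniqueness, let $x = (x_1,\dots,x_n)$ be an arbitrary coinciding equilibrium with common support $W$; by \cref{cor:multi-interval}, $W = \{T,T+1,\dots,K\}$ for some $T$. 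Then \cref{lem:mcharacterize} says $x$ satisfies \cref{eq:mne1}--(\ref{eq:mne4}) for every $i$, so the ``only if'' direction of \cref{lem:reduction} shows each $x_i$ satisfies \cref{eq:ne1}--(\ref{eq:ne4}) of the reduced game with this same $T$, and \cref{thm:unique1-5} forces $T = T_B^*$ and each $x_i$ to equal the unique solution $y$ of that lemma. In particular $x_1 = \cdots = x_n = y$, matching the equilibrium exhibited above.

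The only delicate point is the degenerate endpoint case $T = K$ with $p_K = P_K = 1$, which \cref{thm:unique1-5} explicitly excludes; there one argues directly, exactly as in the opening of the proof of \cref{thm:unique1-5}, that a one-point support $\{K\}$ can never be an equilibrium when $K \ge 2$ (playing $K-1$ strictly dominates playing $K$ against a profile supported on $\{K\}$), so this case contributes no coinciding equilibria and is harmlessly absorbed. Apart from that remark, I do not expect a genuine obstacle: all the substantive work has already been carried out in \cref{lem:reduction} and \cref{thm:unique1-5}, and the proof of \cref{thm:msymmetric} is essentially the assembly of those pieces.
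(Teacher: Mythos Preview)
Your proposal is correct and follows essentially the same route as the paper: combine \cref{lem:mcharacterize} and \cref{lem:reduction} to translate the coinciding-equilibrium condition into each $x_i$ satisfying \cref{eq:ne1}--(\ref{eq:ne4}) of the reduced game, then invoke \cref{thm:unique1-5} to force $T = T_B^*$ and the unique solution. Your extra care about the well-definedness of the reduced game and the degenerate $T=K$, $P_K=1$ case is sound and slightly more thorough than the paper's terse version.
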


\begin{proof}
Combining \cref{lem:mcharacterize} and \cref{lem:reduction}, we get that 
$x$ is a coinciding Nash equilibrium of support $\{T, T+1,  \ldots, K\}$
if and only if $x_i $ satisfies \cref{eq:ne1}--(\ref{eq:ne4})
for the reduced game, for all $i \in [n]$.
By \cref{thm:unique1-5} this happens if and only if 
$T = T^*_B$ of the reduced game, and the unique solution for $x_i$, for $i \in [n]$, is the one stated by the Theorem.
\end{proof}


\subsection{Collision probability of the stingy coinciding equilibrium}
Our main objective in this section is to upper bound the collision probability---the probability that 
two or more players succeed at the same time---in the coinciding equilibrium 
found in the last section for an $\ell$-dense stingy $n$-player quantum race.  To help with this, we make 
the following definition.
\begin{definition}
For a joint probability distribution $y = (y_1, \ldots, y_n) \in \Delta$, let 
$\cp_i^m(y)$ denote the probability that player $i$ succeeds first and that exactly $m$ players 
(including $i$) succeed at the same time under the joint strategy $y$.  Let $\cp_i(y)=\sum_{m=2}^n \cp_i^m(y)$ 
denote the probability that player $i$ succeeds first and at least one other player succeeds at the same time.
\end{definition}
 
Let us also set up notation to describe the coinciding equilibrium in a stingy multiplayer race.  Define
\begin{align*}
q_i &= \frac{1}{P_i} \left(\frac{1}{P_{i-1}^{1/(n-1)}} - \frac{1}{P_i^{1/(n-1)}} \right) \\
r_T &= \frac{1}{\bar{P}_T}\left(\frac{1}{P_K^{1/(n-1)}} - \sum_{i=T+1}^K \bar P_i q_i \right) \\
z_T &= r_T + \sum_{i=T+1}^K q_i \enspace .
\end{align*}
Let $T^*$ be the starting point of the support of the coinciding equilibrium.  Then by 
\cref{thm:msymmetric}, the strategy of player $i$ in the coinciding equilibrium is given by
\begin{equation}
\label{eq:def_multiparty}
x_i^t =
\begin{cases}
{r_{T^*}}/{z_{T^*}}& \text{if} ~~~ t= T^*,\\
{q_t}/{z_{T^*}}& \text{if} ~~~ T^* < t \leq K, \\
0 & \text{if} ~~~ t < T^* \enspace .
\end{cases}
\end{equation}

To obtain concrete bounds on the collision probability, we will need bounds on $z_{T^*}$ and $P_{T^*-1}$.
\begin{lemma}
\label{payoffmul}
In any stingy multiplayer quantum race with $n$ players $(1/z_{T^*})^{n-1} < 1/n$.  
\end{lemma}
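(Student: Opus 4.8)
The plan is to extract the value of $z_{T^*}$ from a decomposition of the outcomes of the race, exactly mirroring the two-player argument in \cref{thm:tstar_upper}. In the coinciding equilibrium, each player's probability of winning (being the unique first success, or tied-first — but here we are in the \emph{stingy} setting, so a tie gives nothing, and still the relevant ``I succeed and am weakly first'' probability) is governed by the payoff normalization. First I would recall from \cref{thm:msymmetric} and \cref{lem:reduction} that when all players use the coinciding strategy $x_i$ from \cref{eq:def_multiparty}, the payoff to player $i$ from playing any pure strategy $t$ in the support is $u_i(x_{-i}, t) = \left(e_t^\T A x_i\right)^{n-1} = (1/z_{T^*})^{n-1}$, using the fact (established in the proof of \cref{lem:reduction}) that $e_t^\T A x_i = 1/z_{T^*}$ for $t$ in the support of the reduced-game equilibrium. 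So player $i$'s expected payoff, which equals the probability that player $i$ is the unique earliest success, is exactly $(1/z_{T^*})^{n-1}$.

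Next I would set up the partition of the probability space by ``who wins.'' For each $i \in [n]$, let $w_i$ be the probability that player $i$ succeeds and no player with a strictly smaller measurement time succeeds, and among players measuring at the same time as $i$ none succeed — i.e.\ player $i$ is the unique first success. By symmetry $w_i = (1/z_{T^*})^{n-1}$ for all $i$. These $n$ events are disjoint, so $\sum_i w_i = n \cdot (1/z_{T^*})^{n-1}$ is the probability that \emph{some} unique player wins. This quantity is at most $1$, but we need a strict inequality, so the key observation is that there is positive probability mass on the complementary event: either nobody succeeds at all, or two or more players succeed simultaneously (a collision). I would argue this complementary mass is strictly positive — for instance, the event that all $n$ players measure at time $K$ simultaneously and all fail has probability $(x_i^K)^n \cdot (\bar P_K)^n > 0$ since $x_i^K = q_K/z_{T^*} > 0$ and $\bar P_K > 0$ (note $P_K < 1$ is not assumed, but even if $P_K = 1$ one can instead use the event that everyone fails when measuring at smaller times in the support; alternatively the event ``exactly one player measures and everyone else measures something else and exactly one of the simultaneous-at-the-top group succeeds'' — in any case some positive-probability event outside the $n$ disjoint win events exists). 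Hence $n \cdot (1/z_{T^*})^{n-1} < 1$, which is the claim.

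The main obstacle I anticipate is handling the boundary case $P_K = 1$ cleanly, since then $\bar P_K = 0$ kills the naive ``everyone fails at the top'' witness. The fix is to observe that $T^* < K$ in that case (as noted in the discussion after \cref{def:tstar}, $r_{K-1}^B > 0$ when $p_K = 1$, forcing the support to be strictly larger than $\{K\}$), so there is a strategy $t^* := T^*$ in the support with $P_{t^*} < 1$, and the event ``all $n$ players measure at time $t^*$ and all fail'' has probability $(r_{T^*}/z_{T^*})^n (\bar P_{T^*})^n > 0$, giving the strict inequality. I would also double-check that $r_{T^*} > 0$ (true by definition of $T^*$) so that $x_i^{T^*} > 0$ genuinely. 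Modulo this case analysis, the proof is short: it is just the disjointness of the $n$ ``unique winner'' events plus a positive-probability witness for the ``no unique winner'' event.
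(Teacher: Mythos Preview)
Your proposal is correct and follows essentially the same approach as the paper: identify each player's winning probability as $(1/z_{T^*})^{n-1}$, note that the $n$ winning events are pairwise disjoint, and conclude strict inequality from the positive probability of the ``no unique winner'' event. Your treatment is in fact more careful than the paper's, which simply asserts that the no-winner probability is positive without addressing the $P_K = 1$ boundary case you handle explicitly.
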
 

\begin{proof}
The probability that any given player wins the race is $(1/z_T^*)^{n-1}$.  As in a stingy race no two 
players can win at the same time, the probability that the race has a winner is exactly 
$n/z_{T^*}^{n-1}$.  The probability that no one wins the race, which occurs if two or more players 
succeed at the same time, or if no one succeeds, is positive, thus $(1/z_{T^*})^{n-1} < 1/n$.
\end{proof}

\begin{restatable}{theorem}{boundPTstar}
\label{lem:bound_PTstar}
Let $P_1, \ldots, P_K$ define a stingy $n$-player quantum race with $n \ge 2$.  Then $P_{T^*-1} < \frac{1}{n}$.  
If in addition $P_1,....,P_K$ form an $\ell$-dense sequence and $4\e \ell n \le K$ then 
$P_{T^*-1} \geq \tfrac{1}{2\e n}$, where $\e$ is Euler's number.
\end{restatable}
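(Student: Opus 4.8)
The plan is to read off both bounds from the coinciding equilibrium $x=(x_1,\dots,x_n)$ of \cref{thm:msymmetric}, whose common support is $\{T^*,T^*+1,\dots,K\}$ with $T^*=T^*_B$ of the reduced game, together with \cref{payoffmul}. Throughout, write $y:=x_1=\dots=x_n$ and let $A$ be the payoff matrix of the first player of the reduced game; the computation in the proof of \cref{lem:reduction} gives $u_i(x_{-i},t)=(e_t^{\T}Ay)^{n-1}$ for every $t\in[K]$, and since $x_i$ is the normalised solution of \cref{lem:unicity} we have $e_t^{\T}Ay=1/z_{T^*}$ for $t$ in the support. Hence every player's equilibrium payoff equals $(1/z_{T^*})^{n-1}$, which is $<1/n$ by \cref{payoffmul}. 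Here $z_{T^*},r_T,q_i$ are the quantities of the reduced game, which coincide with the $z_T,r_T,q_i$ defined in this subsection, and $T^*$ is the smallest index $T$ with $r_T>0$.

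\emph{Upper bound.} Assume $T^*\ge 2$ (for $T^*=1$ put $P_0=0$, making the bound trivial). Since the other players all play times $\ge T^*$, the pure strategy $T^*-1$ beats them all and is never tied, so deviating to it yields payoff exactly $P_{T^*-1}$; concretely $u_i(x_{-i},T^*-1)=(e_{T^*-1}^{\T}Ay)^{n-1}=\big(p_{T^*-1}\sum_{j\ge T^*}y_j\big)^{n-1}=p_{T^*-1}^{\,n-1}=P_{T^*-1}$. As $T^*$ is a best response — equivalently, by \cref{eq:mne2} of \cref{lem:mcharacterize} — we get $P_{T^*-1}\le(1/z_{T^*})^{n-1}<1/n$.

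\emph{Lower bound.} Now suppose $P_1,\dots,P_K$ is $\ell$-dense and $4\e\ell n\le K$, and assume for contradiction that $P_{T^*}<\tfrac{1}{2\e n}+\tfrac{\ell}{K}$. By $\ell$-density $P_{T^*+1}\le P_{T^*}+\tfrac{\ell}{K}<\tfrac{1}{2\e n}+\tfrac{2\ell}{K}\le\tfrac{1}{\e n}<\tfrac{1}{n}=:\theta$, and $P_K\ge 1-\tfrac{\ell}{K}>\theta$, so $M:=\max\{i:P_i\le\theta\}$ satisfies $T^*+1\le M\le K-1$ and, again by $\ell$-density, $P_M>\theta-\tfrac{\ell}{K}$. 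Since $T^*$ is the smallest index with $r_T>0$ we have $\sum_{i=T^*+1}^{K}\bar P_i q_i<\tfrac{1}{p_K}$, where $q_i=\tfrac{1}{P_i}(\tfrac{1}{p_{i-1}}-\tfrac{1}{p_i})$ and $\bar P_iq_i=(\tfrac{1}{P_i}-1)(\tfrac{1}{p_{i-1}}-\tfrac{1}{p_i})$. For $T^*+1\le i\le M$ we have $\tfrac1{P_i}-1\ge\tfrac1\theta-1=n-1$, so restricting the sum to these indices and telescoping,
\[
(n-1)\Big(\frac{1}{p_{T^*}}-\frac{1}{p_M}\Big)=(n-1)\sum_{i=T^*+1}^{M}\Big(\frac{1}{p_{i-1}}-\frac{1}{p_i}\Big)\le\sum_{i=T^*+1}^{M}\bar P_iq_i<\frac{1}{p_K}.
\]
Substituting $\tfrac{1}{p_{T^*}}=P_{T^*}^{-1/(n-1)}$, $\tfrac{1}{p_M}=P_M^{-1/(n-1)}\le(\theta-\tfrac\ell K)^{-1/(n-1)}$ and $p_K\ge(1-\tfrac\ell K)^{1/(n-1)}$ and rearranging gives
\[
P_{T^*}^{-1/(n-1)}<\Big(\tfrac1n-\tfrac\ell K\Big)^{-1/(n-1)}+\frac{1}{(n-1)(1-\ell/K)^{1/(n-1)}}.
\]
A short estimate of the $(n-1)$-th roots — using $\tfrac\ell K\le\tfrac{1}{4\e n}$, \cref{eq:recip}, \cref{eq:plus_sqrt}, $\e^t\ge1+t$, and $(1+\tfrac a{n-1})^{n-1}\le\e^{a}$ — bounds the right-hand side by $n^{1/(n-1)}\big(1+\tfrac{c}{n-1}\big)$ for an absolute constant $c<1.2$, whence $P_{T^*}\ge\tfrac1n\e^{-c}>\tfrac1n\cdot\tfrac34\cdot\tfrac1\e=\tfrac{1}{2\e n}+\tfrac{1}{4\e n}\ge\tfrac{1}{2\e n}+\tfrac\ell K$, contradicting the assumption. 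Hence $P_{T^*}\ge\tfrac{1}{2\e n}+\tfrac\ell K$; in particular $P_{T^*}>\tfrac\ell K\ge P_1$, so $T^*\ge 2$, and therefore $P_{T^*-1}\ge P_{T^*}-\tfrac\ell K\ge\tfrac{1}{2\e n}$.

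\emph{Main obstacle.} The delicate step is the final estimate in the lower bound: one must control $\big(\tfrac1n-\tfrac\ell K\big)^{-1/(n-1)}$ and $(1-\ell/K)^{1/(n-1)}$ uniformly over all $n\ge 2$ tightly enough that, after the unavoidable loss $\ell/K\le\tfrac{1}{4\e n}$ incurred in passing from $P_{T^*}$ to $P_{T^*-1}$, the surviving constant is exactly $\tfrac1{2\e}$; this is precisely where the $4\e$ in the hypothesis $4\e\ell n\le K$ is used. A coarser threshold than $\theta=1/n$, or looser root estimates, would only produce a worse constant in place of $\tfrac{1}{2\e}$.
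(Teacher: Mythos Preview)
Your proof is correct and follows essentially the same route as the paper: the upper bound is identical (deviate to $T^*-1$ and invoke \cref{payoffmul}), and for the lower bound both arguments introduce the threshold $\theta=1/n$, telescope $\sum \bar P_i q_i$ over the range where $P_i\le\theta$ using $r_{T^*}>0$, and then estimate the resulting $(n-1)$-th roots. The only differences are cosmetic: you wrap the lower bound in a contradiction (using the assumed smallness of $P_{T^*}$ solely to guarantee $M\ge T^*+1$, which the paper handles as a preliminary case split), and your final ``short estimate'' with the constant $c<1.2$ is left implicit where the paper carries out an explicit logarithmic computation. One small remark: the citation of \cref{eq:plus_sqrt} in that step appears spurious, since no square roots arise for general $n$; the relevant tools are just $(1+x)^{1/m}\le 1+x/m$ and $(1+a/(n-1))^{n-1}\le \e^a$, which do give the constant you claim.
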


\begin{proof}
As $T^*-1$ is not in the support of the coinciding Nash equilibrium, it must be the case that $P_{T^*-1} \le \left(\frac{1}{z_{T^*}}\right)^{n-1} < \frac{1}{n}$, by \cref{payoffmul}.

Now we turn to the lower bound.
If $P_{T^*+1} > \tfrac{1}{n}$ then as $P_1, \ldots, P_K$ is $\ell$-dense we must have 
$P_{T^*-1} \ge \tfrac{1}{n} - \tfrac{2\ell}{K} \ge \tfrac{1}{2\e n}$ and the proof is done.  
Otherwise, define $S = \max_{i > T^*} \{i : P_i \le \tfrac{1}{n}\}$. Note that by definition and 
$\ell$-density we have $\frac{1}{n} \geq p_S \geq \frac{1}{n} - \frac{\ell}{K}$. Now we know from the positivity condition of $r_{T*}$ that    

\begin{align*}
    1 & > \sum_{T^*+1}^K \bar{P}_iq_i \\
    & \geq \sum_{T^*+1}^S \frac{1-P_i}{P_i} \left( P_{i-1}^{-\frac{1}{n-1}}- P_{i}^{-\frac{1}{n-1}}\right) \\
    & \geq  \frac{1-P_S}{P_S}\left( P_{T^*}^{-\frac{1}{n-1}} - P_{S}^{-\frac{1}{n-1}} \right) \\
    & \geq (n-1) \left(P_{T^*}^{-\frac{1}{n-1}} - \left(\frac{Kn}{K-\ell n}\right)^{\frac{1}{n-1}}\right) \\
    & \geq (n-1) \left(P_{T^*}^{-\frac{1}{n-1}} - \left(\frac{4\e n}{4\e-1}\right)^{\frac{1}{n-1}}\right) \enspace .
\end{align*}
For the last inequality we have used that $\ell n \le \frac{K}{4\e}$.  Rearranging the last inequality we have
\[
P_{T*} \geq \left(\frac{n-1}{1+ (n-1)\left(1+\frac{4\e n}{4\e-1}\right)^{\frac{1}{n-1}}}\right)^{n-1} \enspace .
\]
This means
\begin{align*}
\ln(P_T^*) &\ge (n-1) \left(\ln(n-1) - \ln\left(1+ (n-1)\left(\frac{4\e n}{4\e-1}\right)^{1/(n-1)}\right) \right) \\
&=-\ln\left(\frac{4\e n}{4\e -1}\right) - (n-1)\ln\left(1+ (n-1)^{-1}\left(\frac{4\e n}{4\e -1}\right)^{-1/(n-1)}\right) \enspace.
\end{align*}
It now remains to upper bound
\[
(n-1)\ln\left(1+ (n-1)^{-1}\left(\frac{4\e n}{4\e -1}\right)^{-1/(n-1)}\right) \le \left(\frac{4\e n}{4\e-1}\right)^{-1/(n-1)} \le 1 \enspace,
\]
for $n \ge 2$, where we have used $\ln(1+x) \le x$ to obtain the first inequality.
Putting it together we have $\ln(P_{T^*}) \ge -\ln(\frac{4en}{4\e-1}) -1$, which implies 
$P_{T^*-1} \ge P_{T^*} - \frac{\ell}{K} \ge \frac{1}{2\e n}$.
\end{proof}

The next lemma bounds the collision probability for player $i$ when all players but player $i$ play according to the 
coinciding equilibrium, and player $i$ plays an arbitrary strategy $v$.  We will use this lemma to bound the 
total collision probability and also in \cref{sec:multiplayer_races} to show that the stingy coinciding equilibrium
is an approximate equilibrium in a multiparty race.
\begin{lemma}
\label{lemcp}
Let $P_1,....,P_K$ define an $\ell$-dense stingy $n$-player quantum race, $n \ge 2$, with 
$4\e n\ell \leq K$, and let 
$x$ be the unique coinciding equilibrium given by \cref{eq:def_multiparty}.  Then 
$\mathrm{cp}_i(x_{-i},v) \leq \frac{8\e \ell}{K}$ for any $i \in [n]$ and $v \in \Delta_i$. 
\end{lemma}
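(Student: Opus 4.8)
The plan is to bound $\mathrm{cp}_i(x_{-i},v)$ by summing over the first success time $t$ of player $i$, and over which of the other players tie with her at time $t$. Fix $i$ and a strategy $v = v_i \in \Delta_i$. Player $i$ succeeds first at time $t$ and ties with at least one other player exactly when: player $i$ plays $t$ and succeeds (probability $v_t P_t$); every other player either plays a time $> t$, or plays a time $\le t$ and fails; and at least one other player plays exactly $t$ and succeeds. Writing $w_k(t) = \sum_{s \le t} x_k^s \bar P_s + \sum_{s > t} x_k^s$ for the probability that player $k \ne i$ does not succeed strictly before time $t$ (so that $u_i(x_{-i},t) = P_t \prod_{k \ne i} w_k(t)$), and noting that all players $k \ne i$ use the same strategy $x$ from \cref{eq:def_multiparty}, I would bound the probability of a tie by the crude estimate: a tie at time $t$ requires player $i$ to succeed at $t$ (probability $v_t P_t$) and at least one specific other player to play $t$ and succeed (probability $x_t P_t \le q_t P_t$ for $t > T^*$, or $r_{T^*} P_{T^*}$ for $t = T^*$), with the remaining $n-2$ players contributing a factor at most $1$. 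A union bound over the $n-1$ choices of the tying player gives
\[
\mathrm{cp}_i(x_{-i},v) \le (n-1) \sum_{t=T^*}^K v_t P_t^2 x_t \le (n-1)\left( v_{T^*} P_{T^*}^2 r_{T^*} + \sum_{t=T^*+1}^K v_t P_t^2 q_t \right) \cdot \frac{1}{z_{T^*}} \cdot z_{T^*},
\]
so the real content is to show $(n-1)\big( v_{T^*} P_{T^*}^2 r_{T^*} + \sum_{t > T^*} v_t P_t^2 q_t \big) \le \frac{8\e\ell}{K}$ (up to the normalization by $z_{T^*}$, which I will need to track carefully — it is possible the cleanest route keeps the $1/z_{T^*}$ factors from the equilibrium probabilities explicit rather than in $v$).

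The key computation mirrors the proof of \cref{thm:col}. Using $r_{T^*} \le q_{T^*}$, I bound $v_{T^*} P_{T^*}^2 r_{T^*} + \sum_{t>T^*} v_t P_t^2 q_t \le \sum_{t \ge T^*} v_t P_t^2 q_t$. Then substitute $q_t = \frac{1}{P_t}\big(P_{t-1}^{-1/(n-1)} - P_t^{-1/(n-1)}\big)$, so $P_t^2 q_t = P_t\big(P_{t-1}^{-1/(n-1)} - P_t^{-1/(n-1)}\big)$. By the $\ell$-density condition applied to the sequence $P_t^{1/(n-1)}$ — here I expect to need that $\ell$-density of $(P_t)$ implies a density-type bound on $(P_t^{1/(n-1)})$, namely $P_{t-1}^{-1/(n-1)} - P_t^{-1/(n-1)} \le \frac{1}{n-1} \cdot \frac{\ell}{K} \cdot \frac{1}{P_{t-1}}$ via the mean value theorem or convexity — this telescopes/bounds cleanly, and using $v$ being a probability distribution and $P_{t-1} \ge P_{T^*-1} \ge \frac{1}{2\e n}$ from \cref{lem:bound_PTstar} together with $(1/z_{T^*})^{n-1} < 1/n$ from \cref{payoffmul} to control the leftover $n$ and $z_{T^*}$ factors, everything collapses to something of order $\frac{n\ell}{K n P_{T^*-1}} \le \frac{2\e\ell}{K} \cdot (\text{constant})$, landing at $\frac{8\e\ell}{K}$.

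The main obstacle I anticipate is the bookkeeping around the exponent $1/(n-1)$: I need the right "density" statement for the transformed probabilities $P_t^{1/(n-1)}$, and I need to carefully match the factor $(n-1)$ from the union bound against the factor $\frac{1}{n-1}$ coming from differentiating $x \mapsto x^{1/(n-1)}$ — these should cancel, which is exactly why the final bound is independent of $n$, but getting the constant down to $8\e$ will require using $P_{T^*-1} \ge \frac{1}{2\e n}$ (not just $P_{T^*-1} < \frac1n$) to absorb the remaining factor of $n$. A secondary subtlety is handling the $t = T^*$ boundary term and the normalization by $z_{T^*}$ consistently; I would isolate this by writing the collision probability with the equilibrium weights $r_{T^*}/z_{T^*}$ and $q_t/z_{T^*}$ made explicit, reducing to an inequality of the form $\frac{n-1}{z_{T^*}}\big(\text{weighted sum}\big) \le \frac{8\e\ell}{K}$, and then bounding $\frac{1}{z_{T^*}}$ crudely by $1$ (or using $z_{T^*}^{n-1} > n$ if a power of $z_{T^*}$ is available to help) once the $n$'s have been paired off against $P_{T^*-1}$.
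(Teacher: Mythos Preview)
Your approach is essentially the same as the paper's: union bound over the $n-1$ other players, replace $r_{T^*}$ by $q_{T^*}$, expand $q_t$, use a mean-value/concavity estimate on $x \mapsto x^{1/(n-1)}$ so that the $\tfrac{1}{n-1}$ from the derivative cancels the $(n-1)$ from the union bound, and finish with $P_{T^*-1} \ge \tfrac{1}{2\e n}$ and $z_{T^*} \ge 1$.

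One correction to your bookkeeping: the mean-value bound you wrote,
\[
P_{t-1}^{-1/(n-1)} - P_t^{-1/(n-1)} \;\le\; \frac{1}{n-1}\cdot\frac{\ell}{K}\cdot\frac{1}{P_{t-1}},
\]
is too optimistic. Applying the MVT to $g(x)=x^{-1/(n-1)}$ actually gives
\[
P_{t-1}^{-1/(n-1)} - P_t^{-1/(n-1)} \;\le\; \frac{\ell}{K(n-1)}\,P_{t-1}^{-n/(n-1)} \;=\; \frac{\ell}{K(n-1)}\cdot\frac{1}{P_{t-1}}\cdot P_{t-1}^{-1/(n-1)},
\]
so there is an extra factor $P_{t-1}^{-1/(n-1)}$. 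This is exactly where the constant $4\e$ enters in the paper: one bounds $P_{t-1}^{-1/(n-1)} \le P_{T^*-1}^{-1/(n-1)} \le (2\e n)^{1/(n-1)} \le 4\e$ for all $n\ge 2$. The remaining factor of $2$ in $8\e$ comes from bounding $(P_t/P_{t-1})^{(n-2)/(n-1)} \le (1+2\e n\ell/K)^{(n-2)/(n-1)} \le 2$. Your heuristic ``$\tfrac{n\ell}{K n P_{T^*-1}}$'' misplaces where the $n$'s go, but since you already flagged the exponent bookkeeping as the main obstacle, you would have discovered this on writing it out.
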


\begin{proof}
We use the notation from the definition of the coinciding equilibrium in \cref{eq:def_multiparty}.  Note that since 
$p_{T^*} \ge \frac{1}{2\e n}$ by \cref{lemcp}, and we assume that $\frac{\ell}{K} \le \frac{1}{4\e n}$, it must be the case that 
$T^* \ge 2$, and therefore it makes sense to talk about $T^*-1$.  Further, note that $z_{T^*} \ge 1$ by \cref{payoffmul}.

We upper bound $\cp_i(x_{-i},v)$ by multiplying the probability that player $i$ and player $j$ succeed at the same time (not 
necessarily first) by $n-1$.  
\begin{align*}
\cp_i(x_{-i},v) &\leq  \frac{n-1}{z_T^*}\left( P_{T^*}^2 v_{T^*} r_{T^*} +\sum_{t = T^*+1}^K P_t^2 v_tq_t \right)\\
&\le \frac{n-1}{z_T^*} \sum_{t = T^*}^K P_t^2 v_tq_t  \\
&= \frac{n-1}{z_T^*} \sum_{t = T^*}^K P_t v_t \left(\frac{1}{P_{t-1}^{\frac{1}{n-1}}} - \frac{1}{P_{t}^{\frac{1}{n-1}}} \right)\\
&= \frac{n-1}{z_T^*} \sum_{t = T^*}^K P_t v_t 
\frac{P_t^{\frac{1}{n-1}}-P_{t-1}^{\frac{1}{n-1}}}{(P_{t-1}P_t)^{\frac{1}{n-1}}} \\
\end{align*}
For notational convenience, let $f(x) = x^{\frac{1}{n-1}}$.  By the concavity of $f$ on $[0,\infty)$ we can 
upper bound $f(P_t) - f(P_{t-1})$ by
\begin{equation}
\label{pdiff}
f(P_i) - f(P_{i-1}) \leq (P_i-P_{i-1}) f'(P_{i-1}) \leq \frac{\ell P_{i-1}^{\frac{2-n}{n-1}}}{K(n-1)} \enspace .
\end{equation} 
Using this upper bound, we can continue:
\begin{align*}
\mathrm{cp}_i(x_{-i},v)  &\le \frac{\ell}{Kz_{T^*}} 
\sum_{t=T^*}^K P_t v_t \frac{P_{t-1}^{\frac{2-n}{n-1}}}{(P_{t-1}P_t)^{\frac{1}{n-1}}} \\
&= \frac{\ell}{Kz_{T^*}} \sum_{t=T^*}^K v_t \frac{1}{P_{t-1}^{\frac{1}{n-1}}}\left(\frac{P_t}{P_{t-1}}\right)^{\frac{n-2}{n-1}}\\
&\le \frac{4\e \ell}{Kz_{T^*}} \sum_{t=T^*}^K v_t \left(\frac{P_t}{P_{t-1}}\right)^{\frac{n-2}{n-1}} \enspace, \\
\end{align*}
as $(2\e n)^{1/(n-1)} \le 4 \e$ for all $n \ge 2$.  Continuing, we have
\begin{align*}
\mathrm{cp}_i(x_{-i},v)&\le \frac{4\e \ell}{Kz_{T^*}} \sum_{t=T^*}^K v_t \left(1 + \frac{\ell}{KP_{t-1}}\right)^{\frac{n-2}{n-1}} \\
&\le \frac{4\e \ell}{Kz_{T^*}} \left(1 + \frac{2\e n\ell}{K}\right)^{\frac{n-2}{n-1}} \\
&\le \frac{8\e \ell}{K} \enspace .
\end{align*}
\end{proof}

\colprobm*

\begin{proof}
The probability that two or more players succeed at the same time is at most
\[
\sum_{i=1}^n \cp_i(x) \le \frac{8\e n\ell}{K} \enspace,
\]
by \cref{lemcp}.
\end{proof}
 
\subsection{Multiplayer quantum races}
\label{sec:multiplayer_races}
In this section we use our results about the stingy multiplayer quantum race to analyse the multiplayer quantum race.  
Namely, we show that the coinciding equilibrium in a stingy multiplayer quantum race is an approximate equilibrium 
in a multiplayer race.  The difference between a stingy multiplayer race and a multiplayer race is that in a 
multiplayer race, the payoff is equally divided amongst all players who succeed first at the same time.  
\begin{definition}[Multiplayer quantum race]
\label{def:multiplayer_race}
Let $u_i$ be the payoff function for player $i \in [n]$ in the $n$-player stingy race defined by $P_1 < \cdots < P_k$, 
as given by \cref{eq:develop}.  The payoff function $u_i'$ in the $n$-player quantum race defined by 
$P_1, \ldots, P_k$ is 
\[
u_i'(x) = u_i(x) + \sum_{m=2}^n \frac{\cp_i^m(x)}{m} \enspace .
\]
\end{definition}
While the tie-splitting payoff in \cref{def:multiplayer_race} is quite natural, one could imagine other 
definitions in-between stingy multiplayer races and the definition of multiplayer races we have given.  Our results in this 
section depend very weakly on the exact definition of how ties are split in a multiplayer race.  In fact, the only
property we use is 
\begin{equation} 
\label{eq:amne1}
u_i'(x)  \leq u_i(x) + \mathrm{cp}_i(x) \enspace .
\end{equation}
This property holds under any reasonable definition of tie-splitting.

Now we show \cref{thm:mapprox} from the introduction that the coinciding Nash equilibrium in a multiplayer stingy 
quantum race is an approximate Nash equilibrium in a multiplayer quantum race.

\mapprox*

\begin{proof}
Since $x$ is a Nash equilibrium in the stingy race, for any $i \in [n]$ and probability vector $v \in \Delta_i$ we have
\begin{equation} 
\label{eq:amne2}
u_i(x) \geq u_i(x_{-i},v) \enspace .
\end{equation}
By \cref{def:multiplayer_race} and \cref{eq:amne1} we have 
\begin{align*}
u_i'(x) \ge u_i(x) &\ge u_i'(x_{-i},v) - \mathrm{cp}_i(x_{-i},v) \\
&\ge u_i'(x_{-i},v) - \frac{8\e \ell}{K} \enspace,
\end{align*}
by \cref{lemcp}.
\end{proof}

\section{Conclusion}
We conclude with some open problems.  
\begin{itemize}
\item Can we characterize all Nash equilibria in the multiparty stingy quantum race?  We have given a coinciding Nash 
equilibrium and shown that it is unique, however, it would be interesting to classify the Nash equilibria where players play 
strategies with different supports.
\item Can we show that the approximate Nash equilibrium we have given for a multiparty quantum race is nearly optimal 
in terms of payoff among all symmetric Nash equilibria?  We were able to do this in the two-party case, however, this proof 
used tools like quadratic programming that are not available in the multiparty case.
\item Our model of quantum races assumes that if a player measures and is unsuccessful then she simply stops playing.  A more realistic 
scenario is that, if a player measures and fails, she immediately restarts and tries again.  Do the Nash equilibria in this version of 
the game change substantially?  The large number of available strategies in this version of the game make it much more difficult 
to analyze.
\end{itemize}

\section*{Acknowledgments}
We would like to thank Gavin Brennen, Hartmut Klauck, and Marco Tomamichel for very useful 
discussions about game theory and quantum Bitcoin mining.  We would also like to thank Or Sattath 
for sharing an early version of his work \cite{Sat18} and discussions about its implications.

Part of this work was done while Troy Lee was at the School of Physical and Mathematical Sciences, Nanyang 
Technological University and the Centre for Quantum Technologies, Singapore.  Troy Lee and Maharshi Ray 
were supported in part by the Singapore National Research 
Fellowship under  NRF RF Award No.\  NRF-NRFF2013-13.  This research was further
partially funded by the Singapore Ministry of Education and the National Research Foundation
under grant R-710-000-012-135, and in part by the QuantERA ERA-NET Cofund project QuantAlgo.


\begin{thebibliography}{GKW{\etalchar{+}}16}

\bibitem[ABL{\etalchar{+}}17]{ABLST17}
Divesh Aggarwal, Gavin Brennen, Troy Lee, Miklos Santha, and Marco Tomamichel.
\newblock Quantum attacks on {B}itcoin and how to prevent against them.
\newblock Technical Report arXiv:1710.10377, arXiv, 2017.

\bibitem[Bac02]{back02}
Adam Back.
\newblock Hashcash---a denial of service counter-measure, 2002.
\newblock Available at: http://www.hashcash.org/papers/hashcash.pdf.

\bibitem[Bit18]{bitmain}
Bitmain.
\newblock Bitmain {A}ntminer {S}9.
\newblock \url{https://shop.bitmain.com/antminer_s9_asic_bitcoin_miner.html},
  2018.
\newblock Accessed 2018-02-16.

\bibitem[BK17]{BK17}
Alex Biryukov and Dmitry Khovratovich.
\newblock Equihash: Asymmetric proof-of-work based on the generalized birthday
  problem.
\newblock {\em Ledger}, 2:1--30, 2017.

\bibitem[But13]{But13}
Vitalik Buterin.
\newblock Bitcoin is not quantum safe, and how we can fix it when needed.
\newblock
  \url{https://bitcoinmagazine.com/articles/bitcoin-is-not-quantum-safe-and-how-we-can-fix-1375242150/},
  2013.

\bibitem[CL99]{CL99}
Miguel Castro and Barbara Liskov.
\newblock Practical byzantine fault tolerance.
\newblock In {\em Third Symposium on Operating Systems Design and
  Implementation}, 1999.

\bibitem[DH08]{DH08}
Catalin Dohotaru and Peter H{\o}yer.
\newblock Exact quantum lower bound for grover's problem.
\newblock Technical Report arXiv:0810.3647, arXiv, 2008.

\bibitem[Dor60]{Dorn1960}
William~S. Dorn.
\newblock Duality in quadratic programming.
\newblock {\em Quarterly of applied mathematics}, 18(2):155--162, 1960.

\bibitem[EGS14]{ES16}
Ittay Eyal and Emin G{\"u}n~Sirer.
\newblock Majority is not enough: {B}itcoin mining is vulnerable.
\newblock In {\em 18th International Conference on Financial Cryptography and
  Data Security}, 2014.
  
\bibitem[GKW{\etalchar{+}}16]{GKW+16}
Arthur Gervais, Ghassan Karame, Karl W{\"u}st, Vasileios Glykantzis, Hubert
  Ritzdorf, and Srdjan Capkun.
\newblock On the security and performance of proof of work blockchains.
\newblock In {\em Proceedings of the 2016 {ACM} {SIGSAC} Conference on Compute
  and Communications Security (CCS'16)}, pages 3--16, 2016.

\bibitem[Gro96]{Grover:1996}
Lov~K. Grover.
\newblock A fast quantum mechanical algorithm for database search.
\newblock In {\em Proceedings of the Twenty-eighth Annual ACM Symposium on
  Theory of Computing}, STOC '96, pages 212--219, New York, NY, USA, 1996. ACM.

\bibitem[IKL{\etalchar{+}}11]{IKLMPT11}
Nicole Immorlica, Adam~Tauman Kalai, Brendan Lucier, Ankur Moitra, Andrew
  Postlewaite, and Moshe Tennenholtz.
\newblock Dueling algorithms.
\newblock In {\em Proceedings of the forty-third annual {ACM} symposium on
  theory of computing (STOC'11)}, pages 215--224, 2011.

\bibitem[KRDO17]{KRBO17}
Aggelos Kiayias, Alexander Russell, Bernardo David, and Roman Oliynykov.
\newblock Ouroboros: A provably secure proof-of-stake blockchain protocol.
\newblock In {\em CRYPTO}, pages 357--388, 2017.

\bibitem[MS64]{MS1964}
Olvi~L. Mangasarian and H.~Stone.
\newblock Two-person nonzero-sum games and quadratic programming.
\newblock {\em Journal of mathematical analysis and applications}, 9:348--355,
  1964.

\bibitem[Nak09]{Nak09}
Satoshi Nakamoto.
\newblock Bitcoin: A peer-to-peer electronic cash system, 2009.
\newblock Available at: http://www.bitcoin.org/pdf.

\bibitem[Nas51]{Nash1951}
John~F. Nash.
\newblock Non-cooperative games.
\newblock {\em Annals of Mathematics}, 54(2):286--295, 1951.

\bibitem[Sat18]{Sat18}
Or~Sattath.
\newblock On the insecurity of quantum {B}itcoin mining.
\newblock Technical Report arXiv:1804.08118, arXiv, 2018.
\newblock Appeared in QCRYPT 2018.

\end{thebibliography}

\newcommand{\etalchar}[1]{$^{#1}$}

\appendix

\section{Symmetric 2-player stingy quantum races}
In this appendix we finish the characterization of all Nash equilibria in a symmetric 2-player stingy quantum race.  We have 
seen in \cref{cor:interval} that a Nash equilibrium in a symmetric stingy quantum race is either coinciding, alternating, or 
alternating-coinciding, and in \cref{thm:symmetric} we have given the unique coinciding equilibrium.  
In this section, we characterize when alternating and alternating-coinciding equilibria exist. We begin by looking at alternating equilibria.

For the rest of this section, fix a stingy symmetric quantum race defined by probabilities 
$0  < p_1 < p_2 < \ldots < p_K \leq 1$, for some integer $K > 0$.

\subsection{Alternating equilibrium}
\label{sec:app_alt}
\subsubsection{At most one equilibrium}
We first consider the case of when an alternating equilibrium can exist.  To do this, it will be handy to introduce some notation.
For natural numbers $T < K$, we let $[T,K] = \{T, T+1, T+2, \ldots K\}$, and further let $D(T,K) = \{T, T+2, T+4, \ldots, K\}$ in the case $T,K$ have 
the same parity and $D(T,K) = \{T, T+2, T+4, \ldots, K-1\}$ in case $T,K$ have opposite parity.  We may assume, up to renaming of the 
players, that Alice plays at the smallest time.  Thus by \cref{cor:interval} we know that in an alternating equilibrium
Alice's support is $D(T,K) = \{T, T+2, T+4, \ldots, K-1\}$ and that Bob's support is $D(T+1,K) = \{T+1, T+3, \ldots, K\}$,
for some number $1 \leq T \leq K$, such that $K-T$ is odd. 

We define the values $\q_T$, for $T = 2, \ldots , K-1$, and the values $\er_T, \eR_T, \z_T$ 
and $\Z_T$, such that $1 \leq T < K$ and $K-T$ is odd,
as follows:
\begin{align*}
\q_T & =  \frac{1}{p_T} \cdot \left( \frac{1}{p_{T-1}} - \frac{1}{p_{T+1}} \right), \\
\er_T & =  \frac{1}{\bar p_T} \cdot \left( \frac{1}{p_K} - \sum_{i \in D(T+2,K)} \bar{p}_i \q_i  \right) , \\
\eR_T & = \frac{1}{p_{K-1}} - \sum_{i \in D(T+1,K-2)} \bar{p}_i \q_i , \\
\z_T & = \er_T + \sum_{i \in D(T+2,K)}  \q_i ,\\
\Z_T & = \eR_T + \sum_{j \in D(T+1,K-2)}  \q_j.
\end{align*}
\begin{definition}
\label{def:tstar_tilde}
Define $\eT^*$ as the smallest integer $1 \leq T < K$ such that $\er_T > 0$ and $K-T$ is odd. 
\end{definition}
This is a valid definition since $\er_{K-1} = 1/ (\bar{p}_{K-1} p_K) >0$. 
The following theorem states that there is at most one alternating Nash equilibrium in a symmetric stingy quantum race.

\begin{theorem}
\label{thm:alternating}
Every $2$-party symmetric stingy quantum race has exactly one alternating Nash equilibrium 
if  $\eR_{\eT^*} > 0$ and 
$p_K \left(\tfrac{1}{p_{K-1}} - p_K R_{\eT^*} \right) \le 1,$
and no alternating Nash equilibrium otherwise.
The supports of the equilibrium $(x,y)$ are $\sup(x) = D(\eT^*,K)$ and
$\sup(y) = D(\eT^*+1,K)$, 
and the distributions are defined as:

\begin{align*}
x_i & =
\begin{cases}
{\er_{\eT^*}}/{\z_{\eT^*}}& \text{if} ~~~ i= \eT^*,\\
{\q_i}/{\z_{\eT^*}}& \text{if} ~~~ i \in D(T+2,K),
\end{cases}
\\
y_j & =
\begin{cases}
{\eR_{\eT^* }}/{\Z_{\eT^*}}& \text{if} ~~~ j= K,\\
{\q_j}/{\Z_{\eT^*}}& \text{if} ~~~ j \in D(T+1,K-2).
\end{cases}
\end{align*}

\end{theorem}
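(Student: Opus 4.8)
The plan is to transplant to the alternating pattern the three-step argument used for coinciding equilibria in \cref{sec:too_stingy}: a characterization of the defining system (the analogue of \cref{lem:characterize}), a unique solution of its linear part (the analogue of \cref{lem:unicity}), and the determination of which starting index makes that solution a genuine equilibrium (the analogue of \cref{thm:unique1-5}). By \cref{cor:interval}, up to renaming the players any alternating equilibrium $(x,y)$ has $\sup(x)=D(T,K)$ and $\sup(y)=D(T+1,K)$ with $1\le T<K$ and $K-T$ odd, and then $\sup(x)\cup\sup(y)=[T,K]$. Expanding $e_t^{\T}Ay=p_t(\sum_{j\in\sup_{\leq t}(y)}y_j\bar{p}_j+\sum_{j\in\sup_{>t}(y)}y_j)$, the requirement $\sup(x)\subseteq\br(y)$ becomes the indifference equations $e_t^{\T}Ay=e_T^{\T}Ay$ for $t\in D(T+2,K)$ together with inequalities $e_t^{\T}Ay\le e_T^{\T}Ay$ for the remaining $t$. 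I expect that, just as in the coinciding case, almost all of these inequalities are automatic --- for $t<T$ one has $\sup_{\leq t}(y)=\emptyset$, so $e_t^{\T}Ay=(p_t/p_T)e_T^{\T}Ay$, and for $t\in D(T+1,K-2)$ one has $t+1\in\sup(x)$ and $\sup_{\leq t}(y)=\sup_{\leq t+1}(y)$, so $e_t^{\T}Ay=(p_t/p_{t+1})e_{t+1}^{\T}Ay$ --- leaving a single nontrivial inequality $e_K^{\T}Ay\le e_T^{\T}Ay$. Symmetrically $\sup(y)\subseteq\br(x)$ reduces to indifference equations for $s\in D(T+3,K)$ plus the lone inequality $x^{\T}Be_{T-1}\le x^{\T}Be_{T+1}$ (for $T\ge2$). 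Adding that $x,y$ be distributions on $D(T,K),D(T+1,K)$ gives the alternating analogue of \cref{lem:characterize}.

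Next I would solve the linear part exactly as in \cref{lem:unicity}: fix $T$, drop the normalizations, scale $y$ so the common indifference value is $1$, and take consecutive differences of the equations $e_t^{\T}Ay=1$, $t\in D(T,K)$; the terms telescope and one reads off $y_j=\q_j$ for $j\in D(T+1,K-2)$, with the leftover equation giving $y_K=\eR_T$, so after renormalizing $y_j=\q_j/\Z_T$ and $y_K=\eR_T/\Z_T$. The same elimination on Bob's equations $x^{\T}Be_s=1$, $s\in D(T+1,K)$, yields $x_i=\q_i/\z_T$ for $i\in D(T+2,K)$ and $x_T=\er_T/\z_T$. All the cancellations use only the telescoping identity $\sum_{i\in D(a,b)}\q_ip_i=\tfrac1{p_{a-1}}-\tfrac1{p_{b+1}}$ and the identity $\bar{p}_{T-2}\er_{T-2}=\bar{p}_T(\er_T-\q_T)$, the analogue of \cref{eq:yT} and immediate from the definitions.

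It then remains to decide for which $T$ this unique solution is a valid equilibrium. Since every $\q_i>0$, positivity of $x$ is equivalent to $\er_T>0$, and because $\er_T$ increases with $T$ along the parity class fixed by $K-T$ odd (going from $T$ to $T+2$ deletes the positive term $\bar{p}_{T+2}\q_{T+2}$ from the sum defining $\bar{p}_T\er_T$ and enlarges $1/\bar{p}_T$), this means $T\ge\eT^*$. For the opposite bound I would analyze $x^{\T}Be_{T-1}\le x^{\T}Be_{T+1}$: substituting the solution it reads $p_{T-1}\z_T\le1$, and comparing with the relation $p_{T-1}(\bar{p}_{T-2}\er_{T-2}+\q_T+\z_T-\er_T)=1$ coming from the first row of Bob's system started at $T-2$ (the analogue of \cref{eq:equal}), it becomes $\er_T\le\bar{p}_{T-2}\er_{T-2}+\q_T=\bar{p}_T\er_T+p_T\q_T$; as this right-hand side is a convex combination of $\er_T$ and $\q_T$, the inequality is equivalent to $\er_T\le\q_T$, hence --- using $\bar{p}_{T-2}\er_{T-2}=\bar{p}_T(\er_T-\q_T)$ again --- to $\er_{T-2}\le0$, i.e. to $T\le\eT^*$. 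So the only possible support is $T=\eT^*$, at which positivity of $x$ holds and $x^{\T}Be_{T-1}\le x^{\T}Be_{T+1}$ holds by minimality of $\eT^*$ (the cases $\eT^*\in\{1,2\}$, where $T-1<1$ or $T-2<1$, are checked directly).

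With $T=\eT^*$ fixed exactly two conditions survive. Positivity of $y$ amounts, since all $\q_j>0$, to $\eR_{\eT^*}>0$. And the leftover inequality $e_K^{\T}Ay\le e_T^{\T}Ay$, after substituting the solution and using the definition of $\eR_{\eT^*}$ to rewrite $\sum_{j\in D(\eT^*+1,K-2)}\bar{p}_j\q_j=\tfrac1{p_{K-1}}-\eR_{\eT^*}$, collapses to precisely $p_K(\tfrac1{p_{K-1}}-p_K\eR_{\eT^*})\le1$. Hence an alternating equilibrium exists iff both $\eR_{\eT^*}>0$ and $p_K(\tfrac1{p_{K-1}}-p_K\eR_{\eT^*})\le1$, in which case it is unique and given by the displayed formulas, and otherwise none exists. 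The step I expect to be hardest is the bookkeeping around the non-deviation inequalities: with the offset supports one must verify carefully that every inequality except the two exceptional ones telescopes away, and must push through the chain of equivalences turning $x^{\T}Be_{T-1}\le x^{\T}Be_{T+1}$ into $\er_{T-2}\le0$, including the small-$\eT^*$ boundary cases.
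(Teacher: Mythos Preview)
Your proposal is correct and follows essentially the same approach as the paper: set up the alternating analogue of the characterization lemma, solve the linear indifference systems by consecutive differences to obtain $x_i=\q_i,\,x_T=\er_T$ and $y_j=\q_j,\,y_K=\eR_T$, observe that all deviation inequalities telescope away except $e_K^{\T}Ay\le1$ and $e_{T-1}^{\T}Ax\le1$, and then use the identity $\bar p_{T-2}\er_{T-2}=\bar p_T(\er_T-\q_T)$ together with monotonicity of $\er_T$ to pin the start at $T=\eT^*$, leaving exactly the two stated conditions. The chain of equivalences you outline for reducing Bob's boundary inequality to $\er_{T-2}\le0$ matches the paper's argument essentially line for line.
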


\begin{proof}
As in the proof of \cref{lem:unicity}, instead of probability distributions we will normalize the strategies such that the payoff is 1.
We claim that the conditions of
Nash equilibrium for such a distribution can be expressed by the following system:
\begin{align}
e_i^{\T}Ay & =  1  & \text{for} ~~ i \in D(T,K) ~, \label{eq:ne11} \\
e_i^{\T}Ay & \leq  1  & \text{for} ~~ 1\leq i \leq K ~, \label{eq:ne12} \\
e_j^{\T}Ax & =  1  & \text{for} ~~ j \in D(T+1,K) ~, \label{eq:ne13} \\
e_j^{\T}Ax & \leq  1  & \text{for} ~~ 1\leq j \leq K ~, \label{eq:ne14} \\
0 <  & ~ ~x_i & \text{for} ~~ i \in D(T,K) ~, \label{eq:ne15} \\
0 <  & ~ ~y_j  & \text{for} ~~ j \in D(T+1,K) ~. \label{eq:ne16}  \\
x_i &= 0 & \text{ for } i \in [T-1] \cup D(T+1,K) \label{eq:alt_alice_pos}\\
y_j &= 0 & \text{ for } j \in [T-1] \cup D(T,K) \label{eq:alt_bob_pos}. \enspace
\end{align}
In matrix form \cref{eq:ne11} can be expressed as:
\begin{equation}
\begin{bmatrix}
1& 1 & 1 &  \cdots & 1 \\
\bar{p}_{T+1} & 1 & 1 & \cdots & 1 \\
\bar{p}_{T+1}  & \bar{p}_{T+3}  & 1 & \cdots  & 1 \\
\bar{p}_{T+1}  & \bar{p}_{T+3}  & \bar{p}_{T+5} & \cdots  & 1 \\
\vdots              &                        &                       &  \ddots  & \vdots \\
\bar{p}_{T+1}  & \bar{p}_{T+3} & \bar{p}_{T+5} & \cdots & 1 \\
\end{bmatrix}
\begin{bmatrix}
y_{T+1} \\ y_{T+3} \\ y_{T+5} \\ \vdots \\ y_K
\end{bmatrix}
=
\begin{bmatrix}
1/p_T \\ 1/p_{T+2} \\ 1/p_{T+4} \\ 1/p_{T+6} \\ \vdots \\  1/p_{K-1}
\end{bmatrix}.
\label{eq:alt_Alice}
\end{equation}
Simplifying the system gives
\[
\begin{bmatrix}
p_{T+1} & 0 & 0 &  \cdots & 0 \\
0            & p_{T+3} & 0 & \cdots & 0 \\
0            & 0            & p_{T+5} & \cdots  & 0 \\
0            & 0            & 0                & \cdots  & 0 \\
\vdots              &                        &                       &  \ddots  & \vdots \\
\bar{p}_{T+1}  & \bar{p}_{T+3} & \bar{p}_{T+5} & \cdots & 1 \\
\end{bmatrix}
\begin{bmatrix}
y_{T+1} \\ y_{T+3} \\ y_{T+5} \\ \vdots \\ y_K
\end{bmatrix}
=
\begin{bmatrix}
1/p_T - 1/p_{T+2} \\ 1/p_{T+2}- 1/p_{T+4} \\ 1/p_{T+4}-1/p_{T+6} \\ 1/p_{T+6}-1/p_{T+8} \\ \vdots \\  1/p_{K-1}
\end{bmatrix}.
\]
The unique solution is 
\begin{align*}
y_j &= \q_j, \qquad \mbox{for } j \in D(T+1,K-2), \\
y_K &= \eR_{T} \enspace.
\end{align*}
\cref{eq:ne13} in matrix form gives a very similar system:
\begin{equation}
\begin{bmatrix}
\bar{p}_T& 1 & 1 &  \cdots & 1 \\
\bar{p}_T & \bar{p}_{T+2} & 1 & \cdots & 1 \\
\bar{p}_T  & \bar{p}_{T+2}  & \bar{p}_{T+4} & \cdots  & 1 \\
\vdots              &                        &                       &  \ddots  & \vdots \\
\bar{p}_{T}  & \bar{p}_{T+2} & \bar{p}_{T+4} & \cdots & \bar{p}_{K-1} \\
\end{bmatrix}
\begin{bmatrix}
x_{T} \\ x_{T+2} \\ x_{T+4} \\ \vdots \\ x_{K-1}
\end{bmatrix}
=
\begin{bmatrix}
1/p_{T+1} \\ 1/p_{T+3} \\ 1/p_{T+5} \\ \vdots \\  1/p_{K}
\end{bmatrix}.
\label{eq:Bob_first}
\end{equation}
Similarly simplifying this system we have
\[
\begin{bmatrix}
0             & p_{T+2}          & 0 &  \cdots & 0 \\
0             & 0                     & p_{T+4} & \cdots & 0 \\
0             & 0  & 0 & \cdots  & 0 \\
\vdots              &                        &                       &  \ddots  & \vdots \\
\bar{p}_{T}  & \bar{p}_{T+2} & \bar{p}_{T+4} & \cdots & \bar{p}_{K-1} \\
\end{bmatrix}
\begin{bmatrix}
x_{T} \\ x_{T+2} \\ x_{T+4} \\ \vdots \\ x_{K-1}
\end{bmatrix}
=
\begin{bmatrix}
1/p_{T+1}-1/p_{T+3} \\ 1/p_{T+3}- 1/p_{T+5}\\ 1/p_{T+5}-1/p_{T+7} \\ \vdots \\  1/p_{K}
\end{bmatrix}.
\]
The unique solution is 
\begin{align*}
x_i &= \q_i, \qquad \mbox{for } i \in D(T+2,K), \\
x_T &= \er_T \enspace.
\end{align*}
Since $\q_t$ is positive for every $2 \leq t \leq K-1$, \cref{eq:ne15}--(\ref{eq:ne16})
are equivalent to
\begin{align}
\er_T  & >  0,   \label{eq:T} \\
\eR_{T}  & > 0.   \label{eq:K} 
\end{align}

We now check \cref{eq:ne12}.  As Alice's payoff for playing a time 
$i < T$ is maximized when $i = T-1$, we may assume that $i \in D(T-1,K)$.  Consider first
the case where Alice plays in $D(T-1,K-2)$; we will treat the case Alice plays time $K$ later.
When Alice plays $i \in D(T-1, K-2)$ then $i+1$ is in $\sup(x)$.  Moreover 
$D(T+1,K)_{\leq i} = D(T+1,K)_{\leq i+1}$.  
This means that the payoff for Alice playing $i+1 \in \sup(x)$ is strictly larger than that for playing $i$, specifically 
\begin{equation}
\label{eq:shift}
e_i^{\T}Ay =  \tfrac{p_i}{p_{i+1}} e_{i+1}^{\T}Ay <1.
\end{equation}
Thus Alice has no advantage in switching to the strategy $i$.  

We have now reduced \cref{eq:ne12} to the condition $e_K^{\T} A y \le 1$.  
We cannot use the preceeding argument to get rid of this condition as $K+1 \not \in \sup(x)$.  Let us 
now simplify this condition.  Substituting in $y$ gives the condition
\begin{equation}
\label{eq:A}
p_{K} \left(\sum_{j \in E(T) \setminus \{K\}} \bar{p}_j \q_j + \bar{p}_K \eR_{T} \right)\le 1 \enspace .
\end{equation}
Now observe that by \cref{eq:shift} with $i=K-2$,
\[
p_{K-2} \left(\sum_{j \in E(T) \setminus \{K\}} \bar{p}_j \q_j + \eR_{T} \right) = \frac{p_{K-2}}{p_{K-1}}\enspace .
\]
Using this equality shows that \cref{eq:A} is equivalent to
\begin{equation}
\label{eq:aliceK}
p_K \left(\tfrac{1}{p_{K-1}} - p_K R_{\eT^*} \right) \le 1 \enspace .
\end{equation}

Now consider \cref{eq:ne14}.  First say Bob plays a time $j \in D(T,K-1)$.  In this case $j+1 \in D(T+1,K)$.  
Similarly to Alice's case \cref{eq:shift},
$D(T,K-1)_{\leq j}= D(T,K-1)_{\leq j+1}$ and therefore 
$e_j^{\T}Ax =  \tfrac{p_j}{p_{j+1}} e_{j+1}^{\T}Ax <1.$
Thus Bob has no advantage in playing time $j$.
Finally, consider the case where Bob plays a time $j < T$.  In such a case, Bob's success probability is 
maximized by playing 
time $j = T-1$.  This gives the condition
\begin{equation}
\label{eq:B}
e_{T-1}^{\T}Ax = p_{T-1} \left( \er_T + \sum_{i \in D(T+2,K)} \q_i  \right)\le 1 \enspace .
\end{equation}

To summarize the current situation, \cref{eq:ne11}--(\ref{eq:alt_bob_pos}) are equivalent to 
\cref{eq:T}--(\ref{eq:K}),\ (\ref{eq:aliceK})--(\ref{eq:B}).
By the definition of $\eT^*$, \cref{eq:T} is satisfied if and only if $T \geq \eT^*$.

We claim now that \cref{eq:B} is satisfied if and only if $T \leq \eT^*$.
To prove this, first note that
\begin{equation}
\label{eq:rel}
{\bar{p}_{T-2}} \er_{T-2} = {\bar{p}_T} (\er_T - \q_T) \enspace.
\end{equation}
Using that, from \cref{eq:Bob_first}, applied to the case where the interval begins at $T-2$, we derive
\begin{equation}
p_{T-1} \left( \bar{p}_T \er_T + p_T \q_T + \sum_{i \in D(T+2,K)} \q_i  \right) =1.
\end{equation}
Therefore \cref{eq:B} holds if and only if $\er_T \leq \q_T$. From \cref{eq:rel} this is equivalent to
$\er_{T-2} \leq 0$, and the claim then follows from the definition of $\eT^*$.

To conclude, we have seen that there can not be more than one alternating Nash equilibrium, and there is
one for $T=T^*$ if \cref{eq:K} and~(\ref{eq:aliceK}) are satisfied, which is exactly the statement of the Theorem.
\end{proof}

\subsubsection{Starting value}
For this subsection we will impose a further natural restriction on the probabilities defining a quantum race that also holds for the 
Grover race.  We define a \emph{convex} quantum race.
\begin{definition}[Convex race]
Let $K$ be a positive integer. We say that the sequence  $(n_1, \ldots, n_K) \in \R^K$ is {\em convex}
if for all $1 < i < K$, we have $2n_i \leq n_{i-1} + n_{i+1}$.
A symmetric (stingy) quantum race specified by the  probabilities
$0  < p_1  < \ldots < p_K \leq 1$ is called {\em convex} if $( 1/p_1, \ldots ,  1/p_K )$ is a convex sequence.
\end{definition}

If $f : [1, K] \rightarrow \R$ is a convex real function on the interval $[1, K]$, and by definition 
$n_i = f(i)$, for $1 \leq i \leq K$, then clearly the sequence $(n_1, \ldots, n_K)$ is {convex}, which motivates the above definition.  
The (stingy) Grover race is convex as $\csc^2(2(t+1/2)\arcsin(\frac{1}{\sqrt{N}}))$ is a convex function.

For 2-player convex symmetric stingy quantum races we can show that the starting point of the support of the symmetric Nash 
equilibrium and the starting point of the support of the alternating Nash equilibrium (if it exists) are the same, modulo a 
correction in parity.
\begin{theorem}
Let $p_1, \ldots, p_K$ define a convex a 2-player symmetric stingy quantum race. 
\begin{enumerate}
\label{thm:equal}
\item 
If $K-T^*$ is odd then $\eT^* = T^*$,
\item If $K-T^*$ is even then $\eT^* \in \{T^*-1,T^*+1\}.$
\end{enumerate}
\end{theorem}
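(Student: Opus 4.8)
The plan is to reduce both parts to a single algebraic identity relating the alternating quantity $\tilde r_T$ to the coinciding quantity $r_{T+1}$. Write $a_i = 1/p_i$, so that $(a_i)$ is strictly decreasing with $a_i \ge 1$, and convexity of the race means precisely that the gaps $b_i := a_{i-1}-a_i > 0$ are non-increasing. In this notation $\bar p_i q_i = (a_i-1)b_i$ and $\bar p_i\tilde q_i = (a_i-1)(b_i+b_{i+1})$. I will work with $g(T):=\bar p_T r_T$ and $\tilde g(T):=\bar p_T\tilde r_T$; since $\bar p_T>0$ in the relevant range, $T^*$ is the least $T$ with $g(T)>0$ and $\tilde T^*$ the least $T$ with $K-T$ odd and $\tilde g(T)>0$.

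First I would telescope the defining sums: pulling $\sum_{i=T+1}^K b_i = a_T - a_K$ out of $r_T$, and $\sum_{i\in D(T+2,K)}(a_{i-1}-a_{i+1}) = a_{T+1}-a_K$ out of $\tilde r_T$ (for $K-T$ odd, so that $D(T+2,K)=\{T+2,\dots,K-1\}$ and $D(T+3,K)=\{T+3,\dots,K\}$), which yields the clean forms $g(T)=a_T-\sum_{i=T+1}^K a_ib_i$ and $\tilde g(T)=a_{T+1}-\sum_{i\in D(T+2,K)}a_i(b_i+b_{i+1})$. The first gives $g(T+1)-g(T)=(a_{T+1}-1)b_{T+1}\ge 0$, so $g$ is non-decreasing and $g(T)>0 \iff T\ge T^*$. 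Forming $g(T+1)-\tilde g(T)$, splitting $\sum_{i=T+2}^K a_ib_i$ by the parity of $i$, and reindexing $\sum_{i\in D(T+2,K)}a_ib_{i+1}=\sum_{j\in D(T+3,K)}a_{j-1}b_j$, everything cancels down to the key identity (for $K-T$ odd):
\[
\tilde g(T)\;=\;g(T+1)\;-\;\sum_{j\in D(T+3,K)}b_j^2 .
\]

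Two facts then drop out. (b) If $g(T+1)\le 0$ then $\tilde g(T)\le 0$, immediately. (a) If $g(T)>0$ then $\tilde g(T)>0$: rewrite the identity as $\tilde g(T)=g(T)+(a_{T+1}-1)b_{T+1}-\sum_{j\in D(T+3,K)}b_j^2$ and bound $\sum_{j\in D(T+3,K)}b_j^2\le b_{T+1}\sum_{j\in D(T+3,K)}b_j\le b_{T+1}(a_{T+2}-a_K)\le b_{T+1}(a_{T+1}-1)$, using that $b$ is non-increasing, $a_K\ge 1$, and $a_{T+2}\le a_{T+1}$; hence $\tilde g(T)\ge g(T)>0$. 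With these I conclude. If $K-T^*$ is odd: (a) at $T=T^*$ gives $\tilde g(T^*)>0$, so $\tilde T^*\le T^*$, while for every $T\le T^*-2$ with $K-T$ odd we have $g(T+1)\le 0$ (as $T+1<T^*$), so (b) forces $\tilde g(T)\le 0$ and $\tilde T^*\ge T^*$; hence $\tilde T^*=T^*$. If $K-T^*$ is even: (a) at $T=T^*+1$ (legitimate, since $g(T^*+1)>0$ by monotonicity and $K-(T^*+1)$ is odd) gives $\tilde g(T^*+1)>0$, so $\tilde T^*\le T^*+1$, while for every $T\le T^*-3$ with $K-T$ odd, $g(T+1)\le 0$, so $\tilde g(T)\le 0$ and $\tilde T^*\ge T^*-1$; hence $\tilde T^*\in\{T^*-1,T^*+1\}$. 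Boundary cases ($T^*\in\{1,2\}$, where the lower-bound conditions are vacuous, and $T^*=K$ in the even case, where one uses $\tilde r_{K-1}>0$ from the remark after \cref{def:tstar_tilde} in place of (a)) are checked directly.

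I expect the only real work to be pinning down the key identity: the two telescopings and the reindexing each depend on the parity of $K-T$, so a single off-by-one in the endpoints of the sets $D(\cdot,\cdot)$ would destroy the cancellation. Once the identity is in hand, facts (a) and (b) and the case analysis are all short.
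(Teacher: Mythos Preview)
Your proof is correct. The overall logical structure matches the paper's: you establish the two implications (b) $r_{T+1}\le 0\Rightarrow \tilde r_T\le 0$ (no convexity needed) and (a) $r_T>0\Rightarrow \tilde r_T>0$ (convexity needed), which are exactly the contents of the paper's \cref{lem:les} and \cref{lem:more}, and then run the same parity case split.

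Where you diverge is in the algebra. The paper proves each direction by a \emph{termwise} comparison: it shows $\bar p_i\tilde q_i>\bar p_iq_i+\bar p_{i+1}q_{i+1}$ for one direction and, via \cref{cla:hard}, $\bar p_iq_i+\bar p_{i+1}q_{i+1}>\bar p_{i+1}\tilde q_{i+1}$ for the other, then sums. You instead derive the single exact identity $\tilde g(T)=g(T+1)-\sum_{j\in D(T+3,K)}b_j^2$, from which (b) is immediate (the correction term is nonnegative) and (a) follows once you bound $\sum b_j^2\le b_{T+1}\sum b_j\le b_{T+1}(a_{T+1}-1)=g(T+1)-g(T)$ using only that $(b_j)$ is non-increasing. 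This packaging is cleaner: one identity drives both directions, convexity enters at exactly one transparent step, and the role of convexity is made explicit as the monotonicity of the gaps $b_j$. The paper's termwise route avoids setting up the $a_i,b_i,g,\tilde g$ notation and is a bit more hands-on, but your identity makes the relationship between $\tilde r_T$ and $r_{T+1}$ quantitatively precise rather than just an inequality.
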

\begin{proof}
\cref{lem:les} below shows that $T^* \le \eT^* + 1$.  This direction does not make use of the 
convexity assumption.  \cref{lem:more} below shows $\eT^* \le T^* + 1$ and does make use of the convexity 
assumption.  From these two statements the Theorem follows.
\end{proof}

\begin{lemma}
\label{lem:les}
In every 2-player symmetric stingy quantum race we have $T^* \leq \eT^* + 1$.
\end{lemma}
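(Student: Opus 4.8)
The plan is to derive $T^* \le \eT^* + 1$ from the single fact that $\er_{\eT^*} > 0$, which holds by the very definition of $\eT^*$. Since $r_T$ is an increasing function of $T$ and $T^*$ is the least index with $r_{T^*} > 0$, it suffices to prove $r_{\eT^* + 1} > 0$. Unwinding the definitions of $r_{\eT^*+1}$ and $\er_{\eT^*}$, this reduces to the estimate
\[
\sum_{i = \eT^*+2}^{K} \bar{p}_i q_i \ \le\ \sum_{i \in D(\eT^*+2,K)} \bar{p}_i \q_i ,
\]
because $\er_{\eT^*}>0$ says exactly that the right-hand side is strictly below $\tfrac{1}{p_K}$, whereas $r_{\eT^*+1}>0$ says the left-hand side is.

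First I would record a pointwise identity relating a single $\q$ to two consecutive $q$'s. Expanding $\q_j = \tfrac{1}{p_j}\big(\tfrac{1}{p_{j-1}} - \tfrac{1}{p_{j+1}}\big)$ and the definitions of $q_j, q_{j+1}$ gives, for every $2 \le j \le K-1$,
\[
\bar{p}_j\q_j \;=\; \bar{p}_j q_j + \bar{p}_{j+1} q_{j+1} + \Big(\tfrac{1}{p_j} - \tfrac{1}{p_{j+1}}\Big)\Big(\tfrac{\bar{p}_j}{p_j} - \tfrac{\bar{p}_{j+1}}{p_{j+1}}\Big).
\]
The correction term is strictly positive: $p_j < p_{j+1}$ makes $\tfrac{1}{p_j} - \tfrac{1}{p_{j+1}} > 0$, and the map $p \mapsto \tfrac{\bar p}{p} = \tfrac1p - 1$ is strictly decreasing, so $\tfrac{\bar{p}_j}{p_j} - \tfrac{\bar{p}_{j+1}}{p_{j+1}} > 0$ as well. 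Hence $\bar{p}_j \q_j > \bar{p}_j q_j + \bar{p}_{j+1} q_{j+1}$ for all such $j$.

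Next comes the parity bookkeeping. Since $K - \eT^*$ is odd, $\eT^*+2$ and $K$ have opposite parity, so $D(\eT^*+2,K) = \{\eT^*+2, \eT^*+4, \ldots, K-1\}$, and the consecutive pairs $\{j, j+1\}$, as $j$ ranges over this set, tile the integer interval $\{\eT^*+2, \ldots, K\}$ exactly once. Summing the pointwise inequality over $j \in D(\eT^*+2,K)$ therefore yields $\sum_{i \in D(\eT^*+2,K)} \bar{p}_i \q_i > \sum_{i=\eT^*+2}^{K} \bar{p}_i q_i$, which is the displayed estimate (in fact strict), and $r_{\eT^*+1}>0$, hence the lemma, follows. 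I do not anticipate any real obstacle: the only delicate points are the sign check of the correction term and getting the parity of $D(\eT^*+2,K)$ right. The boundary case $\eT^*+1 = K$ is trivial, since then $T^* \le K = \eT^*+1$ automatically, and when $p_K = 1$ the term $\bar{p}_K q_K$ simply vanishes, leaving the argument unaffected.
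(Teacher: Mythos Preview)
Your proof is correct and follows essentially the same approach as the paper: both establish the pointwise inequality $\bar p_j\,\q_j > \bar p_j q_j + \bar p_{j+1} q_{j+1}$ for $2\le j\le K-1$ and then sum over $j\in D(\eT^*+2,K)$ to obtain $\sum_{i\in D(\eT^*+2,K)}\bar p_i\q_i > \sum_{i=\eT^*+2}^{K}\bar p_i q_i$, which gives $r_{\eT^*+1}>0$. The only cosmetic difference is in how the pointwise inequality is verified---the paper notes $p_j\q_j = p_j q_j + p_{j+1} q_{j+1}$ and reduces to $\q_j - q_j = \tfrac{p_{j+1}}{p_j}q_{j+1} > q_{j+1}$, whereas you compute the difference explicitly; in fact your correction term simplifies further to $\bigl(\tfrac{1}{p_j}-\tfrac{1}{p_{j+1}}\bigr)^2$.
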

\begin{proof}
We first claim that for every odd $i$ such that $3 \leq i \leq K-1$, we have
\begin{equation}
\label{eq:compar}
\bar{p}_i \q_i > \bar{p}_i q_i + \bar{p}_{i+1} q_{i+1} \enspace .
\end{equation}
For this first note that $p_i \q_i = p_i q_i + p_{i+1} q_{i+1}$.  
Thus the claim is equivalent to showing $\q_i  > q_i + q_{i+1}$,
which is is true since 
$\q_i - q_i  = \frac{p_{i+1}}{p_i} q_{i+1} > q_{i+1}.$


We now claim that for every $T$ such that $1\leq T \leq K-1$ and $K-T$ is odd, we have
\begin{equation}
\label{eq:compar2}
\er_T > 0 \Longrightarrow r_{T+1} > 0 \enspace,
\end{equation}
from which the statement of the Lemma clearly follows. This claim is clearly true for $T = K-1$, therefore
we suppose that $T \leq K-3$ and $K-T$ is odd. Summing up \cref{eq:compar} over $D(T+2,K)$ we 
get
\[
\sum_{i \in D(T+2,K)} \bar{p}_i \q_i > \sum_{i = T+2}^K 
\bar{p}_i q_i \enspace ,
\]
from which \cref{eq:compar2} immediately follows.
\end{proof}

\begin{claim}
\label{cla:hard}
In a 2-player convex symmetric stingy quantum race, for every $4 \leq i \leq K-2$, we have
\begin{equation}
\label{eq:pele}
\bar{p}_i q_i + \bar{p}_{i+1}q_{i+1} > \bar{p}_{i+1} \q_{i+1} \enspace .
\end{equation}

\end{claim}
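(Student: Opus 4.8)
The plan is to reduce \eqref{eq:pele} to a transparent inequality between consecutive first differences of the sequence $(1/p_j)$, and then read that inequality off from convexity together with the strict monotonicity of $(p_j)$.

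First I would eliminate $\q_{i+1}$ in favour of the ordinary $q$'s. Recall the identity $p_T\q_T = p_T q_T + p_{T+1}q_{T+1}$ established in the proof of \cref{lem:les}; applying it with $T=i+1$ gives $\q_{i+1} = q_{i+1} + \tfrac{p_{i+2}}{p_{i+1}}\,q_{i+2}$, hence
\[
\bar p_{i+1}\q_{i+1} = \bar p_{i+1}q_{i+1} + \frac{\bar p_{i+1}\,p_{i+2}}{p_{i+1}}\,q_{i+2} \enspace .
\]
Substituting this into \eqref{eq:pele} and cancelling the common summand $\bar p_{i+1}q_{i+1}$, the claim becomes equivalent to $\bar p_i q_i > \tfrac{\bar p_{i+1}\,p_{i+2}}{p_{i+1}}\,q_{i+2}$.

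Next I would pass to the variables $a_j := 1/p_j$. Since $p_j q_j = \tfrac{1}{p_{j-1}}-\tfrac{1}{p_j}$, we have $\bar p_i q_i = (a_i-1)(a_{i-1}-a_i)$, and likewise, using $p_{i+2}q_{i+2} = \tfrac{1}{p_{i+1}}-\tfrac{1}{p_{i+2}}$, we have $\tfrac{\bar p_{i+1}\,p_{i+2}}{p_{i+1}}\,q_{i+2} = (a_{i+1}-1)(a_{i+1}-a_{i+2})$. Writing $d_j := a_j-a_{j+1}$, which is strictly positive because $(p_j)$ is strictly increasing, the target inequality is exactly $(a_i-1)\,d_{i-1} > (a_{i+1}-1)\,d_{i+1}$.

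Finally I would invoke convexity: convexity of $(1/p_1,\dots,1/p_K)$ says $2a_j\le a_{j-1}+a_{j+1}$ for $1<j<K$, i.e.\ $d_{j-1}\ge d_j$, so the differences $d_j$ are non-increasing; applying this at $j=i$ and $j=i+1$ (both legitimate for $4\le i\le K-2$) yields $d_{i-1}\ge d_i\ge d_{i+1}>0$. Strict monotonicity of $(p_j)$ gives $a_i>a_{i+1}$, while $p_{i+1}<p_K\le 1$ gives $a_{i+1}-1>0$, so $a_i-1>a_{i+1}-1>0$. Multiplying these comparisons of positive quantities, $(a_i-1)d_{i-1} > (a_{i+1}-1)d_{i-1}\ge (a_{i+1}-1)d_{i+1}$, which is the desired inequality. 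I do not expect a genuine obstacle here: the whole content is the single convexity step $d_{i-1}\ge d_{i+1}$, with strictness forced by $a_i-1>a_{i+1}-1$; the only care needed is index bookkeeping so that $q_i$, $q_{i+2}$, $\q_{i+1}$ and the two applications of convexity are all in range, which holds throughout $4\le i\le K-2$.
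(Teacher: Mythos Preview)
Your proof is correct and follows essentially the same route as the paper: both eliminate $\q_{i+1}$ via the identity $\q_{i+1}=q_{i+1}+\tfrac{p_{i+2}}{p_{i+1}}q_{i+2}$, rewrite the resulting inequality as $(a_i-1)d_{i-1}>(a_{i+1}-1)d_{i+1}$ with $a_j=1/p_j$ and $d_j=a_j-a_{j+1}$, and conclude from convexity ($d_{i-1}\ge d_{i+1}$) together with strict monotonicity ($a_i-1>a_{i+1}-1>0$). Your presentation is in fact slightly cleaner than the paper's, which inserts the intermediate line $(a_{i+1}-1)(d_{i-1}-d_{i+1})>d_{i-1}-d_{i+1}$; that step is not generally valid (it would require $p_{i+1}<\tfrac12$) and is also unnecessary, since the strict inequality has already been secured and $(a_{i+1}-1)(d_{i-1}-d_{i+1})\ge 0$ suffices.
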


\begin{proof}
Recall that $q_{i+1} - \q_{i+1} = - \tfrac{p_{i+2}}{p_{i+1}}q_{i+2}$.  Thus
\begin{align*}
\bar{p}_i q_i + \bar{p}_{i+1}q_{i+1} - \bar{p}_{i+1} \q_{i+1} &= 
\frac{\bar{p}_i}{p_i} p_i q_i - \frac{\bar{p}_{i+1}}{p_{i+1}} {p_{i+2}} q_{i+2} \\
&= \left(\frac{1}{p_i}-1 \right)\left(\frac{1}{p_{i-1}} - \frac{1}{p_i} \right) - \left(\frac{1}{p_{i+1}} -1 \right) 
\left(\frac{1}{p_{i+1}} - \frac{1}{p_{i+2}} \right) \\
& > \left(\frac{1}{p_{i+1}}-1 \right) \left(\frac{1}{p_{i-1}} - \frac{1}{p_i} -\frac{1}{p_{i+1}} + \frac{1}{p_{i+2}} \right) \\
& > \frac{1}{p_{i-1}} - \frac{1}{p_i} -\frac{1}{p_{i+1}} + \frac{1}{p_{i+2}}\\
& = \left(  \frac{1}{p_{i-1}} - \frac{2}{p_i} +\frac{1}{p_{i+1}}  \right)
+ \left( \frac{1}{p_i} -\frac{2}{p_{i+1}} + \frac{1}{p_{i+2}} \right) \\
& \ge 0 \enspace ,
\end{align*}
where the last inequality holds by convexity of $(1/p_1, \ldots, 1/p_K)$. 
\end{proof}

\begin{lemma}
\label{lem:more}
In a 2-player convex symmetric stingy quantum race we have $\eT^* \le T^*+1$.
\end{lemma}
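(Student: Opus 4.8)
The plan is to argue by contradiction: assume $\eT^* \ge T^*+2$ and derive a violation of the defining positivity $r_{T^*}>0$ of $T^*$. I would set $T=\eT^*-2$. Since $T^*\ge 1$ this gives $\eT^*\ge 3$, hence $T\ge 1$, and $K-T$ has the same parity as the odd number $K-\eT^*$, so $T$ is an admissible index in \cref{def:tstar_tilde}; moreover $\eT^*\le K-1$ forces $T\le K-3$, so $D(T+2,K)=\{T+2,T+4,\dots,K-1\}$ is nonempty and $\bar p_T>0$. Now two facts collide. First, $T\ge T^*$ and $r_T$ is increasing in $T$, so $r_T>0$, which by the definition of $r_T$ means $\sum_{i=T+1}^{K}\bar p_i q_i < 1/p_K$. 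Second, $T$ is an admissible index strictly smaller than $\eT^*$, so minimality of $\eT^*$ gives $\er_T\le 0$, i.e.\ $\sum_{i\in D(T+2,K)}\bar p_i \q_i \ge 1/p_K$.

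To reach the contradiction I would bound the second sum above by the first. For each $i\in D(T+2,K)$, \cref{cla:hard} applied with index $i-1$ gives $\bar p_i \q_i < \bar p_{i-1}q_{i-1}+\bar p_i q_i$. Summing over $i\in D(T+2,K)$ and noting that the two-element blocks $\{i-1,i\}$ partition $\{T+1,\dots,K-1\}$,
\[
\sum_{i\in D(T+2,K)}\bar p_i \q_i \;<\; \sum_{j=T+1}^{K-1}\bar p_j q_j \;\le\; \sum_{j=T+1}^{K}\bar p_j q_j \;<\; \frac{1}{p_K},
\]
which contradicts $\sum_{i\in D(T+2,K)}\bar p_i \q_i\ge 1/p_K$. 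Hence $\eT^*\le T^*+1$, and combined with \cref{lem:les} this is all that is needed for \cref{thm:equal}.

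The main obstacle is a boundary issue: \cref{cla:hard} is stated only for $4\le i\le K-2$, so the step $\bar p_i\q_i<\bar p_{i-1}q_{i-1}+\bar p_i q_i$ is covered by it only for $i\ge 5$, whereas here $i$ ranges down to $\eT^*$, which can be $3$ or $4$ (exactly when $T^*\in\{1,2\}$). I would handle this by observing that the inequality of \cref{cla:hard} in fact holds for all $2\le i\le K-2$: writing $a=1/p_{i-1}>b=1/p_i>c=1/p_{i+1}>d=1/p_{i+2}\ge 1$, the quantity $\bar p_i q_i+\bar p_{i+1}q_{i+1}-\bar p_{i+1}\q_{i+1}$ equals $(b-1)(a-b)-(c-1)(c-d)$; convexity of $(1/p_1,\dots,1/p_K)$ yields $c-1\ge 0$ and $(a-b)-(c-d)=(a-2b+c)+(b-2c+d)\ge 0$, so $(c-1)(c-d)\le (c-1)(a-b)$ and the quantity is at least $(b-c)(a-b)>0$. (This is a cleaner route than the original proof of \cref{cla:hard} and avoids any assumption such as $p_{i+1}\le\tfrac12$.) With this slight strengthening, the displayed estimate is valid for every admissible $T$ arising above, and the contradiction argument goes through uniformly.
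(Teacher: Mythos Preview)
Your proof is correct and follows the same route as the paper: both reduce to summing the inequality of \cref{cla:hard} over $D(T+2,K)$ to obtain $\sum_{i\in D(T+2,K)}\bar p_i\q_i<\sum_{j=T+1}^K\bar p_jq_j$, the paper packaging this as the direct implication $r_T>0\Rightarrow\er_T>0$ (for $K-T$ odd) and you as a contradiction at $T=\eT^*-2$. Your explicit treatment of the boundary indices---extending \cref{cla:hard} to $2\le i\le K-2$ via the clean $(b-c)(a-b)>0$ estimate---is in fact more careful than the paper, which tacitly invokes \cref{cla:hard} below its stated range when $T^*\in\{1,2\}$.
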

\begin{proof}

The proof of the Lemma will follow from the following statement. 
For every $1\leq T \leq K-1$, we have:
\begin{align}
K-T \text{~is odd and } r_T > 0  ~ \Longrightarrow  ~
\er_{T} > 0,  \label{eq:egy} \\
 K-T \text{~is even and } r_T > 0  ~ \Longrightarrow  ~ 
 \er_{T+1} > 0. \label{eq:ketto}
\end{align}
\cref{eq:egy}  follows since $\er_T > r_T$ by summing over \cref{eq:pele}.
Similarly, we can see that $\tilde{r}_{T+1} > r_T$ by summing over \cref{eq:pele} from
$T+2$ until $K -1$:
$$
\sum_{i \in D(T+3,K)} \bar{p}_i \q_i < \sum_{i = T+2}^{K -1}
\bar{p}_i q_i ,
$$
from which \cref{eq:ketto} follows.
\end{proof}

\subsection{Alternating-coinciding Nash equilibria}
In this section, we turn to the third possible strategy structure for a Nash equilibrium, alternating-coinciding.  In this case, 
Alice and Bob play at alternating times from time $T$ until time 
$c-1$, at which point they play the same times in the interval $[c, K]$.  In other words, Alice's strategy is 
supported on a set $\mathcal{A}=\{T,T+2, \ldots, c-2, c, c+1, \ldots, K\}$ and Bob's strategy is supported on 
$\mathcal{B}= \{T+1, T+3, \ldots, c-1, c, c+1, \ldots, K\}$.  We call this a $(T,c,K)$-alternating-coinciding strategy, and 
refer to $c$ as the \emph{change point}.  
As we have already treated the coinciding and purely alternating cases, we will assume in this section that $T < c \le k$.  

To study alternating-coinciding Nash equilibria, we need to define the following quantities.
\begin{align*}
q_t &= \frac{1}{p_t}\left(\frac{1}{p_{t-1}}-\frac{1}{p_{t}} \right), \mbox{ for } 1 < t \le K \\
\tilde{q}_t &= \frac{1}{p_t}\left(\frac{1}{p_{t-1}}-\frac{1}{p_{t+1}} \right), \mbox{ for } 1 < t < K \\
r_{t,c} &= \frac{1}{\bar{p}_t} \left(\frac{1}{p_K} - \sum_{i \in D(t+2,c-2)} \bar{p}_i \tilde{q}_i - \sum_{i=c}^K \bar{p}_i q_i \right), \mbox{ for } 
1 \le t < c \le K \\
R_{t,c} &= \frac{1}{p_K} - \sum_{i \in D(t+1,c-3)} \bar{p}_i \tilde{q}_i - \sum_{i=c+1}^K \bar{p}_i q_i, \mbox{ for } 1 \le t < c \le K \\
w_{i,t,c} &= \bar{p}_{i}p_{c-1}\tilde{q}_{c-1} - p_i R_{t,c}, \mbox{ for } 1 \le t < i \le c \le K  \\
z_{t,c} &= r_{t,c} + \sum_{i \in D(t+2,c-2)} \tilde q_i + \sum_{i = c}^K q_i \\
Z_{t,c} &= \sum_{i \in D(t+1,c-3)} \tilde q_i -\frac{w_{c,T,c}}{p_c - p_{c-1}} + \frac{w_{c-1,T,c}}{p_c-p_{c-1}} + \sum_{i=c}^K q_i
\end{align*}

\begin{theorem}
\label{thm:alt_coinc}
Let $p_1 < p_2 < \cdots < p_K$ define a symmetric stingy quantum race.  Let $1 \le T < c \le K$ be such that $T,c$ are odd.  
There is a $(T,c,K)$-alternating-coinciding Nash equilibrium $(x,y)$ if and only if the following conditions are satisfied:
\begin{align}
\label{eq:bob_changes_thm}
\frac{1}{p_{T-1}} &\ge r_{t,c} + \sum_{i \in D(T+2,c-2)} \tilde{q}_i + \sum_{i=c}^K q_i \enspace,\\
\frac{1}{p_{c-1}} &\ge \sum_{i \in D(T+1,c-3)} \bar{p}_i \tilde{q}_i - \bar{p}_{c-1}\frac{w_{c,T,c}}{p_c - p_{c-1}} + 
\frac{w_{c-1,T,c}}{p_c-p_{c-1}} + \sum_{i=c+1}^K q_i \enspace, \\
\label{eq:first_pos}
\frac{1}{p_K} &> \sum_{i \in D(T+2,c-2)} \bar{p}_i \tilde{q}_i + \sum_{i=c}^K \bar{p}_i q_i\enspace, \\
\bar{p}_{c-1} \tilde{q}_{c-1} &> R_{T,c} \enspace.
\end{align}
If these constraints are satisfied then the Nash equilibrium is given by 
\begin{align*}
x_i = 
\begin{cases}
r_{T,c}/z_{T,c} & \mbox{ if } i =T \\
\tilde{q}_i/z_{T,c} & \mbox{ if } i \in D(T+2,c-2) \\
q_i/z_{T,c} & \mbox{ if } i \in [c,K] \\
0 & \mbox{ otherwise}
\end{cases}, \qquad
y_j = 
\begin{cases}
\tilde{q}_j/Z_{T,c} & \mbox{ if } j \in D(T+1,c-3) \\
-\frac{w_{c,T,c}}{Z_{T,c}(p_c - p_{c-1})} & \mbox{ if } j = c-1 \\
\frac{w_{c-1,T,c}}{Z_{T,c}(p_c-p_{c-1})} & \mbox{ if } j = c \\
q_j/Z_{T,c} & \mbox{ if } j \in [c,K] \\
0 & \mbox{ otherwise.}
\end{cases}
\end{align*}
\end{theorem}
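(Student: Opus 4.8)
The plan is to follow the blueprint of the proofs of \cref{thm:symmetric} and \cref{thm:alternating}. First I drop the probability normalization $\mathbf 1^\T x = \mathbf 1^\T y = 1$ and instead scale the strategies so that the payoff of a best response equals $1$; exactly as in \cref{lem:unicity}, the probability-distribution solution is recovered at the end by dividing $x$ by $z_{T,c} = \mathbf 1^\T x$ and $y$ by $Z_{T,c} = \mathbf 1^\T y$. I then rewrite ``$(x,y)$ is a $(T,c,K)$-alternating-coinciding Nash equilibrium'' as a system with three parts: (i) the equalities $e_i^\T A y = 1$ for $i\in\mathcal A$ and $e_j^\T A x = 1$ for $j\in\mathcal B$ (every supported strategy is a best response); (ii) the inequalities $e_i^\T A y \le 1$ and $e_j^\T A x \le 1$ for all $i,j\in[K]$ (no profitable deviation); (iii) the support constraints $x_i>0$ on $\mathcal A$, $x_i=0$ off $\mathcal A$, and likewise for $y$ on $\mathcal B$. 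Using the identity $e_t^\T A y = p_t\big(\sum_{j\in\sup_{\le t}(y)} y_j\bar p_j + \sum_{j\in\sup_{>t}(y)} y_j\big)$ from the preliminaries, the two families in (i) decouple into one linear system for $y$ and one for $x$.

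Next I solve both systems by the same ``subtract consecutive rows'' reduction used in \cref{lem:unicity} and \cref{thm:alternating}, exploiting the near-triangular structure. The $x$-system behaves like the alternating case on $D(T,c-2)$, giving $x_i=\tilde q_i$ there, and like the coinciding case on $[c,K]$, giving $x_i=q_i$ there — crucially, no coupling occurs at the change point because Alice skips $c-1$ — while the one remaining full-weight row ($j=K$) forces $x_T = r_{T,c}$. The $y$-system is the crux: since Bob plays \emph{both} $c-1$ and $c$, the reduction pins down $y_j = \tilde q_j$ on $D(T+1,c-3)$ and $y_j = q_j$ on $[c+1,K]$, but leaves a coupled $2\times 2$ block in $(y_{c-1},y_c)$, namely $p_{c-1}y_{c-1}+p_c y_c = \tfrac1{p_{c-2}}-\tfrac1{p_c} = p_{c-1}\tilde q_{c-1}$ (from differencing the rows $i=c-2$ and $i=c$) together with $\bar p_{c-1}y_{c-1}+\bar p_c y_c = R_{T,c}$ (the remaining full-weight row $i=K$). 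The determinant of this block is $p_{c-1}-p_c\ne 0$, so Cramer's rule produces precisely the stated formulas $y_{c-1} = -w_{c,T,c}/(p_c-p_{c-1})$ and $y_c = w_{c-1,T,c}/(p_c-p_{c-1})$; this is exactly where the quantities $w_{i,t,c}$ and $R_{t,c}$ originate. Because each linear system is square with nonsingular coefficient matrix, the solution — hence the candidate equilibrium — is unique, which simultaneously delivers the ``only if'' direction: any $(T,c,K)$-alternating-coinciding equilibrium must be the displayed $(x,y)$, and the remaining constraints must hold.

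It then remains to translate (ii) and (iii) into the four displayed conditions. For (ii) I reuse the ``shift trick'' from \cref{thm:alternating}: whenever a deviation target $i$ satisfies $i+1\in$ the relevant support and $\sup_{\le i}$ of the opponent equals $\sup_{\le i+1}$, then $e_i^\T A(\cdot) = \tfrac{p_i}{p_{i+1}} e_{i+1}^\T A(\cdot) < 1$ automatically. Running through all $i\in[K]$ for both players, every deviation is killed this way except two: Bob deviating to $T-1$ against $x$ (here $\sup_{\le T-1}(x)\ne\sup_{\le T}(x)$ because Alice does play $T$), which gives $p_{T-1}z_{T,c}\le 1$, i.e. \cref{eq:bob_changes_thm}; and Alice deviating to $c-1$ against $y$ (here $\sup_{\le c-1}(y)\ne\sup_{\le c}(y)$ because Bob plays both $c-1$ and $c$), which, after substituting the computed values of $y$, becomes the second displayed inequality. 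For (iii), the entries $\tilde q_i$ and $q_i$ are positive since $(p_t)$ is increasing, $x_T = r_{T,c} > 0$ is exactly \cref{eq:first_pos}, and $y_c = w_{c-1,T,c}/(p_c-p_{c-1}) = p_{c-1}(\bar p_{c-1}\tilde q_{c-1}-R_{T,c})/(p_c-p_{c-1}) > 0$ is exactly the last displayed inequality $\bar p_{c-1}\tilde q_{c-1} > R_{T,c}$. Assembling these equivalences gives both directions of the theorem.

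I expect the main obstacle to be the $y$-system at the change point: getting the $2\times 2$ coupling exactly right and confirming it reproduces the ad hoc-looking $w_{i,t,c}$, and in particular the leftover positivity $y_{c-1}>0$, which unwinds to $R_{T,c} > \tfrac{\bar p_c\, p_{c-1}}{p_c}\tilde q_{c-1}$ — a bound I expect to be subsumed by (or compatible with) the listed sign conditions via $\tfrac{\bar p_c p_{c-1}}{p_c} < \bar p_{c-1}$, but which needs to be pinned down cleanly. Together with the exhaustive case analysis over all $i\in[K]$ required to justify that only the two stated deviation inequalities survive the shift trick, this bookkeeping is the bulk of the work; the remainder is of exactly the same routine character as in the proofs of \cref{thm:symmetric} and \cref{thm:alternating}.
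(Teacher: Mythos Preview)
Your approach is essentially the paper's: same renormalization, same row-differencing to solve the two linear systems, same $2\times 2$ block at the change point producing the $w_{i,T,c}$ formulas, and the same shift-trick case analysis leaving exactly the two deviation inequalities (Bob at $T-1$, Alice at $c-1$).

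The one point you flagged as uncertain is the only place that needs correcting. The leftover positivity $y_{c-1}>0$, i.e.\ $R_{T,c} > \tfrac{\bar p_c\, p_{c-1}}{p_c}\tilde q_{c-1}$, is \emph{not} subsumed by the trivial ordering $\tfrac{\bar p_c p_{c-1}}{p_c} < \bar p_{c-1}$; that inequality points the wrong way (you have an upper bound $R_{T,c}<\bar p_{c-1}\tilde q_{c-1}$, not a lower bound). The correct resolution, which the paper carries out, is that the \emph{second displayed condition} of the theorem---Alice's deviation to $c-1$---is itself a lower bound on $R_{T,c}$ that is strictly tighter than the one needed for $y_{c-1}>0$. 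Concretely: using $e_c^\T A y'=1$, the inequality $e_{c-1}^\T A y'\le 1$ simplifies to $q_c \ge y_c'$, equivalently
\[
R_{T,c} \;\ge\; \bar p_{c-1}\tilde q_{c-1} - q_c\,\frac{p_c-p_{c-1}}{p_{c-1}}\,,
\]
and one checks directly that $\tfrac{p_c-p_{c-1}}{p_{c-1}}\,q_c < \tfrac{p_c-p_{c-1}}{p_c}\,\tilde q_{c-1}$, so this lower bound exceeds $\bar p_{c-1}\tilde q_{c-1} - \tfrac{p_c-p_{c-1}}{p_c}\tilde q_{c-1} = \tfrac{\bar p_c p_{c-1}}{p_c}\tilde q_{c-1}$. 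Hence $y_{c-1}>0$ is redundant given the four stated conditions, and the equivalence closes.
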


\begin{proof}
Let $\mathcal{A} = D(T,c-2) \cup \{c, c+1, \ldots, K\}$ be Alice's support set and $\mathcal{B} = D(T+1,c-1) \cup \{c,c+1, \ldots, K\}$ be 
Bob's support set.  A Nash equilibrium $(x,y)$ with $x$ supported on $\mathcal{A}$ and $y$ supported on $\mathcal{B}$ must 
satisfy the equations
\begin{align*}
e_i^T A y = e_j^T A y \mbox{ for all } i,j \in \mathcal{A} \\
e_i^T A x = e_j^T A x \mbox{ for all } i,j \in \mathcal{B} 
\end{align*}
By multiplying $x$ and $y$ by appropriate constants, we can obtain $x', y'$ (now unnormalized) that satisfy 
\begin{align*}
e_i^T A y' = 1 \mbox{ for all } i \in \mathcal{A} \\
e_i^T A x' = 1 \mbox{ for all } i \in \mathcal{B} 
\end{align*}
These sytems of linear equations can be solved similarly to the systems in the symmetric and alternating cases, 
and always have a unique solution given by
\begin{align*}
x_i' = 
\begin{cases}
r_{T,c} & \mbox{ if } i =T \\
\tilde{q}_i & \mbox{ if } i \in D(T+2,c-2) \\
q_i& \mbox{ if } i \in [c,K] \\
0 & \mbox{ otherwise}
\end{cases}, \qquad
y_j' = 
\begin{cases}
\tilde{q}_j & \mbox{ if } j \in D(T+1,c-3) \\
-\frac{w_{c,T,c}}{p_c - p_{c-1}} & \mbox{ if } j = c-1 \\
\frac{w_{c-1,T,c}}{p_c-p_{c-1}} & \mbox{ if } j = c \\
q_j & \mbox{ if } j \in [c,K] \\
0 & \mbox{ otherwise.}
\end{cases}
\end{align*}
For (the normalized versions of) $x',y'$ to actually be a Nash equlibrium, $x',y'$ also need to satisfy the following additional 
inequalities.
\begin{align}
\label{eq:bob_changes}
e_j^{\T} A x' &\le 1 \mbox{ for } 1 \le j \le K \enspace,\\
\label{eq:alice_changes}
e_i^{\T} A y' &\le 1 \mbox{ for } 1 \le i \le K \enspace, \\
\label{eq:pos_alice}
0 < x_i' & \mbox{ for } i \in \mathcal{A} \enspace, \\
\label{eq:pos_bob}
0 < y_j' & \mbox{ for } j \in \mathcal{B} \enspace,
\end{align}
We now simplify these conditions given the values of $x',y'$, beginning with \cref{eq:bob_changes}.  With 
$x'$ defined as above, $e_j^{\T} A x' = p_j/p_{j+1}$ for 
$j \ge T, j \not \in \mathcal{B}$.  This is always at most one.  Thus for \cref{eq:bob_changes} what remains to be checked is 
that $e_{T-1}^{\T} A x' \le 1$.  

Similarly for \cref{eq:alice_changes} with $y'$ as defined above we see that $e_i^T A y' = p_i/p_{i+1}$ for $T-1 \le i < c-1, i 
\not \in \mathcal{A}$.   As playing a time $i < T-1$ will result in a strictly smaller payoff than playing time $T-1$, the only constraint 
that is not immediate is $e_{c-1}^T A y' \le 1$.  

Now for the positivity conditions, \cref{eq:pos_alice} and~(\ref{eq:pos_bob}).  It is clear that $\tilde{q}_i, q_i >0$ for $T < i \le K$ 
as the probabilities $p_j$ form an increasing sequence.  
The remaining positivity conditions that must be checked are $r_{T,c}, w_{c-1,T,c}, -w_{c,T,c} > 0$.  The condition 
$w_{c-1,T,c} >0$ is equivalent to $R_{T,c} < \bar{p}_{c-1} \tilde{q}_{c-1}$, and the condition $-w_{c,T,c} > 0$ is 
equivalent to 
\begin{align*}
R_{T,c} &> \bar{p}_c p_{c-1}\tilde{q}_{c-1}/p_c \\
&= \bar{p}_{c-1}\tilde{q}_{c-1}- \frac{p_c-p_{c-1}}{p_c}\tilde{q}_{c-1} \enspace . 
\end{align*}
This gives us the following simplified set of constraints.
\begin{align}
\label{eq:bob_changes_simple}
e_{T-1}^{\T} A x' \le 1 &\iff \frac{1}{p_{T-1}} \ge r_{t,c} + \sum_{i \in D(T+2,c-2)} \tilde{q}_i + \sum_{i=c}^K q_i \enspace,\\
\label{eq:alice_changes_simples}
e_{c-1}^T A y' \le 1 &\iff \frac{1}{p_{c-1}} \ge \sum_{i \in D(T+1,c-3)} \bar{p}_i \tilde{q}_i - \bar{p}_{c-1}\frac{w_{c,T,c}}{p_c - p_{c-1}} + 
\frac{w_{c-1,T,c}}{p_c-p_{c-1}} + \sum_{i=c+1}^K q_i \enspace, \\
\label{eq:alice_pos_simple}
 r_{T,c} > 0 &\iff \frac{1}{p_K} > \sum_{i \in D(T+2,c-2)} \bar{p}_i \tilde{q}_i + \sum_{i=c}^K \bar{p}_i q_i\enspace, \\
 \label{eq:bob_pos_simple}
w_{c-1,T,c}, -w_{c,T,c}  > 0,  &\iff \bar{p}_{c-1} \tilde{q}_{c-1}> R_{T,c} 
> \bar{p}_{c-1}\tilde{q}_{c-1}- \frac{p_c-p_{c-1}}{p_c}\tilde{q}_{c-1} \enspace.
\end{align}
We can further simplify the constraint in \cref{eq:bob_pos_simple}.  As $e_c^{\T}A y' =1$, we have 
\[
\frac{1}{p_{c}} = \sum_{i \in E(T+1,c-3)} \bar{p}_i \tilde{q}_i - \bar{p}_{c-1}\frac{w_{c,T,c}}{p_c - p_{c-1}} + 
\bar{p}_c\frac{w_{c-1,T,c}}{p_c-p_{c-1}} + \sum_{i=c+1}^K q_i \enspace .
\]
This means that \cref{eq:alice_changes_simples} is equivalent to 
\[
\frac{1}{p_{c-1}} - \frac{1}{p_c} \ge p_{c} \frac{w_{c-1,T,c}}{p_c-p_{c-1}}  \iff q_c \ge x_c'  \enspace .
\]
Rewriting this inequality in terms of $R_{T,c}$ gives
\begin{equation}
\label{eq:tighter_lb}
R_{T,c} \ge \bar{p}_{c-1} \tilde{q}_{c-1} - q_c\left(\frac{p_c - p_{c-1}}{p_{c-1}} \right)
\enspace.
\end{equation}
This constraint actually implies the lower bound of \cref{eq:bob_pos_simple}.  To see this, note that 
\begin{align*}
\frac{p_c-p_{c-1}}{p_{c-1}} q_c &= \frac{p_c-p_{c-1}}{p_{c-1}p_c} \left(\frac{1}{p_{c-1}}-\frac{1}{p_c} \right) \\
&< \frac{p_c-p_{c-1}}{p_{c-1}p_c} \left(\frac{1}{p_{c-2}}-\frac{1}{p_c} \right)\\
&= \frac{p_c-p_{c-1}}{p_c} \tilde{q}_{c-1} \enspace .
\end{align*}
Thus the lower bound in \cref{eq:bob_pos_simple} is a redundant constraint and can be removed.
This finishes the proof of the theorem.
\end{proof}

Next we examine which values of $T$ can be the starting point of an alternating-coinciding equilibrium. 
\begin{definition}
Let $T^*$ be as in \cref{def:tstar}.  For $c > T^*$ define 
$\hat T_c^* = \min\{T : r_{T,c} > 0\}$.
\end{definition}
The condition $c > T^*$ is taken so that $\hat T_c^*$ is always well defined.  The next claim shows that the 
starting point of an alternating-coinciding equilibrium with change point $c$ must be $\hat{T}_c^*$.
\begin{claim}
\label{clm:tstar}
Let $p_1 < p_2 < \cdots < p_K$ define a symmetric stingy quantum race.  For any $1 < c < K$, the starting 
support of an alternating-coinciding Nash equilibrium with change point $c$ (if it exists) must be 
$\hat{T}_c^* = \min\{T: r_{T,c} > 0\}$.  
\end{claim}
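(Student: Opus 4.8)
The plan is to follow the template of the proofs of \cref{thm:unique1-5} and \cref{thm:alternating}. By \cref{thm:alt_coinc}, a $(T,c,K)$-alternating-coinciding Nash equilibrium exists exactly when the four displayed conditions hold, and only two of them mention the starting point $T$: the positivity condition $r_{T,c}>0$ (that is, \cref{eq:first_pos}) and the condition that Bob gains nothing by deviating to time $T-1$, which is $p_{T-1}\,z_{T,c}\le 1$ (that is, \cref{eq:bob_changes_thm}, whose right-hand side is precisely $z_{T,c}$); the remaining two conditions involve only the change point $c$ and constrain $T$ not at all. I will show that the first of these is equivalent to $T\ge\hat{T}_c^*$ and the second to $T\le\hat{T}_c^*$. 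Throughout, $T$ and $c$ necessarily have the same parity (by \cref{cor:interval}), and $\hat{T}_c^*$ is understood within that parity class; the case $T=1$ is immediate (there is no time $T-1$, so only the positivity condition applies), so I assume $T\ge 3$.

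First I would record a monotonicity fact about $r_{t,c}$. Splitting the $i=t$ term off the sum over $D(t,c-2)=\{t\}\cup D(t+2,c-2)$ in the definition of $r_{t,c}$ yields the recursion
\begin{equation}
\label{eq:ac_recursion}
\bar{p}_{t-2}\,r_{t-2,c}=\bar{p}_{t}\,\bigl(r_{t,c}-\tilde{q}_{t}\bigr),
\end{equation}
valid for $t$ in the parity class of $c$ with $3\le t<c$. Writing $s_t=\bar{p}_t\,r_{t,c}$, this says $s_t=s_{t-2}+\bar{p}_t\,\tilde{q}_t$, so $s_t$ is strictly increasing along this parity class (as $\tilde{q}_t>0$), and since $\bar{p}_t>0$ the sign of $r_{t,c}$ equals the sign of $s_t$. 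Hence $r_{T,c}>0$ if and only if $T\ge\hat{T}_c^*$, and combined with \cref{eq:first_pos} this gives $T\ge\hat{T}_c^*$.

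For the reverse inequality I would construct the companion identity to $p_{T-1}\,z_{T,c}\le 1$, playing the role that \cref{eq:equal} plays in \cref{thm:unique1-5}. Taking the alternating-coinciding solution whose Alice-support starts at $T-2$ instead of $T$ (constructed exactly as in the proof of \cref{thm:alt_coinc}), its coordinates are $r_{T-2,c}$ at $T-2$, then $\tilde{q}_i$ on $D(T,c-2)$ and $q_i$ on $[c,K]$; the best-response equation for Bob's smallest time $T-1$ then reads
\begin{equation*}
p_{T-1}\Bigl(\bar{p}_{T-2}\,r_{T-2,c}+\tilde{q}_T+\sum_{i\in D(T+2,c-2)}\tilde{q}_i+\sum_{i=c}^{K}q_i\Bigr)=1 .
\end{equation*}
Comparing this with $p_{T-1}\,z_{T,c}\le 1$ and cancelling the common terms reduces \cref{eq:bob_changes_thm} to $r_{T,c}\le\bar{p}_{T-2}\,r_{T-2,c}+\tilde{q}_T$. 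Substituting \cref{eq:ac_recursion} into the right-hand side turns this into $r_{T,c}\le\bar{p}_T\,r_{T,c}+p_T\,\tilde{q}_T$, a convex combination of $r_{T,c}$ and $\tilde{q}_T$, which is at least $r_{T,c}$ exactly when $r_{T,c}-\tilde{q}_T\le 0$; by \cref{eq:ac_recursion} and the positivity of $\bar{p}_{T-2}$ and $\bar{p}_T$ this is exactly $r_{T-2,c}\le 0$, i.e. $s_{T-2}\le 0$, i.e. $T-2<\hat{T}_c^*$, which within the parity class of $c$ is the same as $T\le\hat{T}_c^*$. Together with the first inequality this gives $T=\hat{T}_c^*$.

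The only genuinely nontrivial step is pinning down the companion identity, i.e. checking that the ``first row'' of the shifted alternating-coinciding linear system has exactly the form displayed above. Since $c$ is held fixed while the support shifts, this amounts to the same index bookkeeping with the sets $D(\cdot,\cdot)$ as in the proof of \cref{thm:alt_coinc}, and is structurally identical to the corresponding step in \cref{thm:unique1-5}; I do not expect anything new to arise there.
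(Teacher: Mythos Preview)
Your argument is correct and follows essentially the same route as the paper: the same recursion $\bar p_{t-2}r_{t-2,c}=\bar p_t(r_{t,c}-\tilde q_t)$, the same companion identity coming from the first row of the shifted system, and the same reduction of \cref{eq:bob_changes_thm} to $r_{T,c}\le\tilde q_T$, i.e.\ $r_{T-2,c}\le 0$. One small inaccuracy worth fixing: the remaining two conditions of \cref{thm:alt_coinc} do in fact depend on $T$ (through $R_{T,c}$ and the sum over $D(T+1,c-3)$), but this is harmless for your proof, since you are establishing a necessary condition and it suffices that the two conditions you isolate already force $T=\hat T_c^*$.
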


\begin{proof}
Note that $r_{T,c}$ is an increasing function of $T$.  This means that the condition $r_{T,c} > 0$ (\cref{eq:first_pos}) 
will only be satisfied for $\hat T_c^* \le T \le c$.  

We now see that to satisfy \cref{eq:bob_changes_thm} we must have $T \le \hat T_c^*$.
From the definition of $r_{t,c}$ we have
\[
\frac{\bar{p}_{t-2}}{\bar{p}_t} r_{t-2,c} = r_{t,c} - \tilde{q}_t \enspace .
\]
As $r_{\hat T_c^*-2,c} \le 0$ and $r_{\hat T_c^*,c} > 0$, this means that $r_{\hat T_c^*,c} \le \tilde{q}_{\hat T^*}$ and 
$r_{T,c} > \tilde{q}_{T}$ for $T > \hat T_c^*$.  

Considering when Bob plays strategy $T-1$ when the starting point of the interval is 
$T-2$, we see that 
\begin{align*}
\frac{1}{p_{T-1}} &= \bar{p}_{T-2} r_{T-2,c} + \sum_{i \in D(T,c-2)} \tilde{q}_i + \sum_{i=c}^K q_i \\
&= \bar{p}_T (r_{T,c} - \tilde{q}_T) +  \sum_{i \in D(T,c-2)} \tilde{q}_i + \sum_{i=c}^K q_i \\
&= \bar{p}_t r_{T,c} + p_T \tilde{q}_T + \sum_{i \in D(T+2,c-2)} \tilde{q}_i + \sum_{i=c}^K q_i
\end{align*}
From this equality we see that if $r_{T,c} > \tilde{q}_T$ then 
\[
\frac{1}{p_{T-1}} < r_{T,c} + \sum_{i \in O(T+2,c)} \tilde{q}_i +  \sum_{i=c}^K q_i
\]
 and so will violate \cref{eq:bob_changes_simple}.  
Thus the starting point of a alternating-coinciding Nash equilibrium with change point $c$ must satisfy $T \le \hat T_c^*$.  
Putting this together with $T \ge T_c^*$ means that the only possible starting point of a alternating-coinciding Nash equilibrium with 
change point $c$ is $\hat T_c^*$.
\end{proof}

\begin{claim}
\label{clm:sandwich}
Let $p_1 < p_2 < \cdots < p_K$ define a symmetric stingy quantum race.  Let $T^*, \tilde T^*$ be as in 
\cref{def:tstar},~\ref{def:tstar_tilde}, respectively.  For any $c > T^*$ we have 
$T^* -1 \le \hat T^*_c \le \tilde T^*+1$.  If $c$ and $K$ have the same parity then the tighter upper bound 
$\hat T^*_c \le \tilde T^*$ holds.
\end{claim}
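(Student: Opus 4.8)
The plan is to reduce everything to comparisons between partial sums of the $q_i$ and $\tilde q_i$, all controlled by the single inequality
\[
\bar p_i\tilde q_i > \bar p_i q_i + \bar p_{i+1}q_{i+1}\qquad\text{for }2\le i\le K-1,
\]
which is \cref{eq:compar} (and whose proof never uses that $i$ is odd, so it is available for every such $i$). Write $\Phi(T,c):=\sum_{i\in D(T+2,c-2)}\bar p_i\tilde q_i+\sum_{i=c}^K\bar p_i q_i$ and $\Psi(T):=\sum_{i=T+1}^K\bar p_i q_i$, so that $r_{T,c}>0$ is equivalent to $\tfrac1{p_K}>\Phi(T,c)$ (this is the simplification \cref{eq:first_pos}), and $r_T=\tfrac1{\bar p_T}\big(\tfrac1{p_K}-\Psi(T)\big)$, hence $r_T\le 0$ iff $\Psi(T)\ge\tfrac1{p_K}$; recall $r_T\le 0$ for all $T<T^*$. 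Note also that an alternating-coinciding structure with change point $c$ forces its starting index into the parity class of $c$, so $\hat T^*_c$ is the least $T<c$ of that parity with $r_{T,c}>0$ (it exists because $c>T^*$, as remarked after \cref{clm:tstar}).

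First I would prove a monotonicity fact: for a fixed $T$ and change points $c<c'$ both in the parity class of $T$, $\Phi(T,c)\le\Phi(T,c')$. Indeed $\Phi(T,c+2)-\Phi(T,c)=\bar p_c\tilde q_c-\bar p_c q_c-\bar p_{c+1}q_{c+1}>0$ by \cref{eq:compar}, and iterating gives the statement; together with the boundary value $\Phi(T,T+2)=\Psi(T+1)$ (since $D(T+2,T)=\varnothing$) this gives $\Phi(T,c)\ge\Psi(T+1)$ for every admissible $c>T$. The lower bound is then immediate: if $T\equiv c$ and $T\le T^*-2$, then $\Phi(T,c)\ge\Psi(T+1)\ge\tfrac1{p_K}$ because $r_{T+1}\le 0$ (as $T+1<T^*$), so $r_{T,c}\le 0$. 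Hence no admissible $T\le T^*-2$ can equal $\hat T^*_c$, i.e.\ $\hat T^*_c\ge T^*-1$.

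For the upper bound the key observation is that, in the variable $T$ and within one parity class, $r_{T,c}$ and the alternating quantity $\tilde r_T$ of \cref{def:tstar_tilde} obey the \emph{same} recursion $u_T=\tilde q_T+\tfrac{\bar p_{T-2}}{\bar p_T}u_{T-2}$: for $r_{T,c}$ this is the identity proved inside \cref{clm:tstar}, and for $\tilde r_T$ it is \cref{eq:rel}. Consequently $r_{T,c}-\tilde r_T$ is a positive multiple of $r_{c-2,c}-\tilde r_{c-2}$, and since $D(c-2+2,c-2)=D(c,c-2)=\varnothing$ a direct computation gives $r_{c-2,c}-\tilde r_{c-2}=\tfrac1{\bar p_{c-2}}\big(\sum_{i\in D(c,K)}\bar p_i\tilde q_i-\sum_{i=c}^K\bar p_i q_i\big)>0$, again by \cref{eq:compar}. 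When $c\not\equiv K$, every $T$ in the parity class of $c$ has $K-T$ odd, so $\tilde r_T$ is defined; then $r_{T,c}>\tilde r_T$ for all such $T<c$, and taking $T=\tilde T^*$ (which lies in that class) with $\tilde r_{\tilde T^*}>0$ yields $r_{\tilde T^*,c}>0$, hence $\hat T^*_c\le\tilde T^*$, which is $\le\tilde T^*+1$. When $c\equiv K$ the index $\tilde T^*$ has the opposite parity, so the plan here is to take $T=\tilde T^*-1$ (now $\equiv c$) and bound $r_{\tilde T^*-1,c}$ by a shift-of-change-point comparison back to $r_{\tilde T^*,c-1}$—to which the previous case applies, since $c-1\equiv\tilde T^*$—absorbing the one leftover term $\bar p_{c-2}\tilde q_{c-2}$ via \cref{eq:compar}; this gives $r_{\tilde T^*-1,c}>0$, hence $\hat T^*_c\le\tilde T^*-1\le\tilde T^*$. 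Finally one assembles the two bounds, using $\hat T^*_c\equiv c$ to phrase them as in the statement.

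I expect the main obstacle to be precisely the $c\equiv K$ subcase of the upper bound: the natural comparison object is the alternating threshold $\tilde r$, but it lives in the parity class opposite to $c$, so the comparison leaks a single $\tilde q$-term, and it is exactly this leakage that accounts for the weaker "$+1$" in the general statement versus the sharp bound when $c$ and $K$ share a parity. The remaining work is purely combinatorial: tracking which of the half-open sets $D(\cdot,\cdot)$ occurs at each step and checking that the telescoping sums of the $q_i$'s and $\tilde q_i$'s pair up as claimed.
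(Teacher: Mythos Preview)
Your lower bound and your treatment of the case $c\not\equiv K$ are correct and in fact match the paper's argument almost exactly: both reduce to \cref{eq:compar}, you just package the comparison through the shared two-step recursion $u_T=\tilde q_T+\tfrac{\bar p_{T-2}}{\bar p_T}u_{T-2}$, which is a tidy way to see that the sign of $r_{T,c}-\tilde r_T$ is constant along the parity class and equals the sign at the boundary $T=c-2$. The paper instead compares the sums $\sum_{i\in D(T+2,K)}\bar p_i\tilde q_i$ and $\sum_{i\in D(T+2,c-2)}\bar p_i\tilde q_i+\sum_{i=c}^K\bar p_i q_i$ directly; the two arguments are equivalent.

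The genuine gap is your $c\equiv K$ subcase. You propose to show $r_{\tilde T^*-1,c}>0$ by a ``shift-of-change-point comparison back to $r_{\tilde T^*,c-1}$'', but $r_{\tilde T^*-1,c}$ involves $\sum_{i\in D(\tilde T^*+1,c-2)}\bar p_i\tilde q_i$ while $r_{\tilde T^*,c-1}$ involves $\sum_{i\in D(\tilde T^*+2,c-3)}\bar p_i\tilde q_i$, and these index sets lie in \emph{opposite} parity classes. The inequality \cref{eq:compar} pairs an index $i$ with $i$ and $i{+}1$ on the $q$-side; it gives no direct control of $\bar p_{i+1}\tilde q_{i+1}$ by $\bar p_i\tilde q_i$, so there is no obvious way to absorb the parity mismatch into a single leftover term as you claim. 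Note also that your sketch is internally inconsistent: you assert this case yields $\hat T^*_c\le \tilde T^*-1$ (stronger than needed), yet in the next paragraph you say this same case ``accounts for the weaker `$+1$'\,''. The paper handles this case by aiming only for the weaker conclusion $\tilde r_T>0\Rightarrow r_{T+1,c}>0$, which gives $\hat T^*_c\le \tilde T^*+1$; you should target that weaker bound rather than $\tilde T^*-1$. (You may also want to double-check which parity case is supposed to carry the sharp bound: your own correct computation gives the sharp bound when $c\not\equiv K$, so be careful when ``phrasing as in the statement''.)
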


\begin{proof}
Fix $c > T^*$.  For $T < c$ of the same parity as $c$, we show that if $r_{T,c} > 0$ then $r_{T+1} > 0$.  This implies 
$\hat T_c^* \ge T^*-1$.  

We have $r_{T,c} > 0$ if and only if
\begin{align*}
\frac{1}{p_K} &> \sum_{i \in D(T+2,c-2)} \bar{p}_i \tilde q + \sum_{i=c}^K \\
& \ge \sum_{i=T+2}^K \bar{p}_i q_i
\end{align*}
This inequality follows from the fact that $\bar{p}_i \tilde{q}_i \ge \bar{p}_i q_i + \bar{p}_{i+1} q_{i+1}$.
Thus we have $1/p_K > \sum_{i = T+2}^K \bar{p}_i q_i$, which implies $r_{T+1} > 0$.  

Now let $T,K$, and $c$ all have the same parity and suppose that $\tilde r_T >0$.  This happens if and only if 
\begin{align*}
\frac{1}{p_K} &> \sum_{i \in D(T+2, K)} \bar{p}_i \tilde{q}_i \\
& \ge \sum_{i \in D(T+2,c-2)} \bar{p}_i \tilde{q}_i + \sum_{i =c}^K \bar{p}_i q_i
\end{align*}
This means that $r_{T,c} > 0$.  This shows that if $c,K$ have the same parity then $\hat T^* \le \tilde T^*$.  If $c,K$ have 
opposite parity, then we can repeat the same argument to see that $\tilde r_T > 0 \implies r_{T+1,c} >0$.  Thus in 
the case we have $\hat T^*_c \le \tilde T^*+1$.
\end{proof}

We can analogously show that for every time $T$ there can be at most one change point $c$ such that there is an 
$(T,c,K)$-alternating-coinciding Nash equilibrium.  For this we first need the following claim.
\begin{claim}
\label{clm:tedious}
\[
\bar{p}_{t-1}\tilde{q}_{t-1} - q_t \left(\frac{p_{t}-p_{t-1}}{p_{t-1}}\right) = \bar{p}_{t-1}q_{t-1} + \bar{p}_t q_t \enspace .
\]
\end{claim}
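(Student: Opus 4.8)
The plan is to verify this identity by a short direct computation from the definitions of $q_t$ and $\tilde q_t$. The engine of the proof is a telescoping observation: since $p_t q_t = \tfrac{1}{p_{t-1}}-\tfrac{1}{p_t}$ and $p_{t-1}\tilde q_{t-1} = \tfrac{1}{p_{t-2}}-\tfrac{1}{p_t}$, we get immediately
\[
p_{t-1}\tilde q_{t-1} = p_{t-1} q_{t-1} + p_t q_t \enspace,
\]
and hence, dividing by $p_{t-1}$ and subtracting $q_{t-1}$,
\[
\tilde q_{t-1} - q_{t-1} = \frac{p_t}{p_{t-1}}\, q_t \enspace.
\]

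First I would rewrite the left-hand side of the claim using $\bar p_{t-1} = 1-p_{t-1}$ and move everything to one side, reducing the statement to
\[
\bar p_{t-1}\bigl(\tilde q_{t-1} - q_{t-1}\bigr) - \bar p_t q_t - q_t\,\frac{p_t - p_{t-1}}{p_{t-1}} = 0 \enspace.
\]
Substituting the relation $\tilde q_{t-1} - q_{t-1} = \tfrac{p_t}{p_{t-1}} q_t$ and factoring out $q_t$ leaves the purely algebraic identity
\[
\frac{\bar p_{t-1}\,p_t}{p_{t-1}} - \bar p_t - \frac{p_t - p_{t-1}}{p_{t-1}} = 0 \enspace,
\]
which I would check by multiplying through by $p_{t-1}$: the left side becomes $\bar p_{t-1}p_t - \bar p_t p_{t-1} - p_t + p_{t-1}$, and since $\bar p_{t-1}p_t - p_t = -p_{t-1}p_t$ and $-\bar p_t p_{t-1} + p_{t-1} = p_{t-1}p_t$, this sum is zero.

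There is no real obstacle here beyond index bookkeeping; the computation is elementary. The only point that needs attention is applying the telescoping identity with the correct shift — $\tilde q_{t-1}$ involves $p_{t-2},p_{t-1},p_t$, while $q_{t-1}$ involves $p_{t-2},p_{t-1}$ and $q_t$ involves $p_{t-1},p_t$ — so that the three reciprocal differences line up and cancel exactly as claimed.
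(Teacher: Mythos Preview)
Your proof is correct and is essentially the same direct computation as the paper's: both reduce the claim to the identity $\tilde q_{t-1}-q_{t-1}=\tfrac{p_t}{p_{t-1}}q_t=\tfrac{1}{p_{t-1}}\bigl(\tfrac{1}{p_{t-1}}-\tfrac{1}{p_t}\bigr)$, which you isolate up front via the telescoping relation $p_{t-1}\tilde q_{t-1}=p_{t-1}q_{t-1}+p_t q_t$, whereas the paper arrives at it after expanding both sides separately. The remaining algebra is the same in substance.
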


\begin{proof}
First we expand the right hand side
\begin{align*}
\bar{p}_{t-1}q_{t-1} + \bar{p}_t q_t &= q_{t-1} - \frac{1}{p_{t-2}} + \frac{1}{p_{t-1}} + q_t - \frac{1}{p_{t-1}} + 
\frac{1}{p_t} \\
&= q_{t-1} +q_t + \frac{1}{p_t} - \frac{1}{p_{t-2}}   \enspace . 
\end{align*}
Next we expand the left hand side
\begin{align*}
\bar{p}_{t-1}\tilde{q}_{t-1} - q_t \frac{p_{t}-p_{t-1}}{p_{t-1}} = 
\tilde{q}_{t-1} - \frac{1}{p_{t-2}} + \frac{1}{p_t} - \frac{1}{p_{t-1}}\left(\frac{1}{p_{t-1}} - \frac{1}{p_{t}} \right) 
+q_{t} \enspace .
\end{align*}
Canceling like terms and rearranging, it suffices to show
\begin{align*}
\tilde{q}_{t-1} - q_{t-1} = \frac{1}{p_{t-1}}\left(\frac{1}{p_{t-1}} - \frac{1}{p_{t}}\right),
\end{align*}
which can be easily verified.
\end{proof}
\begin{lemma}
\label{clm:change_point}
For any time $T$ there is at most one $c$ such that there is a $(T,c,K)$-alternating-coinciding Nash equilibrium (up to 
switching the roles of Alice and Bob).
\end{lemma}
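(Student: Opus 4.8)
The plan is to show that if a $(T,c,K)$-alternating-coinciding Nash equilibrium exists for one odd $c$ with $T < c \le K$, then none exists for any larger odd $c'$; since a $(T,c,K)$-alternating-coinciding equilibrium is unique for fixed $(T,c,K)$ by \cref{thm:alt_coinc} (and the two players' roles are pinned down there), this gives the claim, the ``up to switching Alice and Bob'' caveat being automatic. The engine will be \cref{thm:alt_coinc}: existence of a $(T,c,K)$-alternating-coinciding equilibrium forces a two-sided bound on $R_{T,c}$. From the last constraint of that theorem I get the upper bound $R_{T,c} < \bar{p}_{c-1}\tilde{q}_{c-1}$, and from \cref{eq:tighter_lb} together with \cref{clm:tedious} applied with $t=c$ I get the lower bound $R_{T,c} \ge \bar{p}_{c-1}q_{c-1} + \bar{p}_c q_c$. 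I would first isolate this pair of inequalities as the only input needed about equilibria.

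Next I would prove two elementary facts about $R$, $q$, and $\tilde q$. First, straight from the definition of $R_{t,c}$, a one-step recursion $R_{T,c+2} = R_{T,c} - \bar{p}_{c-1}\tilde{q}_{c-1} + \bar{p}_{c+1}q_{c+1} + \bar{p}_{c+2}q_{c+2}$, valid whenever $c+2 \le K$; here one uses only that the parities of $T$ and $c$ make $D(T+1,c-1) = D(T+1,c-3)\cup\{c-1\}$ exact, and that peeling two terms off the tail sum $\sum_{i=c+1}^{K}\bar{p}_i q_i$ removes $\bar{p}_{c+1}q_{c+1}+\bar{p}_{c+2}q_{c+2}$. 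Second, a ``reverse'' of \cref{eq:pele}: for $2 \le t \le K-1$ one has $\bar{p}_t q_t + \bar{p}_{t+1}q_{t+1} \le \bar{p}_t\tilde{q}_t$, which is a one-line computation since $\bar{p}_t(\tilde{q}_t - q_t) = \left(\tfrac{1}{p_t}-1\right)\left(\tfrac{1}{p_t}-\tfrac{1}{p_{t+1}}\right)$ dominates $\bar{p}_{t+1}q_{t+1} = \left(\tfrac{1}{p_{t+1}}-1\right)\left(\tfrac{1}{p_t}-\tfrac{1}{p_{t+1}}\right)$ because $p_t \le p_{t+1}$. This step uses no convexity, so the lemma will hold for all symmetric stingy races, not just convex ones.

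Then I would run a two-step induction. Assuming a $(T,c,K)$-equilibrium exists, the upper bound gives $R_{T,c} - \bar{p}_{c-1}\tilde{q}_{c-1} < 0$, so the recursion yields $R_{T,c+2} < \bar{p}_{c+1}q_{c+1} + \bar{p}_{c+2}q_{c+2}$, which is exactly the negation of the lower bound required for a $(T,c+2,K)$-equilibrium; hence none exists at change point $c+2$. To propagate, I combine this with the reverse inequality $\bar{p}_{c+1}q_{c+1}+\bar{p}_{c+2}q_{c+2} \le \bar{p}_{c+1}\tilde{q}_{c+1}$ to get $R_{T,c+2} < \bar{p}_{c+1}\tilde{q}_{c+1}$, which is precisely the hypothesis needed to repeat the recursion step with $c+2$ in place of $c$. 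Iterating, $R_{T,c'} < \bar{p}_{c'-1}q_{c'-1}+\bar{p}_{c'}q_{c'}$ for every odd $c'$ with $c < c' \le K$, so no $(T,c',K)$-equilibrium exists for $c'>c$, which is the lemma.

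I expect the only delicate point to be bookkeeping the parities and index ranges, so that the recursion identity is applied only when $c+2 \le K$ and the reverse inequality is applied only for $2 \le t \le K-1$ (and checking that $c \ge 2$, $c'\ge c+2\ge 4$ keep everything in range); the algebra itself is routine. A secondary point worth double-checking is that the lower-bound half of the sandwich is genuinely a necessary condition for equilibrium, i.e. that the equivalence of \cref{eq:alice_changes_simples} with \cref{eq:tighter_lb} in the proof of \cref{thm:alt_coinc} is not predicated on the change point already being valid in some other way.
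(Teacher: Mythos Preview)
Your proposal is correct and follows essentially the same route as the paper: isolate the two-sided bound on $R_{T,c}$ from \cref{thm:alt_coinc} (upper bound from $w_{c-1,T,c}>0$, lower bound from \cref{eq:tighter_lb} rewritten via \cref{clm:tedious}), establish the recursion $R_{T,c+2}=R_{T,c}-\bar p_{c-1}\tilde q_{c-1}+\bar p_{c+1}q_{c+1}+\bar p_{c+2}q_{c+2}$, and conclude that once the upper bound holds at $c$ the lower bound fails at $c+2$. You are more explicit than the paper about the induction that propagates this failure to all larger $c'$: the auxiliary inequality $\bar p_t q_t+\bar p_{t+1}q_{t+1}\le \bar p_t\tilde q_t$ you prove (and call a ``reverse'' of \cref{eq:pele}, though it is really the inequality \cref{eq:compar} from \cref{lem:les}) is exactly what is needed to re-apply the recursion, and the paper's proof leaves this iteration implicit.
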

\begin{proof}

We show that for any $T$ there is at most one possible change point $c$ for an alternating-coinciding Nash equilibrium starting at $T$.
For this we look at \cref{eq:bob_pos_simple} and \cref{eq:tighter_lb}, which imply that the change point $c$ 
for an alternating-coinciding Nash equilibrium beginning at $T$ must satisfy
\begin{equation}
\label{eq:constraint}
\bar{p}_{c-1} \tilde{q}_{c-1}> R_{T,c} > \bar{p}_{c-1} \tilde{q}_{c-1} - q_c\left(\frac{p_c - p_{c-1}}{p_{c-1}} \right) \enspace .
\end{equation}
Using \cref{clm:tedious} we can rewrite \cref{eq:constraint} as 
\begin{equation}
\label{eq:constraint2}
\bar{p}_{c-1} \tilde{q}_{c-1}> R_{T,c} > \bar{p}_{c-1}q_{c-1} + \bar{p}_c q_c  \enspace .
\end{equation}
From the definitions, it can be verified that
\[
R_{T,t+2} - R_{T,t} = -\bar{p}_{t-1} \tilde{q}_{t-1} + \bar{p}_{t+1} q_{t+1} + \bar{p}_{t+2} q_{t+2} \enspace. 
\]
Now suppose that 
\begin{equation}
\bar{p}_{t-1} \tilde{q}_{t-1}> R_{T,t} > \bar{p}_{t-1}q_{t-1} + \bar{p}_t q_t  \enspace .
\end{equation}
Then 
\begin{align*}
R_{\hat T^*,t+2} &= R_{\hat T^*,t} -\bar{p}_{t-1} \tilde{q}_{t-1} + \bar{p}_{t+1} q_{t+1} + \bar{p}_{t+2} q_{t+2} \\
& < \bar{p}_{t+1} q_{t+1} + \bar{p}_{t+2} q_{t+2} \enspace .
\end{align*}
This means that if $R_{T,c}$ satisfies \cref{eq:constraint2}, then no $R_{T,t}$ can satisfy this constraint 
for $t \ge c+2$, as it will be smaller than corresponding lower bound in \cref{eq:constraint}.  This means that 
for each $T$, there is at most $c$ such that \cref{eq:constraint2} is satisfied.
\end{proof}

Finally, in the case of a convex quantum race we can more precisely pin down 
the exact form of an alternating-coinciding equilibrium.  If $T^*$ is the starting point of the coinciding Nash equilibrium, then 
the support of any alternating-coinciding Nash equilibrium must begin in the interval $[T^*-1,T^*+2]$.  
\begin{theorem}
Let $p_1 < p_2 < \cdots < p_K$ define a 2-player convex symmetric stingy quantum race. Let $T^*$ be the starting point of 
the symmetric Nash equilibrium.  Any alternating-coinciding equilibrium must 
begin at either $T^*-1, T^*,T^*+1$, or $T^*+2$.  In particular, there are at most four alternating-coinciding equilibria 
(up to switching the roles of Alice and Bob).
\end{theorem}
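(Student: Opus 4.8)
The plan is to chain together the three structural facts just established about the starting support. By \cref{clm:tstar}, any $(T,c,K)$-alternating-coinciding Nash equilibrium must have $T = \hat{T}_c^* := \min\{T' : r_{T',c} > 0\}$, so it is enough to show that $\hat{T}_c^* \in \{T^*-1, T^*, T^*+1, T^*+2\}$ for every change point $c$ that actually admits such an equilibrium. For such a $c$, \cref{clm:sandwich} gives $T^*-1 \le \hat{T}_c^* \le \tilde{T}^* + 1$, where $\tilde{T}^*$ is as in \cref{def:tstar_tilde}. Now \cref{thm:equal} --- and this is the only place the convexity hypothesis is used --- says that $\tilde{T}^* \in \{T^*-1, T^*, T^*+1\}$, so $\tilde{T}^* \le T^* + 1$. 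Combining these inequalities yields $T^*-1 \le \hat{T}_c^* \le T^*+2$, which is precisely the first assertion of the theorem.

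For the counting part I would argue as follows. By \cref{clm:change_point}, for each fixed starting time $T$ there is at most one change point $c$ for which a $(T,c,K)$-alternating-coinciding equilibrium exists (up to switching Alice and Bob), and \cref{thm:alt_coinc} shows that the equilibrium is uniquely determined once $T$ and $c$ are fixed. Since the starting time lies in the four-element set $\{T^*-1, T^*, T^*+1, T^*+2\}$, there are at most four alternating-coinciding equilibria, up to switching the roles of the two players.

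The one subtlety to handle carefully is that the upper-bound half of \cref{clm:sandwich} is stated only for change points $c > T^*$. This is not a real gap: whenever an alternating-coinciding equilibrium with change point $c$ exists, $\hat{T}_c^*$ is defined, and the monotonicity step ``$r_{T,c} > 0 \Rightarrow r_{T+1} > 0$'' inside the proof of \cref{clm:sandwich} (which places no restriction on $c$) already forces $\hat{T}_c^* \ge T^*-1$; since $c > \hat{T}_c^*$ this forces $c \ge T^*$. If $c = T^*$ then $\hat{T}_c^* = T^*-1$ lies in the claimed range, and if $c > T^*$ then \cref{clm:sandwich} applies verbatim. I expect this edge-case check, together with keeping the parity conventions of \cref{thm:alt_coinc} straight (they may further restrict which of the four candidate starting times can actually occur, but the bound ``at most four'' holds regardless), to be the only genuine work; the logical core is just the composition \cref{clm:tstar} $\to$ \cref{clm:sandwich} $\to$ \cref{thm:equal}.
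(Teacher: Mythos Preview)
Your proposal is correct and follows essentially the same route as the paper: the chain \cref{clm:tstar} $\to$ \cref{clm:sandwich} $\to$ \cref{thm:equal} for the starting point, and then \cref{clm:change_point} for the count of at most four. The only difference is that you are more careful than the paper about the hypothesis $c > T^*$ in \cref{clm:sandwich}; the paper simply asserts the bound ``for any possible value of $c$'' without isolating that edge case.
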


\begin{proof}
We first we show the statement about the starting point of the interval.
Let $\hat T^*_c = \min \{T: r_{T,c} > 0\}$.  \cref{clm:tstar} established that the starting point of a mixed Nash equilibrium with 
change point $c$ (if it exists) must be at $\hat T^*_c$.  By \cref{clm:sandwich}, we have that $T^*-1 \le \hat T^*_c \le \tilde T^*+1$ 
for any possible value of $c$.  This part of the theorem now follows as \cref{thm:equal} shows that for concave probabilities 
$\tilde T^* \in \{T^*-1,T^*, T^*+1\}$.  

The ``in particular" line of the theorem now follows by \cref{clm:change_point}.  For each $T \in [T^*-1, T^*+2]$ there can be at most 
one $c$ such that there is a $(T,c,K)$-alternating-coinciding Nash equilibrium.  Thus there can be at 
most four alternating-coinciding Nash equilibria (up to switching the roles of Alice and Bob).
\end{proof}

\end{document}